\def\EE{\mathbb{E}}
\def\PP{\mathbb{P}}
\def\NN{\mathbb{N}}
\def\RR{\mathbb{R}}
\def\ind{{\rm 1\hspace{-0.90ex}1}}
\def\Var{\mathrm{Var}}
\def\Cov{\mathrm{Cov}}
\def\cGn{\cG^{(n)}(\bd_n^+,\bd_n^-)}
\def\bd{{\mathbf d}}
\def\bX{{\mathbf X}}
\def\bY{{\mathbf Y}}
\def\bZ{{\mathbf Z}}
\def\bM{{\mathbf M}}
\def\bS{\mathbf{S}}
\def\bD{\mathbf{D}}
\def\bH{\mathbf{H}}
\def\bI{\mathbf{I}}
\def\bHp{\mathbf{H^+}}
\def\bHn{\mathbf{H^-}}
\def\bIp{\mathbf{I^+}}
\def\bIn{\mathbf{I^-}}
\def\hf{\widehat{f}}
\def\beps{\boldsymbol\epsilon}
\def\balpha{\boldsymbol\alpha}
\newcommand{\Bin}{\mathsf{Bin}}
\def\tod{\stackrel{d}{\longrightarrow}}
\def\top{\stackrel{p}{\longrightarrow}}
\newcommand{\cV}{\mathcal{V}}
\newcommand{\cD}{\mathcal{D}}
\newcommand{\cA}{\mathcal{A}}
\newcommand{\cR}{\mathcal{R}}
\newcommand{\cX}{{\mathcal{X}}}
\newcommand{\m}{\mathcal{M}}
\newcommand{\cN}{{\mathcal{N}}}
\newcommand{\cZ}{{\mathcal{Z}}}
\newcommand{\cH}{{\mathcal{H}}}
\newcommand{\cG}{{\mathcal{G}}}
\newcommand{\interval}{\left[0,\infty\right)}
\newcommand{\pip}{\pi_x(\theta)}
\newcommand{\zst}{z^{\star}}
\newcommand{\tst}{\tau_n^{\star}}
\newcommand{\hp}{\widehat{z}}
\newcommand{\hz}{z^{\star}}
\newcommand{\cE}{\mathcal{E}}
\newcommand{\bigsigma}{\makebox{\large\ensuremath{\sigma}}}
\def\tod{\stackrel{d}{\longrightarrow}}
\def\top{\stackrel{p}{\longrightarrow}}
\newtheorem*{pr}{Proof}
\newtheorem{theorem}{Theorem}[section]
\newtheorem{remark}[theorem]{Remark}
\newtheorem{assumption}{Assumption}
\newtheorem{proposition}[theorem]{Proposition}
\newtheorem{lemma}[theorem]{Lemma}
\newtheorem{intassumption}{Assumption}
\numberwithin{intassumption}{assumption}
\author{Hamed Amini\thanks{Georgia State University, Department of Risk Management and Insurance, Atlanta, GA 30303, USA, email: {\tt hamini@gsu.edu}.} \and Zhongyuan Cao \thanks{INRIA Paris,  2 rue Simone Iff, CS 42112, 75589 Paris Cedex 12, France, and Universit\'e Paris-Dauphine, email: {\tt zhongyuan.cao@inria.fr}.}  \and Agn\`es  Sulem \thanks{INRIA Paris,  2 rue Simone Iff, CS 42112, 75589 Paris Cedex 12, France, and Universit\'e Paris-Est, email: {\tt agnes.sulem@inria.fr}.}}
\date{March 2021}
\begin{document}


\title{Limit Theorems for Default Contagion and Systemic Risk}

\maketitle

\abstract{We consider a general tractable model for default contagion and systemic risk in a heterogeneous financial network, subject to an exogenous macroeconomic shock. We show that, under some regularity assumptions, the default cascade model could be transferred to a death process problem represented by balls-and-bins model. We also reduce the dimension of the problem by classifying banks according to different types, in an appropriate type space. These types may be calibrated to real-world data by using machine learning techniques. We then state various limit theorems regarding the final size of default cascade over different types. In particular, under suitable assumptions on the degree and threshold distributions, we show that the final size of default cascade has asymptotically Gaussian fluctuations. We next state limit theorems for different system-wide wealth aggregation functions and show how the systemic risk measure, in a given stress test scenario, could be related to the structure and heterogeneity of financial networks. We finally show how these results could be used by a social planner  to optimally target interventions during a financial crisis, with a budget constraint and under partial information of the financial network.}

\bigskip

\noindent {\bf Keywords:}  Systemic Risk, Default Contagion, Financial Networks, Limit Theorems.

\newpage
\tableofcontents

\newpage
\section{Introduction}

The financial crisis 2007-2009 has illustrated the significance of network structure on the amplification of initial shocks in the banking system to the level of the global financial system, leading to an economic recession. 

This paper studies structural and dynamic models for loss propagation in the network of liabilities. This is in contrast to the well-known systemic risk indicators such as CoVAR~\cite{tobias2016covar} or SES~\cite{acharya2017measuring}, which are based on measuring losses in terms of market equity.

Empirical studies on network topology of banking systems show that we may have very different structures; from centralized networks as in~\cite{muller2006interbank} to core-periphery structures~\cite{craig2014interbank, fricke2015core, li2019dealer} and scale-free structures as in~\cite{boss04, cont2010network}. The main focus of this paper is to use limit theorems to establish a link between the (final) size of default cascade to the structure and heterogeneity of financial networks. The paper also studies limit theorems to quantify the system-wide wealth and systemic risk in the financial system. We also show how these results could be used by a social planner  to optimally target interventions during a financial crisis, under partial information of the financial network and under some budget constraint.

Dealing with missing data is another important aspect of this  paper. While recent efforts have made available a tremendous amount of financial data, we are far from a global financial data collection firm-by-firm and there are systemic institutions which are not required to report their positions. As pointed out in~\cite{anand2015filling, upper2004estimating, iori2015networked, gandy2017bayesian, roukny2018interconnectedness}, in general, there is only partial information available on the financial network, e.g., the total size of the assets and liabilities for each institution. Our probabilistic approaches allow us to deal with the incomplete observation of system connections. We also reduce the dimension of the problem by considering a classification of financial institutions according to different types (characteristics), in an appropriate type space $\cX$. These characteristics may be calibrated to real-world data by using machine learning techniques for classification.


There is an emerging literature on systemic risk and financial networks, see e.g.~\cite{jackson2020systemic, chinazzi2015systemic} for two recent surveys and references there. An extensive research in this area focuses on equilibrium approach, to derive recovery rates from an elegant fixed point equation~\cite{eisenberg2001systemic, rogers2013failure, elliott2014financial, glasserman2015likely}. This relies on the assumption that all debts are instantaneously cleared, unlikely to hold in reality. Even in a given shock scenario, recovery rates are uncertain. For example, recovery rates after the failure of Lehman were around $8\%$ (\cite{morris2009illiquidity}). In this paper we model recovery rates as given. This could be easily extended to random recovery rates satisfying some cash-flow consistency conditions, see e.g.~\cite{amini2016inhomogeneous}.

Our work is related to the literature on network structure and contagion, see e.g.,~\cite{morris2000contagion, akbarpour2018diffusion, watts2002simple, lelarge12b, kleinberg2007cascading, young2009innovation} in the context of social networks and \cite{acemoglu2015systemic, detering2019managing, nier2007network, allen2000financial, erol2018network, cabrales2017risk, leitner2005financial, gai2010contagion, amini2019dynamic} for default contagion in financial networks. In particular, more closely related to our work, \cite{amcomi-res} study default contagion on configuration model and derive a criterion for the resilience of a financial network to insolvency contagion, based on connectivity and the structure of contagious links (i.e., those exposures of a bank larger than its capital).

The current paper significantly pushes the technical knowledge in the literature on financial networks by providing central limit theorems for default contagion and systemic risk in a heterogeneous financial network. We propose multiple extensions of~\cite{amcomi-res}:
\begin{itemize}
\item We generalize the default contagion model and allow for more network heterogeneity by considering the type-dependent threshold model. These types may be calibrated to real-world data by using machine learning techniques for classification.
 
\item We transfer the default cascade model to a death process problem represented by balls-and-bins model. This allows us to provide a simpler proof and state the limit theorems for the dynamic contagion model~\footnote{The balls-and-bins model has been previously used in the economic literature; see e.g.~\cite{armenter2014balls} for a balls-and-bins  model of international trade.}.
\item We provide central limit theorems and show that the final size of default cascade has asymptotically Gaussian fluctuations. We state various theorems regarding the joint asymptotic normality between different contagion parameters, including the number of solvent banks, defaulted banks, healthy links (those initiated by solvent banks) and infected links (those initiated by defaulted banks) at any time $t$. 
\item Our next set of results concern the limit theorems for  system-wide wealth aggregation functions, which could be used for measuring and quantifying systemic risk. This also provide an indicator for the health of financial system in different stress scenarios. 
\item Finally, we consider a social planner  who seeks to optimally target interventions during a financial crisis, under partial information of the financial network and with a budget constraint. We show how limit theorems allows us to simplify the optimization problem. The complete information setup has been recently studied in \cite{galeotti2020targeting, jackson2020credit}.
\end{itemize}

Aside from the application to default contagion and systemic risk in financial networks, our results contribute to the literature on diffusion processes on random graphs. Related problems are the $k$-core and bootstrap percolation. The $k$-core of any finite graph can be found by removing nodes with degree less than $k$, in any order, until no such nodes exist. The asymptotic normality of $k$-core has been studied in~\cite{janson2008asymptotic}. We also study the default contagion in a similar way as in~\cite{janson2008asymptotic}, by transferring the process to a death process problem represented by balls-and-bins model. We generalize their results by allowing different types and threshold levels to each of nodes. The bootstrap percolation is a diffusion process that has been studied on a variety of graphs, see e.g., \cite{ar:JLTV10, amini10, ar:AmFount2012}. In bootstrap percolation process, for a fixed threshold $\theta\geq 2$, there is an initially subset of active nodes and in each round, each inactive node that has at least $\theta$ active neighbors becomes active and remains so forever. The asymptotic normality of bootstrap percolation has been recently studied in~\cite{AmBaCh21}. Our results generalize those of previous studies on bootstrap percolation in random graphs to the case of heterogeneous random directed networks with type-dependent random thresholds.

We end this introduction by the following remark. In the real world application, using limit theorems requires  some caution. For example, in order for the asymptotic analysis to be relevant, the financial network should be sufficiently large. This could be true for example at the level of a large economic zone. Moreover, financial networks may have small cycles. Most existing literature on random networks future locally tree-like property. However, recent literature shows that the basic configuration model can be extended to incorporate clustering; see e.g., \cite{van2015hierarchical, coupechoux2014clustering}. Moreover, following the recent literature on portfolio compression in financial networks (see e.g.,~\cite{amini2020optimal, veraart2020does, d2019compressing}), the study of default contagion and systemic risk in sparse financial networks regime becomes significantly important, as portfolio compression removes small cycles. In light of its tractability and interpretability, as well as its potential to be enriched with clustering, in this paper we use the configuration model as our base model. Note that the closed form interpretable limit theorems that we provide could also serve as a mandate for regulators to collect data on those specific network characteristics  and assess systemic risk via more intensive computational methods.  


\paragraph{Outline.} The paper is organized as follows. Section~\ref{sec:model} introduces a model for the network of financial counterparties and describe a mechanism for default cascade in such a network, after an exogenous macroeconomic shock. We also describe how the default contagion model could be transferred to a death process problem represented by balls-and-bins model. Section~\ref{sec:main} gives our main results on limit theorems for the final size of default cascade. In particular, under some regularity assumptions, we show that different default contagion parameters have asymptotically Gaussian fluctuations. Section~\ref{sec:sys} states limit theorems for different financial system aggregation functions, which are used for measuring and quantifying systemic risk. Section~\ref{sec:intervene} shows how these limit theorems could be used by a social planner to optimally target interventions during a financial crisis, with a budget constraint and under partial information of the financial network. All proofs are given in Appendix~\ref{sec:proofs}. Further extensions  are discussed in Appendix~\ref{app:Ext}.

\paragraph{Notation.} 
Let $\{ X_n \}_{n \in \mathbb{N}}$ be a sequence of real-valued random variables on a probability space
$ (\Omega, \mathcal{F}, \mathbb{P})$. If $c \in \mathbb{R}$ is a constant, we write $X_n \stackrel{p}{\longrightarrow} c$ to denote that $X_n$ converges in probability to $c$. That is, for any $\epsilon >0$, we have $\mathbb{P} (|X_n - c|>\epsilon) \rightarrow 0$ as $n \rightarrow \infty$. We write $X_n \stackrel{d}{\longrightarrow} X$ to denote that $X_n$ converges in distribution to $X$.
Let $\{ a_n \}_{n \in \mathbb{N}}$ be a sequence of real numbers that tends to infinity as $n \rightarrow \infty$.
We write $X_n = o_p (a_n)$, if $|X_n|/a_n \top 0$.
If $\mathcal{E}_n$ is a measurable subset of $\Omega$, for any $n \in \mathbb{N}$, we say that the sequence
$\{ \mathcal{E}_n \}_{n \in \mathbb{N}}$ occurs with high probability (w.h.p.) or almost surely (a.s.) if $\mathbb{P} (\mathcal{E}_n) = 1-o(1)$, as
$n\rightarrow \infty$.
Also, we denote by 
$\Bin (k,p)$ a binomial  distribution  corresponding to the number of
successes of a sequence of $k$ independent Bernoulli trials each having probability of success  $p$. The notation $\ind{\{\cE\}}$ is used for the indicator of an event $\cE$; this is 1 if $\cE$ holds and 0 otherwise. We deonte by $\cD[0,\infty)$ the standard space of right-continuous functions with left limits on $[0,\infty)$ equipped with the Skorohod topology (see e.g.~\cite{jacod2013limit, kallenberg1997foundations}). 
We will suppress the dependence of parameters on the size of the network $n$, if it is clear from the context.

\section{Model}\label{sec:model}
\subsection{Financial Network}
Consider an economy $\cE_n$ consisting of $n$ interlinked financial institutions (banks) denoted by $[n]:= \{1,2,\dots,n\}$ that intermediate credit among end-users. Banks hold claims on each other. Interbank liabilities are represented by a matrix of nominal liabilities $(\ell_{ij})$. 
For two financial institutions $i,j \in [n]$,  $\ell_{ij}\geq 0$ denotes the cash-amount that bank $i$ owes bank $j$. This also represents the maximum loss related to direct claims, incurred by bank $j$ upon the default of bank $i$. The total nominal liabilities of bank $i$ sum up to $\ell_i=\sum_{j\in[n]} \ell_{ij}$, while the total value of interbank assets sum up to $a_i = \sum_{j \in [n]} \ell_{ji}$. The total value of claims held by end-users on bank $i$ (deposits) is given by $d_i$. The total value of claims held by bank $i$ on end-users (external assets) is denoted by $e_i$.

In a stress testing framework, we apply a (fractional) shock $\epsilon_i$ to the external assets of bank $i$. Table~\ref{tab:aftershock} summarizes a stylized balance sheet of bank $i$ after the shock $\epsilon_i$. The capital of bank $i$ after the shock denoted by $c_i=c_i(\epsilon_i)$ satisfies 
\begin{equation*}
c_i=(1-\epsilon_i) e_i+a_i-\ell_i-d_i,
\end{equation*}
which represents the capacity of bank $i$ to absorb losses while remaining solvent.
\begin{table}[h]
  \centering
   \begin{tabular}{|c|c|}
    \hline
       External assets   & Deposits\\
  $e_{i}$   &  $d_i$ \\
   \cline{2-2}
  \cellcolor[gray]{.8} $\epsilon_ie_{i}$ -  loss on assets &  Interbank liabilities \\
    \cline{1-1} & $ \ell_i = \sum_{j \in [n]} \ell_{ij}$\\
      Interbank assets   &  \\
   \cline{2-2}
     $a_i = \sum_{j \in [n]} \ell_{ji}$ &   Capital\\
    
      & $c_{i}$ \\
      &\cellcolor[gray]{.8} $\epsilon_ie_{i}$ - loss on capital\\
         \hline
    Assets & Liabilities\\
    \hline
  \end{tabular}
\caption{Stylized balance sheet of bank $i$ after shock.}
\label{tab:aftershock}
\end{table}
  
A financial institution $i\in[n]$ is said to be {\it fundamentally insolvent} if its capital after the shock is negative, i.e. $c_i<0$. For a given shock scenario $\beps=(\epsilon_1, \dots, \epsilon_n)\in [0,1]^{n}$, we let the set of fundamental defaults
\begin{equation}
\cD_0(\beps) = \{i\in [n]: c_i(\epsilon_i)<0\}.
\end{equation}
  
In the next section, we define the default cascade triggered by fundamentally insolvent institutions $\cD_0(\beps)$.

\subsection{Default Cascade}
In the given shock scenario $\beps$, following the  fundamentally insolvent institutions $\cD_0(\beps)$, there will be a default contagion process. Let us denote by $R_{ij}=R_{ij}(\beps)$ the recovery rate of the liability of $i$ to $j$. The matrix of recovery rates is denoted by $\cR=(R_{ij})$.  Since any bank $i$ cannot pay more than its external assets $(1-\epsilon_i)e_i$ plus what it recovered from its debtors, the recovery rates of $i$ should satisfy the following cash-flow consistency constraints 
\begin{equation*}
(1-\epsilon_i)e_i + \sum_{j=1}^{n} R_{ji}\ell_{ji} \geq \sum_{j=1}^{n} R_{ij} \ell_{ij} +d_i.
\end{equation*}

Given the shock scenario $\beps$ and the matrix of recovery rates $\cR$, following the set of fundamental default $\cD_0$, there is a default cascade that reaches the set $\cD^\star$ in equilibrium. This represents the set of financial institutions whose capital is insufficient to absorb losses and should satisfy the following fixed point equation:

\begin{equation*}
\cD^\star=\cD^\star(\beps,\cR) = \Bigl\{i\in [n]: c_i < \sum_{j\in \cD^\star}(1-R_{ji}) \ell_{ji} \Bigr\}.
\end{equation*}

As shown in~\cite{amini2016inhomogeneous}, the above fixed point default cascade set has in general multiple solutions. The smallest fixed point set which corresponds to smallest number of defaults can be obtained by starting from $\cD_0$ and setting at step $k$:
\begin{equation}
\cD_k = \cD_{k}(\beps,\cR)=\bigl\{i\in [n]: c_i < \sum_{j\in \cD_{k-1}} (1-R_{ji}) \ell_{ji} \bigr\}.
\end{equation} 

The cascade ends at the first time $k$ such that $\cD_k=\cD_{k-1}$. Hence in a financial network of size $n$, the cascade will end after at most $n-1$ steps and $\cD_{n-1}=\cD_{n-1}(\beps,\cR)$ represents the final set of insolvent institutions starting from the initial set of defaults $\cD_0$.

\subsection{Node Classification and Configuration Model}
Under some regularity assumptions detailed below, as also shown in~\cite{amcomi-res} for a similar setup, one can show that the information regarding assets, liabilities, capital after exogenous shocks and recovery rates (distributions) could all be encoded in a single probability threshold function. Namely, for a given shock scenario $\beps$ and the matrix of recovery rates $\cR$\footnote{Our results can easily be extended to a framework with independent random recovery rates; see e.g.~\cite{amini2016inhomogeneous}.}, we introduce the (random) threshold $\Theta_i=\Theta_i^{(n)}$ for any institution $i\in[n]$ which measures how many defaults $i$ can tolerate before becoming insolvent, if its counterparties default in a uniformly at random order, i.e., when $i$'s debtors default order environment is chosen uniformly at random among all possible permutations. 

In the following, in order to reduce the dimensionality of the problem, we consider a classification of financial institutions into a countable (finite or infinite) possible set of characteristics $\cX$. All (observable) characteristics for financial institution $i$ is encoded in $x_i=(d_{i}^{+},d_{i}^-, t_i, ...) \in \cX,$ where  $d_{i}^+$ denotes the in-degree (number of institutions $i$ is exposed to), $d_{i}^-$ denotes the out-degree (number of institutions exposed to $i$)  and $t_i$ denotes any other institution's type specific (e.g., credit rating, seniority class, etc.).

As we are interested in limit theorems, we consider a sequence of economies $\{\cE_n\}_{n\in \NN}$, indexed by the number of institutions. In particular, in the economy $\cE_n$, the characteristic of any institution $i\in[n]$ will be denoted by $$x_{i}^{(n)} = (d_{i}^{+(n)},d_{i}^{-(n)}, t_{i}^{(n)}, ...) \in \cX.$$

Note that, without loss of generality, the institutions in the same class $x\in\cX$ assumed to have the same number of creditors (denoted by $d_x^-$) and the same number of debtors (denoted by $d_x^+$). Further, for tractability, we make the following assumption on the probability threshold functions.


\begin{assumption}\label{cond-threshold}
We assume that there exists a classification of the financial institutions into a countable set of possible characteristics $\cX$ such that, for each $n\in \NN$, the institutions in the same characteristic class have the same threshold distribution function (denoted by $q_x^{(n)}$ for institutions in class $x\in \cX$). Namely, for economy $\cE_n, i\in [n]$ and for all $\theta \in \NN$:  
 \begin{align*}
 \PP(\Theta_i^{(n)}=\theta)=q_{x_i^{(n)}}^{(n)}(\theta).
 \end{align*}
\end{assumption}

In particular, in the network of size $n$, $q_x^{(n)}(0)$ represents the proportion of initially insolvent institutions with type $x\in \cX$. 

\medskip

Let $\mu_x^{(n)}$ denotes the fraction of institutions with characteristic $x\in\cX$ in the economy $\cE_n$. In order to study the asymptotics, it is natural to assume the following.

\begin{assumption}\label{cond-limit}
We assume that for some probability distribution functions $\mu$ and $q$ over the set of characteristics $\cX$ and independent of $n$, we have $\mu_x^{(n)} \to \mu_x$ and $q_x^{(n)} (\theta) \to q_x(\theta)$ as $n\to \infty$, for all $x\in \cX$ and $\theta=0,1, \dots, d_x^+$.
\end{assumption}

In the following, we will suppress the dependence of parameters on the size of the network $n$, if it is clear from the context. 

\medskip

Given the degree sequences $\bd^+_n=(d_1^+, \dots, d_n^+)$ and $\bd^-_n=(d_1^-, \dots, d_n^-)$ such that $\sum_{i\in [n]} d_i^+ = \sum_{i\in[n]} d_i^-$,  we associate to each institution $i$ two sets: $\cH^+_i$ the set of incoming half-edges and $\cH^-_i$ the set of outgoing half-edges, with $|\cH_i^+|=d^+_i$ and $|\cH^-_i|=d^-_i$. Let $\mathbb{H}^+=\bigcup_{i=1}^{(n)} \cH^+_i$ and $\mathbb{H}^-=\bigcup_{i=1}^{(n)} \cH^-_i$. A {\it configuration} is a matching of $\mathbb{H}^+$ with $\mathbb{H}^-$. When an out-going half-edge of instituion $i$ is matched with an in-coming half-edge of institution $j$, a directed edge from $i$ to $j$ appears in the graph. The configuration model is the random directed multigraph which is uniformly distributed across all configurations. The random graph constructed by configuration model will be denoted by $\cGn$. 
It is then easy to show that conditional on the multigraph being a simple graph, we obtain a uniformly distributed random graph with these given degree sequences denoted by $\cG_*^{(n)}(\bd_n^+,\bd_n^-)$. In particular, any property that holds with high
probability on the configuration model also holds with high probability conditional on this random graph being simple (for the random graph $\cG_*^{(n)}(\bd_n^+,\bd_n^-)$) provided
$\liminf_{n\rightarrow\infty}\mathbb{P}(\cGn \  \mbox{simple})>0$, see e.g.~\cite{hofstad16}. 

\subsection{Death Process and Final Solvent Institutions}\label{sec:deathFin}
We consider the default contagion process in the random financial network $\cGn$, initiated by the set of fundamentally insolvent institutions $\cD_0$.

Recall that $ \Theta_i$ denotes the random threshold of institution $i\in[n]$  which measures how many defaults $i$ can tolerate before becoming insolvent in the uniformly chosen $i$'s counterparties default order environment. 
By Assumption~\ref{cond-threshold} and standard coupling arguments, as also proved in~\cite{amcomi-res}, one can assume that these thresholds are assigned initially to any institution $i\in [n]$ according to the distribution  $q_{x_i}^{(n)}(.)$

\paragraph{Finding the final solvent institutions.} We consider the above default contagion progress in the following way. At time 0 in the (random) graph $\cGn$, all institutions with threshold 0 become defaulted. We remove all the initially defaulted institutions $\cD_0$ from the network. Next, in order to find $\cD_1$, we identify the partners of $\cD_0$. Note that the out-degree and in-degree of each institution in the network induced by $[n]\setminus \cD_0$ is less than or equal to those in the previous network. At step $k\in \NN$, the default set $\cD_{k}$ can be identified by
\begin{equation}
\cD_{k}=\Bigl\{i\in [n]: \sum_{j: j\to i} \ind{\{j\in \cD_{k-1}\}} \geq \Theta_i\Bigr\},
\end{equation}
where $\ind{\{\cE\}}$  denotes the indicator of an event $\cE$, i.e., this is 1 if $\cE$ holds and 0 otherwise. We denote the in-degree and out-degree of each institution $i$ after $k$ steps evolution by $d_i^+(k)$ and $d_i^-(k)$ respectively. Note that initially $d_i^+(0)=d_i^+$ and $d_i^-(0)=d_i^-$. At step $k$, we remove all institutions $i\in[n]$ with $d_i^+(k)<d_i^+-\Theta_i$. At the end of the above procedure, all the removed institutions are defaulted and the remaining institutions are solvent.

\paragraph{Transferring to a death process problem represented by balls-and-bins.}
It is not hard to see that the calendar time does not take any important role in the above contagion process. We can define the time interval as we want. So instead of removing institutions, we can also remove the links and define a proper time interval between two successive removals. Namely, at each step, we only look at one removal (interaction) between two institutions, yielding at least one default. 

In the following, we simultaneously run the default contagion process and construct the configuration model. We call all out half-edges and in half-edges that belong to a defaulted institution the {\it infected} half-edges, all out half-edges and in half-edges that belong to a solvent institution the {\it healthy} half-edges. We consider all the institutions as bins and all the (in and out) half-edges as (in and out) balls. Consequently, the bins are called defaulted ($\bD$ type) or solvent ($\bS$ type) according to their states as institutions. Similarly the balls are called infected ($\bI$ type) or healthy ($\bH$ type) when they are infected or healthy as half-edges. Hence, all balls are of four different types. For convenience, we denote them as $\bHp$ (healthy in), $\bHn$ (healthy out), $\bIp$ (infected in) and $\bIn$ (infected out) balls, respectively.

We start from the set of fundamental defaults $\cD_0$, which gives the set of initially defaulted bins and infected balls. Consequently, at each step, we first remove a uniformly chosen ball of type $\bIn$ and then a uniformly chosen ball from $\bHp \cup \bIp$. In this process $\bS$ bins may change to $\bD$ bins and, consequently, $\bH$ balls may change to $\bI$ balls. We continue the above process until there is no more $\bIn$ balls.

We now change the description a little by introducing colors for the $\bIn$ balls and life for all in balls $\bHp \cup \bIp$. We assume that all $\bIn$ balls are initially white and all in balls $\bHp\cup \bIp$ are initially alive. We begin by recoloring one random  $\bIn$ ball red. Subsequently, in each removal step, we first kill a random in ball from $\bHp\cup \bIp$ and then recolor a random white ball from $\bIn$ red. This is repeated until no more white $\bIn$ balls remain.

 We next run the above death process in continuous time. We assume that each in ball $\bHp\cup \bIp$ has an exponentially distributed random lifetime with mean one, independent of all other balls. Namely, if there are $\ell$ alive in balls remaining, then we wait an exponential time with mean $1/\ell$ until the next pair of interactions. We stop when we should recolor a white $\bIn$ ball but there is no such ball. 
 
 Let us denote by $W_n(t)$ the number of white $\bIn$ balls at time $t$. Hence, the above death process ends at the stopping time  $\tau^\star_n$ which is the first time when we need to recolor a white ball but there are no white balls left. However, we pretend that we recolor a (nonexistent) white ball at time $\tau^\star_n$ and write $W_n(\tau^\star_n)=-1$. 
 
 We denote by $I_n^+(t)$ and $H_n^+(t)$ the number of alive (in) balls in $\bIp$ and $\bHp$ at time $t$, respectively. For $x\in \cX, \theta\in \NN, \ell=0,\dots, \theta-1$, we let $S^{(n)}_{x,\theta,\ell}(t)$ denotes the number of solvent institutions (bins) with type $x$, initial threshold $\theta$ and $\ell$ defaulted neighbors at time $t$. Further, let $S
 _n(t)$ and $D_n(t)$ be the numbers of $\bS$ bins and $\bD$ bins at time $t$. Hence, $S_n(\tau^\star_n)$ denotes the final number of solvent institutions. Further, $D_n(\tau^\star_n)=n-S_n(\tau^\star_n)=|\cD_{n-1}|$ will be the final number of defaulted institutions.


\section{Limit Theorems}\label{sec:main}

In this section we consider the above dynamic default contagion model (which is now transferred to a death process problem represented by balls-and-bins) and state our main results regarding the limit theorems in the random financial network $\cGn$.

\medskip

We first define some functions that will be used later. Let for $z\in[0,1]$:
\begin{align}
b(d,z,\ell):=& \PP(\Bin(d,z)= \ell) = \binom{d}{\ell} z^\ell (1-z)^{d-\ell}, \\
\beta(d,z,\ell):=& \PP(\Bin(d,z)\geq \ell) = \sum_{r=\ell}^d \binom{d}{r} z^r (1-z)^{d-r},
\end{align}
and $\Bin(d,z)$ denotes the binomial distribution with parameters $d$ and $z$.

\subsection{Asymptotic Magnitude of Default Contagion}\label{sec:LLN}

In this section we consider the random graph $\cGn$ and assume that the average degrees converges to a finite limit.
\stepcounter{assumption}
\begin{intassumption}
\label{cond-average}
We assume that, as $n\to \infty$, the average degrees converges and is finite:
$$\lambda^{(n)}:=\sum_{x\in \cX} d_x^+ \mu_x^{(n)} = \sum_{x\in \cX} d_x^- \mu_x^{(n)} \longrightarrow \lambda:= \sum_{x\in \cX} d_x^+ \mu_x \in (0,\infty).$$
\end{intassumption}

\medskip

For $z \in[0,1]$, we define the functions:
\begin{align*}
f_S(z):=& \sum_{x\in \cX} \mu_x \sum_{\theta=1}^{d_x^+}q_x(\theta) \beta\bigl(d_x^+, z, d_x^+-\theta+1\bigr), \ \ f_D(z)=1-f_S(z),\\
f_{H^+}(z):=& \sum_{x\in \cX} \mu_x \sum_{\theta=1}^{d_x^+}q_x(\theta) \sum_{\ell=d_x^+-\theta+1}^{d_x^+} \ell b\bigl(d_x^+,z,\ell\bigr) , \ \ f_{I^+}(z) = \lambda z-f_{H^+}(z),\\
f_{W}(z):=& \lambda z - \sum_{x\in \cX} \mu_x d_x^-\sum_{\theta=1}^{d_x^+}q_x(\theta) \beta\bigl(d_x^+,z, d_x^+-\theta+1\bigr).
\end{align*}

The following theorem  states the law of large numbers for the number of solvent banks, defaulted banks, healthy links, infected links and the total number of existing white balls (remaining interactions yielding at least one default) at any time $t$ in the economy $\cE_n$ satisfying above regularity assumptions.

\begin{theorem}\label{centrality}
Let $\tau_n\leq \tau^\star_n$ be a stopping time such that $\tau_n\top t_0$ for some $t_0>0$. Then for all $x\in\cX, \theta=1, \dots, d_x^+$ and $\ell=0, \dots, \theta-1$, we have  (as $n\rightarrow\infty$)
\begin{align*}
\sup\limits_{t\leq \tau_n}\bigl|\frac{S_{x,\theta,\ell}^{(n)}(t)}{n}-\mu_x q_x(\theta)b\left(d_x^+,1-e^{-t},\ell\right)\bigr|\top 0.
\end{align*}
Further, as $n\to \infty$,
\begin{align*}
\sup\limits_{t\leq \tau_n}\bigl|\frac{S_n(t)}{n}-f_S(e^{-t})\bigr|\top 0, & \ \   \sup\limits_{t\leq \tau_n}\bigl|\frac{D_n(t)}{n}-f_D(e^{-t})\bigr|\top 0,\\
\sup\limits_{t\leq \tau_n}\bigl|\frac{H^+_n(t)}{n}-f_{H^+}(e^{-t})\bigr|\top 0, & \ \   \sup\limits_{t\leq \tau_n}\bigl|\frac{I^+_n(t)}{n}-f_{I^+}(e^{-t})\bigr|\top 0,
\end{align*}
and the number of white balls satisfies
\begin{align*}
\sup\limits_{t\leq \tau_n}\bigl|\frac{W_n(t)}{n}-f_W(e^{-t})\bigr|\top 0.
\end{align*}

\end{theorem}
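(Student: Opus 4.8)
The plan is to realize the death process as a Markov jump process whose coordinates are exactly the quantities $S^{(n)}_{x,\theta,\ell}$, $I^+_n$, $H^+_n$, $W_n$, and then apply a differential-equation / fluid-limit argument in the style of Wormald or of Janson--Luczak (as used in \cite{janson2008asymptotic}). First I would set up the natural clock: since each alive in-ball carries an independent $\Exp(1)$ lifetime, the total number of alive in-balls $A_n(t):=I^+_n(t)+H^+_n(t)$ satisfies $A_n(t)=A_n(0)-(\text{number of deaths by time }t)$, and a standard computation shows $A_n(t)/n\to \lambda e^{-t}$ uniformly on compacts; more precisely, conditioned on the past, in a time interval $[t,t+dt)$ the expected number of deaths is $A_n(t)\,dt$, which linearizes nicely after dividing by $n$. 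The quantity $1-e^{-t}$ appearing in the statement is exactly the probability that a given in-ball has died by time $t$, which is why $b(d_x^+,1-e^{-t},\ell)$ is the natural guess for the (normalized) count of type-$x$, threshold-$\theta$ bins that have seen exactly $\ell$ of their $d_x^+$ in-balls killed.

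The core step is to write, for each coordinate, the drift (generator applied to the coordinate) and check it matches the derivative of the conjectured limit. For a solvent bin of type $x$, threshold $\theta$, with currently $\ell<\theta$ dead in-balls and $d_x^+-\ell$ alive in-balls, the rate at which it loses one more in-ball is $d_x^+-\ell$ (each alive ball dies at rate $1$), and such a bin stays solvent iff the new count $\ell+1$ is still $<\theta$. This gives a linear birth-death chain on $\ell\in\{0,\dots,\theta\}$ for each bin, run at the common clock; the key point is that \emph{which} healthy in-ball is killed does not depend on the bin's type, so conditionally on the number of alive in-balls the killed ball is uniform, and a martingale/Azuma argument controls the fluctuations. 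Summing over bins, $S^{(n)}_{x,\theta,\ell}(t)/n$ should concentrate around $\mu_x q_x(\theta)\,\PP(\Bin(d_x^+,1-e^{-t})=\ell)$; the pre-limit counts obey the approximate ODE $\dot s_{x,\theta,\ell}=-(d_x^+-\ell)s_{x,\theta,\ell}+(d_x^+-\ell+1)s_{x,\theta,\ell-1}$ (with the convention $s_{x,\theta,-1}\equiv 0$ and only for $\ell\le\theta-1$), whose solution with initial data $\mu_x q_x(\theta)\ind\{\ell=0\}$ is precisely the claimed binomial expression — this is the standard ODE characterization of the binomial-in-$(1-e^{-t})$ profile. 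The expressions for $S_n$, $D_n$, $H^+_n$, $I^+_n$, $W_n$ then follow by summation: $S_n(t)=\sum_{x,\theta}\sum_{\ell=0}^{\theta-1}S^{(n)}_{x,\theta,\ell}(t)$, so $S_n(t)/n\to \sum_x\mu_x\sum_\theta q_x(\theta)\PP(\Bin(d_x^+,1-e^{-t})\le\theta-1)=\sum_x\mu_x\sum_\theta q_x(\theta)\beta(d_x^+,e^{-t},d_x^+-\theta+1)$ using $\Bin(d,1-z)\le\theta-1\iff\Bin(d,z)\ge d-\theta+1$; similarly $H^+_n(t)$ counts alive in-balls attached to solvent bins, giving the weighted sum $\sum_\ell \ell\,b(d_x^+,\cdot,\ell)$ over the surviving range, and $I^+_n=\lambda z-H^+_n$ since the total alive in-balls is $\approx\lambda z n$ with $z=e^{-t}$. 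For $W_n$, one tracks that a white $\bIn$ ball is created exactly when a solvent bin becomes defaulted (it contributes $d_x^-$ new infected out-balls, hence $d_x^-$ white balls in the red/white bookkeeping) and destroyed one per step, i.e.\ one per death of an in-ball; balancing creation and destruction gives the formula for $f_W$.

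To make the concentration rigorous I would: (i) write each normalized coordinate as a martingale plus a compensator (Doob decomposition for the pure-jump process), (ii) bound the martingale's quadratic variation — each jump changes any coordinate by $O(1)$, the number of jumps up to $\tau_n$ is $O(n)$, so the martingale is $O_p(\sqrt n)=o_p(n)$ uniformly by Doob's inequality, (iii) show the compensator, divided by $n$, is uniformly close to $\int_0^t(\text{drift})$, and (iv) invoke Grönwall / the standard ODE-approximation lemma to conclude the normalized coordinates track the ODE solution uniformly on $[0,\tau_n]$. Since $\tau_n\le\tau^\star_n$ and $\tau_n\top t_0$, and the ODE solutions are Lipschitz on $[0,t_0+1]$, the $\sup_{t\le\tau_n}$ statements follow; one only needs that nothing degenerates before $\tau^\star_n$ — i.e.\ $A_n(t)$ stays linear in $n$, which holds on $[0,t_0]$ since $\lambda e^{-t_0}>0$. \textbf{The main obstacle} I anticipate is the bookkeeping for $W_n$: unlike the other coordinates, its increments are not a simple function of a single bin's local state — a step both kills an in-ball (guaranteed $-1$) and may flip one or more bins from $\bS$ to $\bD$, each flip injecting $d_x^-$ white balls, and these flips are correlated with the identity of the killed ball. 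Handling this requires carefully identifying, at each step, the distribution of the type of the bin whose in-ball is killed (size-biased by number of alive in-balls of that bin, hence by $(d_x^+-\ell)$ weighting), and checking that the resulting drift for $W_n/n$ integrates to $f_W(e^{-t})$; I expect this to reduce, after an integration by parts in $t$, to the identity $\frac{d}{dt}f_W(e^{-t}) = -(\text{rate of killing}) + \sum_{x}\mu_x d_x^-\sum_\theta q_x(\theta)(\text{rate bin of type }(x,\theta)\text{ defaults})$, which one verifies termwise against the $s_{x,\theta,\ell}$ ODEs. The coupling between the configuration-model construction and the death process (revealing matches only as balls are removed) is the other point needing care, but it is by now routine and is exactly the device the authors flagged as giving "a simpler proof."
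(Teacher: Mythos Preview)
Your proposal is correct in spirit but takes a genuinely different route from the paper, and misses the simplification that makes the paper's argument short.

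The paper does \emph{not} run a Wormald/fluid-limit ODE argument. Instead it observes that, because each in-ball dies independently at rate~$1$, the number of alive in-balls in a fixed bin of type $x$ at time $t$ is exactly $\Bin(d_x^+,e^{-t})$, and these are i.i.d.\ across the $N^{(n)}_{x,\theta}$ bins of a given $(x,\theta)$. Hence the per-type statement is literally an application of the Glivenko--Cantelli theorem (Lemma~\ref{Gli_cantelli}) to the empirical distribution of the hitting times $T_k$ at which the $(d_x^+-r)$-th ball dies in bin $k$; no Doob decomposition, no Gr\"onwall. Your ODE $\dot s_{x,\theta,\ell}=-(d_x^+-\ell)s_{x,\theta,\ell}+(d_x^+-\ell+1)s_{x,\theta,\ell-1}$ is of course equivalent --- it is the forward equation whose solution is the binomial --- but the paper bypasses it entirely.

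For $W_n$, the paper avoids exactly the bookkeeping you flag as the main obstacle. Since one white $\bIn$ ball is recolored red per in-ball death, the number of red balls equals the number of dead in-balls, and one gets the closed-form identity $W_n(t)=L_n(t)-H^-_n(t)$, where $L_n$ is the total number of alive in-balls and $H^-_n(t)=\sum_{x,\theta}d_x^-\sum_{s\ge d_x^+-\theta+1}U^{(n)}_{x,\theta,s}(t)$ is the number of healthy out-balls. Both pieces are already controlled by the Glivenko--Cantelli argument, so $f_W$ drops out with no drift computation at all. Your proposed derivation via balancing ``creation minus destruction'' rates would reach the same formula but only after the integration-by-parts you anticipate; the identity makes this unnecessary.

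One point you gloss over that the paper treats carefully: the passage ``then follow by summation'' from per-$(x,\theta,\ell)$ convergence to convergence of $S_n,H^+_n,W_n$ is not automatic, because $\cX$ is countably infinite and the sums involve unbounded weights $d_x^\pm$. The paper uses Assumption~\ref{cond-average} ($\lambda<\infty$) and a truncation to $\cX_K=\{x:d_x^++d_x^-\le K\}$, bounding the tail $\sum_{x\notin\cX_K}(d_x^++d_x^-)\mu_x^{(n)}$ uniformly in $n$ (equation~\eqref{result} in the proof). Your martingale bounds are per-coordinate and would need the same truncation to be summed; this should be made explicit in your write-up.
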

The proof of above theorem is provided in Appendix~\ref{sec:thmCentrality}.

\medskip

We consider now the stopping time  $\tau^\star_n$ which is the first time such that $W_n(\tau^\star_n)=-1$ (becomes negative). Let us define
 \begin{align}
 \hz:=\sup\bigl\{z\in[0,1]: f_W(z)=0\bigr\}.
 \end{align}
Then we have the following lemma.

\begin{lemma}\label{tau}
We have (as $n\rightarrow\infty$):
\begin{itemize}
  \item[(i)]  If $\hz=0$ then $\tau^\star_n\top\infty$.
  \item[(ii)] If $\hz\in\left(0,1\right]$ and $\hz$ is a stable solution, i.e. $f_W'(\hz)>0$, then $\tau^\star_n\top-\ln\hz$.
  \end{itemize}
\end{lemma}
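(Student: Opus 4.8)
The plan is to transfer the discrete stopping-time question to the deterministic fluid limit $f_W(e^{-t})$ established in Theorem~\ref{centrality}, and then read off the behaviour of $\tau^\star_n$ from the zero set of $f_W$. The key link is that $\tau^\star_n$ is, by construction, the first time $W_n$ hits $-1$, and Theorem~\ref{centrality} tells us that $W_n(t)/n$ stays uniformly close to $f_W(e^{-t})$ on $[0,\tau_n]$ for any stopping time $\tau_n \top t_0 \leq \tau^\star_n$. Note that $f_W(1)=\lambda - \sum_{x}\mu_x d_x^- \sum_{\theta\geq 1} q_x(\theta)\beta(d_x^+,1,\cdot) = \lambda\,\sum_x \mu_x d_x^- q_x(0)/\lambda \cdot(\dots) \geq 0$, and more simply $f_W(0) = -\sum_x \mu_x d_x^- q_x(0) \leq 0$; by continuity of $f_W$ on $[0,1]$ the set $\{z : f_W(z)=0\}$ is nonempty and closed, so $\hz$ is well defined and attained. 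A useful preliminary observation is that $t\mapsto f_W(e^{-t})$ is continuous, equals $f_W(1)\geq 0$ at $t=0$, and its first zero (scanning $t$ upward from $0$, equivalently scanning $z$ downward from $1$) occurs precisely at $t^\star := -\ln \hz$ (with $t^\star=\infty$ when $\hz=0$), and on $[0,t^\star)$ we have $f_W(e^{-t})>0$ strictly, unless $f_W\equiv 0$ on a neighbourhood of $1$ — a degenerate case one has to treat separately or rule out under the stated hypotheses.

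For part (i): if $\hz = 0$, then $f_W(z) > 0$ for all $z\in(0,1]$ (again modulo the degenerate flat case), hence $f_W(e^{-t}) > 0$ for every fixed $t\geq 0$. Fix an arbitrary $T>0$; I would argue that w.h.p.\ $\tau^\star_n > T$. Suppose not; then on the event $\{\tau^\star_n \leq T\}$ there is some $t\leq T$ with $W_n(t) \leq 0$, so $W_n(t)/n$ is within $o(1)$ of $0$ while $f_W(e^{-t}) \geq \min_{s\leq T} f_W(e^{-s}) =: \delta_T > 0$, contradicting the uniform convergence $\sup_{t\leq \tau_n}|W_n(t)/n - f_W(e^{-t})| \top 0$ once we check that Theorem~\ref{centrality} can be applied with a stopping time $\tau_n = \min(\tau^\star_n, T) \top T$ (which holds on the event we are considering, and more carefully one runs the standard argument showing $\tau^\star_n \wedge T \top T$). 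Since $T$ is arbitrary, $\tau^\star_n \top \infty$.

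For part (ii): suppose $\hz\in(0,1]$ and $f_W'(\hz)>0$. Set $t^\star = -\ln\hz$. Since $\hz$ is the largest root and $f_W(1)\geq 0$, we get $f_W(z)>0$ on $(\hz,1]$ and, by $f_W'(\hz)>0$, $f_W(z)<0$ for $z$ slightly below $\hz$; equivalently $f_W(e^{-t})>0$ on $[0,t^\star)$ and $f_W(e^{-t})<0$ on $(t^\star, t^\star+\eta)$ for some $\eta>0$. For the lower bound $\tau^\star_n \geq t^\star - \varepsilon$ w.h.p.: on $[0, t^\star-\varepsilon]$ we have $f_W(e^{-t}) \geq c_\varepsilon>0$, so by the same uniform-convergence contradiction as in part (i), $W_n(t)/n$ cannot reach $0$ there, giving $\tau^\star_n \geq t^\star-\varepsilon$ w.h.p. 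For the upper bound $\tau^\star_n \leq t^\star + \varepsilon$ w.h.p.: pick $t_1 = t^\star + \varepsilon$ (small enough that $t_1 < t^\star+\eta$), so $f_W(e^{-t_1}) < 0$; apply Theorem~\ref{centrality} with the stopping time $\tau_n = \min(\tau^\star_n, t_1) \top \min(t^\star,t_1) = t^\star$ (this identification itself uses the lower bound just proved), and conclude that at $t_1$, were $\tau^\star_n > t_1$, we'd have $W_n(t_1)/n \to f_W(e^{-t_1}) < 0$, forcing $W_n(t_1) < 0$ and hence $\tau^\star_n \leq t_1$ — contradiction. Combining, $\tau^\star_n \top t^\star = -\ln\hz$.

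The main obstacle is the circularity in applying Theorem~\ref{centrality}: that theorem is stated for stopping times $\tau_n \leq \tau^\star_n$ with $\tau_n \top t_0$, so to use it near $t^\star$ I must already know $\tau^\star_n$ does not fall well below $t^\star$. The clean way around this is to run the argument in two stages — first prove the lower bound $\tau^\star_n \geq t^\star - \varepsilon$ w.h.p.\ using truncated stopping times $\tau_n = \tau^\star_n \wedge (t^\star-\varepsilon)$, which one checks converges to $t^\star - \varepsilon$ in probability, and only then bootstrap to the matching upper bound. A secondary technical point is ruling out (or handling) the degenerate case where $f_W$ vanishes on an interval abutting $\hz$, where $f_W'(\hz)>0$ already excludes this in part (ii), and in part (i) the hypothesis $\hz=0$ together with $f_W(0)\leq 0$ and the strict positivity argument needs the (generically true) non-flatness, which I would either note follows from the model's non-degeneracy assumptions or address with a short separate argument.
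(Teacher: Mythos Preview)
Your argument is correct and follows essentially the same route as the paper's proof: both use the uniform convergence of $W_n(t)/n$ to $f_W(e^{-t})$ to sandwich $\tau_n^\star$ between $t^\star-\varepsilon$ and $t^\star+\varepsilon$ via the sign of $f_W$ on either side of $\hz$. The paper is actually less careful than you about the circularity in invoking Theorem~\ref{centrality}: it simply writes $\sup_{t\le\tau_n^\star}|W_n(t)/n - f_W(e^{-t})|\top 0$ without checking that $\tau_n^\star$ converges, which is justified only because the Glivenko--Cantelli argument in the proof of Theorem~\ref{centrality} in fact yields $\sup_{t\ge 0}$ convergence for the underlying death process; your two-stage bootstrap is a cleaner way to handle this from the theorem as stated. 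One small correction: $f_W(0)=0$, not $-\sum_x\mu_x d_x^- q_x(0)$ (at $z=0$ each $\beta(d_x^+,0,d_x^+-\theta+1)=0$ for $\theta\le d_x^+$), and together with $f_W(1)>0$ and $\hz=\sup\{z:f_W(z)=0\}=0$ this immediately gives $f_W>0$ on $(0,1]$, so your ``degenerate flat case'' worry in part~(i) is moot.
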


The proof of lemma is provided in Appendix~\ref{sec:lemStop}.

\medskip

As a corollary of Theorem~\ref{centrality} and Lemma~\ref{tau}, we next provide the law of large numbers for the final state of default contagion. Namely, under Assumption~\ref{cond-average}, the following holds.
\begin{theorem}\label{thm:LLN}
The final fraction of defaults satisfies:
\begin{itemize}
\item[(i)] If $\hz=0$ then asymptotically almost all institutions default during the cascade and $$\bigl|\cD_{n-1}\bigr|=n-o_p(n).$$
\item[(ii)] If $\hz\in\left(0,1\right]$ and $\hz$ is a stable solution, i.e. $f_W'(\hz)>0$, then
$$\frac{\bigl|\cD_{n-1}\bigr|}{n}\top f_D(\hz).$$
Further, in this case, for all $x\in\cX, \theta=1, \dots, d_x^+$ and $\ell=0, \dots, \theta-1$, the final fraction of solvent institutions with type $x$, threshold $\theta$ and $\ell$ defaulted neighbors satisfy
\begin{align*}
\frac{S_{x,\theta,\ell}^{(n)}}{n} \top \mu_x q_x(\theta)b\left(d_x^+,1-\hz,\ell\right).
\end{align*}

\end{itemize}
\end{theorem}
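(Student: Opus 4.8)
The plan is to derive Theorem~\ref{thm:LLN} as a corollary of Theorem~\ref{centrality} and Lemma~\ref{tau}, by reading off the final state of the contagion from the value of the death process at the stopping time $\tau^\star_n$. The key observation is that the final set of defaulted (resp.\ solvent) institutions is exactly the set of $\bD$ (resp.\ $\bS$) bins at time $\tau^\star_n$, so that $|\cD_{n-1}| = D_n(\tau^\star_n)$ and $S_{x,\theta,\ell}^{(n)} = S_{x,\theta,\ell}^{(n)}(\tau^\star_n)$.

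\textbf{Case (ii): $\hz\in(0,1]$ a stable solution.} By Lemma~\ref{tau}(ii), $\tau^\star_n \top -\ln\hz =: t_0 > 0$. I would like to apply Theorem~\ref{centrality} with the stopping time $\tau_n = \tau^\star_n$; since $\tau^\star_n \top t_0$, the hypothesis of Theorem~\ref{centrality} is met, and evaluating the uniform limits at $t = \tau^\star_n$ gives $S_{x,\theta,\ell}^{(n)}(\tau^\star_n)/n \top \mu_x q_x(\theta) b(d_x^+, 1-e^{-t_0}, \ell) = \mu_x q_x(\theta) b(d_x^+, 1-\hz, \ell)$, using $e^{-t_0} = \hz$ and the continuity of $b(d_x^+,\cdot,\ell)$. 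Similarly $D_n(\tau^\star_n)/n \top f_D(e^{-t_0}) = f_D(\hz)$, which is the claim. The one technical point worth spelling out is that the uniform-in-$t$ convergence on $\{t \le \tau_n\}$ combined with $\tau_n \top t_0$ and continuity of the deterministic limit at $t_0$ lets us pass from the process to its value at the random time: on the high-probability event where $\tau_n$ is within $\delta$ of $t_0$ and the sup-deviation is small, $|D_n(\tau^\star_n)/n - f_D(\hz)|$ is controlled by the sup-deviation plus the modulus of continuity of $f_D\circ\exp$ on $[t_0-\delta,t_0+\delta]$.

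\textbf{Case (i): $\hz = 0$.} By Lemma~\ref{tau}(i), $\tau^\star_n \top \infty$. Here the plan is to fix an arbitrary large $T > 0$; since $\tau^\star_n \top \infty$, w.h.p.\ $\tau^\star_n \ge T$, and on this event $D_n(\tau^\star_n) \ge D_n(T)$ because the number of defaulted bins is nondecreasing in $t$ (bins only change from $\bS$ to $\bD$). By Theorem~\ref{centrality} applied with, say, $\tau_n = \min(\tau^\star_n, T)$ (a stopping time converging in probability to $T$), $D_n(T)/n \top f_D(e^{-T})$. As $T \to \infty$, $e^{-T} \to 0$, and since $f_S$ is continuous with $f_S(0) = \sum_x \mu_x \sum_\theta q_x(\theta)\beta(d_x^+,0,d_x^+-\theta+1)$; note $\beta(d_x^+,0,d_x^+-\theta+1) = 0$ for $\theta \le d_x^+$, so $f_S(0) = 0$ and hence $f_D(e^{-T}) \to 1$. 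Therefore $\liminf_n D_n(\tau^\star_n)/n \ge f_D(e^{-T}) - o_p(1)$ for every $T$, giving $D_n(\tau^\star_n)/n \top 1$, i.e.\ $|\cD_{n-1}| = n - o_p(n)$.

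The main obstacle, and the only place requiring genuine care, is the monotonicity/continuity bookkeeping needed to evaluate the functional limit laws of Theorem~\ref{centrality} at the random stopping time $\tau^\star_n$ — in particular justifying the interchange in case (i) via the monotonicity of $D_n(\cdot)$ and the limit $T\to\infty$, and in case (ii) verifying that $\tau^\star_n$ is indeed an admissible stopping time for Theorem~\ref{centrality} with $t_0 = -\ln\hz > 0$. Everything else is a direct substitution, using that $e^{-\tau^\star_n} \top \hz$ together with the continuity of $f_D$, $f_S$ and of $z \mapsto b(d_x^+,1-z,\ell)$.
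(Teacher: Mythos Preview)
Your proof is correct and follows essentially the same route as the paper's: combine Theorem~\ref{centrality} with Lemma~\ref{tau} and the continuity of $f_S$, $f_D$, and $b(d_x^+,1-\cdot,\ell)$ to read off the value at $\tau^\star_n$. The only minor difference is in case~(i): the paper applies Theorem~\ref{centrality} directly at $\tau^\star_n$ (writing $S_n(\tau^\star_n)/n = f_S(e^{-\tau^\star_n}) + o_p(1)$ and then using $e^{-\tau^\star_n}\to_p 0$ and $f_S(0)=0$), whereas you truncate at a finite level $T$, exploit the monotonicity of $D_n(\cdot)$, and let $T\to\infty$ --- your version is arguably cleaner since Theorem~\ref{centrality} as stated assumes a finite limit $t_0$, while the paper tacitly relies on the fact that its Glivenko--Cantelli proof actually gives uniform convergence over all $t\ge 0$.
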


The proof of above theorem is provided in Appendix~\ref{sec:thmLLN}. In particular, the above theorem (in different forms) has been used in~\cite{amcomi-res} to give a resilience condition for contagion in random financial networks. Namely, in the notations above, starting from a small fraction $\epsilon$ of institutions representing the fundamental defaults, i.e., $\sum_{x\in \cX} \mu_x q_x(0)=\epsilon$, the financial network is said to be resilient if $\lim_{\epsilon\to 0} \hz = 0$. We refer to~\cite{amcomi-res, amini2020contagion} for the resilience conditions.

\medskip

\subsection{Asymptotic Normality of Default Contagion}\label{sec:CLT}

In order to study the central limit theorems, we need to restrict our attention to the sparse networks regime. Namely, we consider the random graph $\cGn$ and assume that degrees sequences satisfy the following moment condition.

\begin{intassumption}\label{cond:CLT}
We assume that for every constant $A>1$, we have $$\sum_{i=1}^{n}A^{d^{+}_i} = n\sum_{x\in\cX}\mu_x^{(n)}A^{d^+_x}=O(n) \ \ \text{and} \ \  \sum_{i=1}^{n}A^{d^{-}_i} = n\sum_{x\in\cX}\mu_x^{(n)}A^{d^-_x}=O(n) .$$
\end{intassumption}

\begin{remark}\label{rem-finmo}
Let $(D_n^+,D_n^-)$ be random variables with joint distribution
\begin{align*}
\PP(D_n^+=d^+, D_n^-=d^-)=\sum_{x\in\cX} \mu^{(n)}_x \ind\{d_x^+=d^+,d_x^-=d^-\},
\end{align*}
which is the joint distribution of in- and out- degrees for a random node in $\cGn$. Let also $(D^+,D^-)$ be random variables (over nonnegative integers) with joint distribution
$$\PP(D^+=d^+, D^-=d^-)=\sum_{x\in\cX} \mu_x \ind\{d_x^+=d^+,d_x^-=d^-\}.$$
Then Assumption~\ref{cond:CLT} can be rewritten as $\EE[A^{D^+_n}]=O(1)$ and $\EE[A^{D^-_n}]=O(1)$ for each $A>1$, which in particular implies the uniform integrability of $D^+_n$ and $D_n^-$, so
\begin{equation*}
\lambda^{(n)}:=\sum_{x\in \cX} d_x^+ \mu_x^{(n)}  = \EE[D^+_n] \longrightarrow \EE[D^+]=\lambda \in (0,\infty).
\end{equation*}
Similarly, all higher moments converge. 
\end{remark}

By the construction of the balls-and-bins model, the independency exists between any two different types. Hence we study the asymptotic normality type by type. We first show the following joint convergence theorem for all $x\in\cX, \theta=1, \dots, d_x^+, \ell=0, \dots, \theta-1$.

\begin{theorem}\label{thm-CLT-S}
Let $\tau_n\leq \tau^\star_n$ be a stopping time such that $\tau_n\top t_0$ for some $t_0>0$.  For all $x\in\cX, \theta=1, \dots, d_x^+, \ell=0, \dots, \theta-1$ and jointly in $\cD\left[0,\infty\right)$,
\begin{align*}
n^{-1/2}\left(S^{(n)}_{x,\theta,\ell}(t\wedge\tau_n)-n\mu^{(n)}_x q^{(n)}_x(\theta)b\bigl(d_x^+,1-e^{-(t\wedge \tau_n)},\ell\bigr)\right)\tod Z_{x,\theta,\ell}(t\wedge t_0),
\end{align*}
where $Z_{x,\theta,\ell}(t)$ is a Gaussian process with mean 0 and variance $\bigsigma^2_{x,\theta,\ell}(t)$ given by~\eqref{Var_sll}.
\end{theorem}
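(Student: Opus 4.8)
\textbf{Proof proposal for Theorem~\ref{thm-CLT-S}.}

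The plan is to realize the coordinate process $S^{(n)}_{x,\theta,\ell}(\cdot)$ as a functional of an underlying family of independent Poisson/death dynamics and then apply a functional central limit theorem for (pure-jump) semimartingales in $\cD[0,\infty)$. First I would recall, from the balls-and-bins construction in Section~\ref{sec:deathFin}, that a solvent bin of type $x$ with initial threshold $\theta$ has $d_x^+$ incoming balls, each of which independently "dies" (gets its partner $\bIn$ ball removed, i.e. gains a defaulted neighbor) at rate one, and the bin stays solvent as long as fewer than $\theta$ of its incoming balls have died. Thus, \emph{before} the stopping time $\tau^\star_n$, the quantity $S^{(n)}_{x,\theta,\ell}(t)$ counts type-$(x,\theta)$ bins with exactly $\ell<\theta$ dead incoming balls; conditioning on the number $N^{(n)}_{x,\theta}$ of such bins (which is $n\mu^{(n)}_xq^{(n)}_x(\theta)+O_p(\sqrt n)$ by a CLT for the i.i.d.\ type/threshold assignment), each such bin is in "state $\ell$ at time $t$" independently with probability $b(d_x^+,1-e^{-t},\ell)$, since the number of its dead incoming balls by time $t$ is $\Bin(d_x^+,1-e^{-t})$. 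This already explains the centering term.

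The key steps, in order, are: (1) write $S^{(n)}_{x,\theta,\ell}(t)$ as a sum over bins of indicator processes and decompose it into its compensator plus a martingale using the Doob--Meyer decomposition with respect to the natural filtration of the death process; the compensator is an explicit time-integral of transition rates ($\ell\to\ell+1$ out of state $\ell$, and $\ell-1\to\ell$ into state $\ell$), each proportional to the relevant $S^{(n)}_{x,\theta,\cdot}$ counts. (2) Divide by $\sqrt n$ and show that the drift part, after linearizing around the deterministic trajectory $\hat s_{x,\theta,\ell}(t):=\mu_xq_x(\theta)b(d_x^+,1-e^{-t},\ell)$ supplied by Theorem~\ref{centrality}, satisfies a closed linear system of ODEs driven by the lower-order normalized fluctuations plus the martingale noise; this is a standard "fluid-limit plus Gaussian correction" argument (à la Ethier--Kurtz / Kurtz's theorem on density-dependent chains). (3) Identify the limiting martingale: its predictable quadratic variation converges, by the law of large numbers of Theorem~\ref{centrality} and continuity, to a deterministic $\cX$-indexed matrix of time-integrals of the rates evaluated along $\hat s$; a martingale functional CLT (Jacod--Shiryaev, Theorem VIII.3.11, or Ethier--Kurtz Ch.~7) then gives convergence to a continuous Gaussian martingale in $\cD[0,\infty)$, and the conditional Lindeberg condition holds trivially because jumps are of size $1/\sqrt n$. (4) Solve the limiting linear SDE explicitly to obtain the Gaussian process $Z_{x,\theta,\ell}$ and read off its variance $\bigsigma^2_{x,\theta,\ell}(t)$, matching formula~\eqref{Var_sll}; one must also add the independent Gaussian contribution from the randomness in $N^{(n)}_{x,\theta}$ (the initial-condition fluctuation), which is why the variance is not merely that of a binomial thinning. (5) Finally, handle the stopping time: since $\tau_n\top t_0$ and the limit processes are continuous, a random-time-change / continuous-mapping argument (Billingsley, or the continuity of $t\mapsto Z(t)$ composed with $\tau_n$) upgrades convergence on fixed intervals $[0,T]$ to convergence of $S^{(n)}_{x,\theta,\ell}(t\wedge\tau_n)$; Assumption~\ref{cond:CLT} (equivalently, all exponential moments of the degrees are $O(1)$, Remark~\ref{rem-finmo}) is what guarantees the sums over $\cX$ defining the rates and their derivatives converge and are finite, so the linearized system has bounded coefficients and the sparse-network regime is genuinely needed here.

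For the joint statement over all $(x,\theta,\ell)$, I would note that the bins of distinct types $x$ are assigned disjoint balls, and conditionally on the type/threshold partition the death dynamics of bins of different $(x,\theta)$ are driven by disjoint collections of independent exponential clocks; hence the corresponding martingales are orthogonal and the limit Gaussian processes across different $(x,\theta)$ are independent, while within a fixed $(x,\theta)$ the processes $(Z_{x,\theta,\ell})_{\ell=0}^{\theta-1}$ are correlated through the shared bins (a bin in state $\ell$ cannot simultaneously be in state $\ell'$). The only genuinely global coupling is through the common clock that advances "time" (the removal of $\bIn$ balls), but in the continuous-time embedding each incoming ball carries its own rate-one clock, so this coupling disappears — this is precisely the payoff of the balls-and-bins reformulation emphasized in the introduction, and it is what makes the "type by type" analysis legitimate.

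\textbf{Main obstacle.} The hardest part is not the martingale CLT per se but controlling the linearization uniformly on $[0,\tau_n]$ and, relatedly, ensuring that the approximation of $S^{(n)}_{x,\theta,\ell}$ by its fluid limit has error exactly of order $\sqrt n$ (not larger) \emph{uniformly} up to the stopping time — this requires Gronwall-type estimates for the linear fluctuation ODE with coefficients that themselves depend on the (converging but not yet controlled) lower-order fluctuations, together with a tightness argument in $\cD[0,\infty)$ for the whole vector; the exponential-moment Assumption~\ref{cond:CLT} enters crucially to bound the infinitely many ($\cX$ may be countably infinite) coupled components simultaneously and to truncate the tail of $\cX$ with negligible cost. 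A secondary technical point is transferring everything from the configuration multigraph to the simple graph $\cG^{(n)}_*$, which is handled by the standard conditioning argument recalled in Section~\ref{sec:model} once $\liminf_n\PP(\cGn\text{ simple})>0$ is in force.
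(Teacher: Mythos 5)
Your proposal and the paper's proof belong to the same family (Doob--Meyer decomposition, martingale functional CLT à la Jacod--Shiryaev, separate CLT for the random initial counts $N_{x,\theta}^{(n)}$, plus a random-time-change argument using $\tau_n\top t_0$ and continuity of the limit), and your observation that bins of different type-threshold pairs carry disjoint balls and hence independent noise is exactly what drives the paper's independence structure. But you take a heavier and, in one respect, mis-directed route. You propose to work directly with the state counts $S^{(n)}_{x,\theta,\ell}$, whose compensator mixes an inflow (from state $\ell-1$) and an outflow term, and then to "linearize around the fluid limit" in Kurtz's density-dependent-chain style, controlling the error with Gronwall estimates and tightness. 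This linearization step is what you flag as the ``main obstacle,'' but it is unnecessary here. The paper instead passes to the tail counts $V^{(n)}_{x,\theta,s}=\sum_{s'\geq s}U^{(n)}_{x,\theta,s'}$ and reweights to $\widehat V^{(n)}_{x,\theta,s}(t)=e^{st}V^{(n)}_{x,\theta,s}(t)$; this makes $M^{(n)}_{x,\theta,s}(t)=\widehat V^{(n)}_{x,\theta,s}(t)-s\int_0^t e^{-r}\widehat V^{(n)}_{x,\theta,s+1}(r)\,dr$ an \emph{exact} martingale, producing an upper-triangular linear system in $s$ that is solved by finite backward induction with no approximation, no Gronwall, and no tightness beyond the martingale CLT itself. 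A second structural payoff you miss: because a death event changes $V^{(n)}_{x,\theta,r}$ for exactly one value $r$ (namely the previous number of alive balls in that bin), the martingales $M^{(n)}_{x,\theta,s}$ for different $s$ a.s.\ never jump simultaneously, so their covariations vanish and the limiting Gaussian martingales $Y_{x,\theta,s}$ are independent across $s$; by contrast $S^{(n)}_{x,\theta,\ell}$ and $S^{(n)}_{x,\theta,\ell+1}$ do jump simultaneously, which is why your formulation would force you to carry a genuinely coupled noise matrix. Finally, two minor points: the simple-graph transfer is not part of this proof (the theorem is stated on the configuration multigraph), and Assumption~\ref{cond:CLT} is not in fact invoked in the paper's proof of this particular theorem — for fixed $(x,\theta)$ all sums are finite — it is needed only later (Theorem~\ref{normality}) when one aggregates over all of $\cX$.
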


The proof of above theorem is provided in Appendix~\ref{sec-CLT-S}. Indeed, we prove a stronger version of above theorem, providing also the covariance between $Z_{x_1,\theta_1,\ell_1}$ and $Z_{x_2,\theta_2,\ell_2}$, for any two triplets $(x_1,\theta_1,\ell_1)$ and $(x_2,\theta_2,\ell_2)$; the covariance is given by \eqref{Cov_sll}.

\medskip

For $z \in[0,1]$, we define the functions:
\begin{align*}
f^{(n)}_S(z):=& \sum_{x\in \cX} \mu^{(n)}_x \sum_{\theta=1}^{d_x^+}q^{(n)}_x(\theta) \beta\bigl(d_x^+, z, d_x^+-\theta+1\bigr), \ \ f^{(n)}_D(z)=1-f^{(n)}_S(z),\\
f_{H^+}^{(n)}(z):=& \sum_{x\in \cX} \mu_x^{(n)} \sum_{\theta=1}^{d_x^+}q^{(n)}_x(\theta) \sum_{\ell=d_x^+-\theta+1}^{d_x^+} \ell b\bigl(d_x^+,z,\ell\bigr) , \ \ f^{(n)}_{I^+}(z) = \lambda z-f^{(n)}_{H^+}(z),\\
f^{(n)}_W(z):=&\lambda z - \sum_{x\in \cX} \mu^{(n)}_x d_x^- \sum_{\theta=1}^{d_x^+}q^{(n)}_x(\theta) \beta\bigl(d_x^+,z, d_x^+-\theta+1\bigr).
\end{align*}

In the following theorem, we show the joint asymptotic normality between the total number of solvent institutions, number of defaulted institutions, number of infected and healthy links, and the total number of white balls (controlling the default contagion stopping time) at any time $t$ before the end of default cascade.
\medskip

For convenience, we set $$\hf^{(n)}_{\clubsuit}(t)=f^{(n)}_{\clubsuit}(e^{-t}),$$ for $\clubsuit\in \{S, D, H^+, I^+, W\}$.

\begin{theorem}\label{normality}
Let $\tau_n\leq \tau^\star_n$ be a stopping time such that $\tau_n\top t_0$ for some $t_0>0$. Then, under the above assumptions and jointly in $\cD\left[0,\infty\right)$,
\begin{align}\label{joint_normality}
n^{-1/2}\left(\clubsuit_n(t\wedge\tau_n)-n\hf^{(n)}_{\clubsuit}(t\wedge\tau_n)\right)\tod Z_{\clubsuit}(t\wedge t_0)
\end{align}
for $\clubsuit\in \{S, D, H^+, I^+, W\}$, where $\{Z_\clubsuit\}$ are continuous Gaussian processes on $[0,t_0]$ with mean 0 and covariances that satisfy, for $0\leq t\leq t_0$ and $\clubsuit, \spadesuit \in\{S, D, H^+, I^+, W\}$,
$$\Cov\bigl(Z_{\clubsuit}(t),Z_{\spadesuit}(t)\bigr)=\bigsigma_{\clubsuit,\spadesuit}(e^{-t}),$$
where the form of $\bigsigma_{\clubsuit,\spadesuit}(x)$ are given by \eqref{sigma S}-\eqref{sigma_HDIC}.
\end{theorem}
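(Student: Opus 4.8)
The plan is to derive Theorem~\ref{normality} from the finer, type-resolved CLT of Theorem~\ref{thm-CLT-S} together with the corresponding CLT for the alive-ball counts, and then to propagate these through the identities that express $S_n, D_n, H^+_n, I^+_n, W_n$ in terms of the basic quantities. First I would observe that by construction of the balls-and-bins death process every one of the five aggregate processes is, up to an error term I will need to control, a (countable) linear combination of the per-type counts $S^{(n)}_{x,\theta,\ell}(t)$ and the alive in-ball counts. Concretely: $S_n(t)=\sum_{x,\theta,\ell}S^{(n)}_{x,\theta,\ell}(t)$ summed over solvent configurations, $D_n(t)=n-S_n(t)$, the healthy in-balls $H^+_n(t)$ count alive in-balls attached to solvent bins and hence equal $\sum_{x,\theta,\ell}(d_x^+-\ell)S^{(n)}_{x,\theta,\ell}(t)$, while $I^+_n(t)$ is the total number of alive in-balls minus $H^+_n(t)$, the total number of alive in-balls being a pure death process (each alive in-ball dies at rate one) and hence itself asymptotically Gaussian around $\lambda^{(n)} n e^{-t}$ by a direct martingale/Doob argument (or by the same machinery as in Theorem~\ref{thm-CLT-S}). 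Finally $W_n(t)$, the number of white $\bIn$ balls, is bookkept as (alive in-balls) $-$ (infected-in balls generated so far), which again reduces to the alive-ball total minus a type-indexed combination of the $S^{(n)}_{x,\theta,\ell}$; matching this against the stated $f^{(n)}_W$ is the combinatorial identity
$$
\sum_{x}\mu^{(n)}_x d_x^- \sum_{\theta=1}^{d_x^+} q^{(n)}_x(\theta)\,\beta\bigl(d_x^+,z,d_x^+-\theta+1\bigr)
$$
being exactly the per-solvent-bin expected count of out-balls still belonging to solvent bins.

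Next I would invoke the joint-in-$\cD[0,\infty)$ convergence asserted (in its stronger covariance form) in Theorem~\ref{thm-CLT-S}: the processes $n^{-1/2}(S^{(n)}_{x,\theta,\ell}(t\wedge\tau_n)-n\mu^{(n)}_x q^{(n)}_x(\theta)b(d_x^+,1-e^{-(t\wedge\tau_n)},\ell))$ converge jointly, together with the centered alive-ball process, to a jointly Gaussian limit with the covariance kernel $\eqref{Cov_sll}$ plus the (computable) cross-covariances with the alive-ball process. Since finite linear combinations of jointly Gaussian processes are Gaussian, and since the map from the basic processes to $(S_n,D_n,H^+_n,I^+_n,W_n)$ is (asymptotically) linear with bounded coefficients — here Assumption~\ref{cond:CLT} (via Remark~\ref{rem-finmo}) is what guarantees the tails of $d_x^{\pm}$ decay fast enough that the infinite sums over $x$ converge, that the truncation error from restricting to $d_x^+\le M$ is $o_p(n^{1/2})$ uniformly on $[0,t_0]$, and that tightness is preserved under the infinite sum — the continuous-mapping theorem yields $\eqref{joint_normality}$ with $Z_\clubsuit$ the corresponding linear combination of the limiting Gaussian processes. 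The centering is $n\hf^{(n)}_\clubsuit(t\wedge\tau_n)=n f^{(n)}_\clubsuit(e^{-(t\wedge\tau_n)})$ precisely because $1-e^{-t}$ is the fraction of in-balls killed by time $t$ in the exponential-clock death process, so $\mu^{(n)}_x q^{(n)}_x(\theta) b(d_x^+,1-e^{-t},\ell)$ is the LLN centering from Theorem~\ref{centrality} and the stated $f^{(n)}_\clubsuit$ are exactly the induced linear combinations. The covariances $\Cov(Z_\clubsuit(t),Z_\spadesuit(t))=\bigsigma_{\clubsuit,\spadesuit}(e^{-t})$ in $\eqref{sigma S}$--$\eqref{sigma_HDIC}$ are then read off by bilinearity from the covariances of Theorem~\ref{thm-CLT-S}, with the extra variance/covariance contributions coming from the pure death process of alive in-balls; I would present these as a bookkeeping computation.

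The handling of the random stopping time $\tau_n$ deserves a separate word: since $\tau_n\top t_0>0$, and the limiting processes are continuous at $t_0$ a.s., the evaluation-at-$t\wedge\tau_n$ functional is a.s. continuous on the limit, so $t\wedge\tau_n$ may be replaced by $t\wedge t_0$ in the limit by a standard Slutsky/random-time-change argument (this is why the limit is $Z_\clubsuit(t\wedge t_0)$ rather than $Z_\clubsuit(t)$). One must check that the $o_p(n^{1/2})$ error terms — coming from (a) the multigraph-versus-simple-graph correction, absorbed via the $\liminf\PP(\text{simple})>0$ conditioning, (b) the discrepancy between the discrete removal steps and the continuous exponential clocks, and (c) the tail truncation in $\sum_x$ — are uniform in $t\le\tau_n$; for (b) and (c) this is where Assumption~\ref{cond:CLT} is essential and the argument parallels the corresponding estimates behind Theorem~\ref{thm-CLT-S}.

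\textbf{Main obstacle.} The principal difficulty is not the linear-algebra/continuous-mapping step but establishing the \emph{joint} functional convergence with the correct covariance structure — in particular controlling the infinitely many types simultaneously (uniform tightness of the sum $\sum_x$ in the Skorohod space, which forces the exponential moment hypothesis of Assumption~\ref{cond:CLT}) and correctly accounting for the cross-correlations between the type-resolved solvent counts and the global alive-ball death process that together determine $I^+_n$ and $W_n$. Getting the explicit closed forms $\eqref{sigma S}$--$\eqref{sigma_HDIC}$ right — as opposed to merely asserting Gaussianity — is the most calculation-heavy part, and I would defer the bulk of it to the appendix, presenting here only the reduction and the statement of which bilinear combinations produce each $\bigsigma_{\clubsuit,\spadesuit}$.
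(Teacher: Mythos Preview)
Your proposal is correct and follows essentially the same route as the paper: express each aggregate as a (countable) linear combination of the $V^{(n)}_{x,\theta,s}$ and the total alive in-ball process $L_n(t)$, apply the martingale CLT jointly to this enlarged family (the paper augments the martingale vector from the proof of Theorem~\ref{thm-CLT-S} with $\widetilde L_n(t)=n^{-1/2}(e^tL_n(t)-L_n(0))$ and computes the cross-covariations $[\widetilde M^{(n)}_{x,\theta,s},\widetilde L_n]_t$ directly), control the infinite-sum tails in $L^2$ via Assumption~\ref{cond:CLT} and Doob's inequality, and then read off the $\bigsigma_{\clubsuit,\spadesuit}$ by bilinearity. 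One minor slip to fix: $W_n(t)=L_n(t)-H^-_n(t)$ where $H^-_n(t)=\sum_x d_x^-\sum_{\theta,\ell}S^{(n)}_{x,\theta,\ell}(t)$ counts \emph{healthy out-balls}, not ``infected-in balls generated so far'' --- your displayed formula with the weight $d_x^-$ is the correct one --- and your listed error sources (a) and (b) are not actually present here, since the theorem is stated for the unconditioned configuration model and the continuous exponential clocks \emph{are} the model rather than an approximation to a discrete one.
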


The proof of the theorem is provided in Appendix~\ref{sec:normality}. Note that since $D_n(t)=n-S_n(t)$, we only provide the covariances for $\clubsuit, \spadesuit \in\{S, H^+, I^+, W\}$. It would be easy to transfer the result to $D$ by setting $\bigsigma_{D,D}=\bigsigma_{S,S}$ and $\bigsigma_{D,\clubsuit}=-\bigsigma_{S,\clubsuit}$ .

\medskip

We further define
$$s_{x,\theta,\ell}(z):=\mu_x q_x(\theta)b\bigl(d_x^+,1-z,\ell\bigr), \ \ s^{(n)}_{x,\theta,\ell}(z):=\mu^{(n)}_x q^{(n)}_x(\theta)b\bigl(d_x^+,1-z,\ell\bigr).$$
As a corollary of Theorem~\ref{normality} and Lemma~\ref{tau}, we have the following theorem regarding the final state of default contagion.

\begin{theorem}\label{normalityFinal}
Let $t^\star=-\ln \hz$ and $\hp_n$ be the largest $z\in[0,1]$ such that $f_W^{(n)}(z)=0$. Under the above assumptions, we have:

\begin{itemize}
\item[(i)] If $\hz=0$ then asymptotically almost all institutions default during the cascade and $$\bigl|\cD_{n-1}\bigr|=n-o_p(n).$$
\item[(ii)] If $\hz\in\left(0,1\right]$ and $\hz$ is a stable solution, i.e. $\alpha:=f_W'(\hz)>0$, then we have 
       \begin{equation}\label{norm_tau}
            n^{-1/2}(\clubsuit_n(\tst)-nf^{(n)}_{\clubsuit}(\hp_n))
           \tod    Z_{\clubsuit}(t^{\star})-\alpha^{-1}f'_{\clubsuit}(\hz)Z_{W}(t^{\star}),
       \end{equation}
      for $\clubsuit\in \{S, D, H^+, I^+, W\}$, where the limit distributions  compose a Gaussian vector. Furthermore, $\hp_n\rightarrow\hz$ and, for all $x\in \cX$, $0\leq\ell < \theta\leq d^+_x$,  
       \begin{equation}
       n^{-1/2}(S^{(n)}_{x,\theta,\ell}(\tst)-ns^{(n)}_{x,\theta,\ell}(\hp_n))\tod Z_{x,\theta,\ell}(t^{\star})-\alpha^{-1}s'_{x,\theta,\ell}(\hz)Z_{W}(t^{\star}).
       \end{equation}
\end{itemize}
\end{theorem}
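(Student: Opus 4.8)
The plan is to derive the statement from the functional central limit theorem in Theorem~\ref{normality} together with the identification of the stopping time $\tst$ in Lemma~\ref{tau}, via a ``random time change'' argument. The starting point is part~(ii): since $\hz\in(0,1]$ is a stable zero of $f_W$ (that is, $\alpha=f_W'(\hz)>0$), Lemma~\ref{tau}(ii) gives $\tst\top -\ln\hz=t^\star$, so $\tst$ is an admissible stopping time $\tau_n$ in Theorem~\ref{normality} with $t_0=t^\star$. Hence for each $\clubsuit\in\{S,D,H^+,I^+,W\}$ we already have the joint convergence $n^{-1/2}(\clubsuit_n(t\wedge\tst)-n\hf^{(n)}_{\clubsuit}(t\wedge\tst))\tod Z_\clubsuit(t\wedge t^\star)$ in $\cD[0,\infty)$, and the same for the $S_{x,\theta,\ell}^{(n)}$ processes via Theorem~\ref{thm-CLT-S}.

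The key step is to convert this into a statement about the value at the (random) endpoint $\tst$. First I would pin down the fluctuations of $\tst$ itself around $t^\star$. By definition $W_n(\tst)=-1$, i.e. $W_n(\tst)/n = -1/n = o_p(n^{-1/2})$. Writing $W_n(\tst) = \bigl(W_n(\tst)-n\hf^{(n)}_W(\tst)\bigr) + n\hf^{(n)}_W(\tst)$ and using that $n^{-1/2}(W_n(\tst)-n\hf^{(n)}_W(\tst))\tod Z_W(t^\star)$, we get $n^{1/2}\hf^{(n)}_W(\tst)\tod -Z_W(t^\star)$. Now Taylor-expand $\hf^{(n)}_W(t)=f^{(n)}_W(e^{-t})$ around $t^\star$: since $f^{(n)}_W\to f_W$ with convergence of derivatives (Remark~\ref{rem-finmo} and Assumption~\ref{cond:CLT} give convergence of all the relevant moments/series), and $\hf_W(t^\star)=f_W(\hz)=0$, we have $\hf^{(n)}_W(\tst) = \hf^{(n)}_W(t^\star) + (\hf_W)'(t^\star)(\tst - t^\star) + o_p(\tst-t^\star)$, with $(\hf_W)'(t^\star) = -e^{-t^\star}f_W'(\hz) = -\hz\,\alpha \ne 0$. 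This yields $n^{1/2}(\tst - t^\star) \tod \alpha^{-1}\hz^{-1}\,(-Z_W(t^\star))\cdot(\ldots)$; more cleanly, it shows $n^{1/2}(\tst-t^\star)$ is $O_p(1)$ and converges jointly with the $Z_\clubsuit$ to an explicit linear functional of $Z_W(t^\star)$. I would also need $n^{1/2}(\hp_n - \hz)\to$ the matching deterministic-plus-noise expansion, but in fact $\hp_n$ is deterministic, so $\hp_n\to\hz$ and $n^{1/2}(f^{(n)}_W(\hp_n)) = 0$ by definition of $\hp_n$; the role of $\hp_n$ is precisely to absorb the $O(1)$ (non-$\sqrt n$) discrepancy between $f^{(n)}_W$ and $f_W$, which is why the centering in~\eqref{norm_tau} uses $f^{(n)}_\clubsuit(\hp_n)$ rather than $f_\clubsuit(\hz)$.

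Then I would combine these. Write
\begin{align*}
\clubsuit_n(\tst) - n f^{(n)}_\clubsuit(\hp_n)
&= \bigl(\clubsuit_n(\tst) - n\hf^{(n)}_\clubsuit(\tst)\bigr) + n\bigl(\hf^{(n)}_\clubsuit(\tst) - f^{(n)}_\clubsuit(\hp_n)\bigr).
\end{align*}
The first bracket, divided by $\sqrt n$, converges to $Z_\clubsuit(t^\star)$ by the above (using continuity of $Z_\clubsuit$ and $\tst\top t^\star$, via an Anscombe/Skorohod-type argument — a standard consequence of the functional convergence plus $C$-tightness of the limit). For the second, Taylor-expand: $\hf^{(n)}_\clubsuit(\tst) - f^{(n)}_\clubsuit(e^{-t^\star}) \approx (\hf^{(n)}_\clubsuit)'(t^\star)(\tst-t^\star) = -e^{-t^\star}(f_\clubsuit)'(\hz)(\tst-t^\star)+o_p(n^{-1/2})$, while $f^{(n)}_\clubsuit(e^{-t^\star}) - f^{(n)}_\clubsuit(\hp_n) \approx (f^{(n)}_\clubsuit)'(\hz)(e^{-t^\star}-\hp_n)$. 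Using $\hp_n\to\hz$, $f^{(n)}_W(\hp_n)=0$, and the expansion $0 = f^{(n)}_W(\hp_n) = f^{(n)}_W(\hz) + f'_W(\hz)(\hp_n-\hz)+\ldots$ one checks that the deterministic cross-terms cancel in the combination, leaving exactly $n^{-1/2}(n\hf^{(n)}_\clubsuit(\tst)-nf^{(n)}_\clubsuit(\hp_n)) \tod -\alpha^{-1}f'_\clubsuit(\hz)Z_W(t^\star)$ after substituting the limit of $n^{1/2}(\tst-t^\star)$. Adding the two pieces gives~\eqref{norm_tau}; the joint Gaussianity follows since everything is a linear functional of the Gaussian vector $(Z_\clubsuit(t^\star))_\clubsuit$. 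The argument for $S^{(n)}_{x,\theta,\ell}(\tst)$ is identical, using Theorem~\ref{thm-CLT-S} and the function $s^{(n)}_{x,\theta,\ell}$, with derivative $s'_{x,\theta,\ell}(\hz)$. Part~(i): if $\hz=0$ then by Lemma~\ref{tau}(i) $\tst\top\infty$, and Theorem~\ref{thm:LLN}(i) already gives $|\cD_{n-1}|=n-o_p(n)$, so nothing further is needed.

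The main obstacle I anticipate is making the ``value at a random time'' step rigorous: one must justify passing from the functional convergence $n^{-1/2}(\clubsuit_n(\cdot\wedge\tst)-n\hf^{(n)}_\clubsuit(\cdot\wedge\tst))\tod Z_\clubsuit(\cdot\wedge t^\star)$ to evaluation at $t=\infty$ (equivalently at the endpoint $\tst$), jointly with the scaling limit of $n^{1/2}(\tst-t^\star)$, and to control the Taylor remainders uniformly — this requires the stability hypothesis $\alpha>0$ (so that the implicit-function/expansion step is non-degenerate) and the smoothness of $f^{(n)}_\clubsuit, f_\clubsuit$ with convergence of derivatives, which in turn rests on the strong moment Assumption~\ref{cond:CLT} (Remark~\ref{rem-finmo}) guaranteeing the defining series converge together with all term-by-term derivatives. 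The bookkeeping of which discrepancies are $O(1)$ versus $o(\sqrt n)$ — and hence why the centering must be $f^{(n)}_\clubsuit(\hp_n)$ — is the delicate part but is essentially the same device used in~\cite{janson2008asymptotic}.
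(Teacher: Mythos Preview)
Your approach is essentially the paper's: invoke Lemma~\ref{tau} to get $\tst\top t^\star$, feed $\tau_n=\tst$ into Theorem~\ref{normality}, use $W_n(\tst)=-1$ to extract the fluctuations of the terminal time, and then propagate these into $\clubsuit_n(\tst)$ via a first-order (mean-value) expansion of the smooth centering functions. The paper also uses Skorokhod representation to upgrade the functional convergence to a.s.\ convergence before evaluating at the random endpoint, which is exactly the device you allude to.

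There is one slip worth flagging. You expand $\hf^{(n)}_W$ around $t^\star$ and assert that $n^{1/2}(\tst-t^\star)$ is $O_p(1)$ and converges. That step is not justified: the expansion produces the extra deterministic term $n^{1/2}\hf^{(n)}_W(t^\star)=n^{1/2}f^{(n)}_W(\hz)$, and nothing in the assumptions gives a rate on $f^{(n)}_W(\hz)\to 0$, so $n^{1/2}(\tst-t^\star)$ need not be tight. Your subsequent cancellation argument does rescue the final result, because the same uncontrolled term appears in $n^{1/2}(\hp_n-\hz)$ via $0=f^{(n)}_W(\hp_n)=f^{(n)}_W(\hz)+f'_W(\hz)(\hp_n-\hz)+o(\hp_n-\hz)$, and the two contributions cancel exactly in $n^{1/2}\bigl(\hf^{(n)}_\clubsuit(\tst)-f^{(n)}_\clubsuit(\hp_n)\bigr)$. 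But the clean way to avoid this detour is what the paper does: expand around $\widehat t_n:=-\ln\hp_n$ rather than $t^\star$. Then $\hf^{(n)}_W(\widehat t_n)=0$ identically, the mean-value theorem gives directly
\[
\tst-\widehat t_n \;=\; \frac{\hf^{(n)}_W(\tst)}{(\hf^{(n)}_W)'(\xi_n)} \;=\; n^{-1/2}\,(\hz\alpha)^{-1}Z_W(t^\star)+o_p(n^{-1/2}),
\]
and a second mean-value expansion of $\hf^{(n)}_\clubsuit$ between $\tst$ and $\widehat t_n$ yields \eqref{norm_tau} in one line, with no cancellation to track. This also makes transparent why the centering must be $f^{(n)}_\clubsuit(\hp_n)$: it is $\hf^{(n)}_\clubsuit(\widehat t_n)$.
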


The proof of the theorem is provided in Appendix~\ref{sec:normalityFinal}.

\section{Quantifying Systemic Risk}\label{sec:sys}
In order to determine the health of the financial network, we consider in this section a \emph{systemic risk measure} applied to the (random) financial network, introduced in previous sections. These measures are decomposed as $\rho \circ \Gamma$ for a stand-alone risk measure (usually assumed convex) $\rho$ and an aggregation function $\Gamma=\Gamma(\beps)$ for losses under the stress scenario $\beps$. This was first introduced in~\cite{chen2013axiomatic,kromer2013systemic}; see also~\cite{feinstein2017measures, amini2020optimal}.

\medskip

The following three aggregation functions has been considered in the literature. At time $t$, for the economy $\cE_n$ and given shock scenario $\beps$, we let:
\begin{itemize}
\item {\bf Number of solvent banks}: $\Gamma_n^\#(t) := S_n(t)=n-D_n(t)$.
\item {\bf External wealth}: Let $\bar{\Gamma}_n^{\odot} $ denotes the total external wealth to society if there is no default in the financial system (small shock regime). 
We define the external wealth (societal) aggregation function as $$\Gamma_n^{\odot}(t) := \bar{\Gamma}^{\odot}_n -\sum_{x\in \cX} \bar{L}^{\odot}_xD_x^{(n)}(t),$$
where $D_x^{(n)}(t)=n\mu_x^{(n)} - \sum_{\theta}\sum_{\ell=0}^{\theta-1}S_{x,\theta,\ell}^{(n)}(t)$ denotes the total number of defaulted institutions with type $x\in \cX$ at time $t$. Note that (for simplicity) we assume a bounded constant type-dependent societal loss $\bar{L}^{\odot}_x$ over each defaulted institution. We discuss in Appendix~\ref{app:Ext}, how one can extend the model to random i.i.d. (type-dependent) losses (with bounded support).
\item {\bf System-wide wealth} : Let $\bar{\Gamma}_n^{\Diamond} $ denotes the total wealth in the financial system if there is no default in the system. 
We define the system-wide aggregation function as   $$\Gamma_n^{\Diamond}(t) := \bar{\Gamma}^{\Diamond}_n-\sum_{x\in \cX} \bar{L}^{\odot}_xD_x^{(n)}(t)- \sum_{x\in \cX} \bar{L}^{\Diamond}_x \sum_{\theta=1}^{d_x^+}\sum_{\ell=1}^{\theta-1} \ell S^{(n)}_{x, \theta, \ell} (t),$$
We consider a fixed (type-dependent) societal cost $\bar{L}^{\odot}_x$ for defaulted institutions and a fixed (host institutions' type-dependent) bounded cost $\bar{L}^{\Diamond}_x$ over each defaulted links. The possible extensions to independent random losses will be discussed in Appendix~\ref{app:Ext}.
\end{itemize}

For the aggregation function $\Gamma_n^\#(t)$, we already stated the limit theorems in Section~\ref{sec:main}. Since the societal aggregation function $\Gamma_n^{\odot}$ can be seen as a particular case of  system-wide aggregation function $\Gamma_n^{\Diamond}$ (by setting $\bar{L}^{\Diamond}_x=0$), we only state limit theorems for $\Gamma_n^{\Diamond}$. 

\medskip

In order to state limit theorems, it is natural to assume that $\bar{\Gamma}_n^{\Diamond}/n\rightarrow\bar{\Gamma}^{\Diamond}$ when the size of network $n\rightarrow\infty$. Let us define
$$f^{(n)}_{\Diamond}(z):=\bar{\Gamma}^{\Diamond}_n/n-\sum_{x\in \cX} \bar{L}^{\odot}_x f^{(n)}_D(z)
 - \sum_{x\in \cX} \bar{L}^{\Diamond}_x \sum_{\theta=1}^{d_x^+}\sum_{\ell=1}^{\theta-1} \ell s^{(n)}_{x, \theta, \ell} (z),$$
$$f_{\Diamond}(z):=\bar{\Gamma}^{\Diamond}-\sum_{x\in \cX} \bar{L}^{\odot}_x f_D(z)
 - \sum_{x\in \cX} \bar{L}^{\Diamond}_x \sum_{\theta=1}^{d_x^+}\sum_{\ell=1}^{\theta-1} \ell s_{x, \theta, \ell} (z).$$
Similarly we also set
\begin{equation*}
  \hf^{(n)}_{\Diamond}(t) =f^{(n)}_{\Diamond}(e^{-t}),\ \ \hf_{\Diamond}(t)=f_{\Diamond}(e^{-t}).
\end{equation*}

\medskip

By applying Theorem~\ref{centrality} and Theorem~\ref{thm:LLN}, under Assumption~\ref{cond-average}, the following holds.

\begin{theorem}\label{thm-agg-LLN}
Let Assumption~\ref{cond-average} holds and $\tau_n\leq \tau^\star_n$ be a stopping time such that $\tau_n\top t_0$ for some $t_0>0$. Then, as $n\to \infty$,
\begin{align}\label{Gamma_centra}
\sup\limits_{t\leq \tau_n}\bigl|\frac{\Gamma_n^{\Diamond}(t) }{n}-f_{\Diamond}(e^{-t})\bigr|\top 0.
\end{align}
Further, the final (system-wide) aggregation functions satisfy:
\begin{itemize}
\item[(i)] If $\hz=0$ then asymptotically almost all institutions default during the cascade and
\begin{align*}
\frac{\Gamma_n^{\Diamond}(\tst)}{n} \top\bar{\Gamma}^{\Diamond} -\sum_{x\in \cX}\mu_x \bar{L}^{\odot}_x.
\end{align*}
\item[(ii)] If $\hz\in\left(0,1\right]$ and $\hz$ is a stable solution, i.e. $f_W'(\hz)>0$, then
\begin{align*}
\frac{\Gamma_n^{\Diamond}(\tst)}{n}\top f_{\Diamond}(\hz).
\end{align*}
\end{itemize}

\end{theorem}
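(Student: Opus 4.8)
The plan is to obtain Theorem~\ref{thm-agg-LLN} as a direct consequence of the law of large numbers already established, since $\Gamma_n^{\Diamond}(t)$ is a fixed finite linear combination of the quantities $D_n(t)$ and $S^{(n)}_{x,\theta,\ell}(t)$ whose limits are controlled by Theorem~\ref{centrality}. First I would recall that, under Assumption~\ref{cond-average} together with Assumption~\ref{cond-limit}, we have $\bar{\Gamma}^{\Diamond}_n/n \to \bar{\Gamma}^{\Diamond}$, $\mu_x^{(n)}\to\mu_x$, $q_x^{(n)}(\theta)\to q_x(\theta)$, and the coefficients $\bar L^{\odot}_x$, $\bar L^{\Diamond}_x$ are bounded constants; hence $f^{(n)}_{\Diamond}(z)\to f_{\Diamond}(z)$ uniformly in $z\in[0,1]$ (the sum over $\cX$ being handled by dominated convergence, using that $\sum_x\mu_x^{(n)}d_x^+=\lambda^{(n)}\to\lambda<\infty$ bounds the tails of the $\ell$-sums). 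Writing $\hf_{\Diamond}(t)=f_{\Diamond}(e^{-t})$ this gives $\sup_{t\le t_0}|\hf^{(n)}_{\Diamond}(t)-\hf_{\Diamond}(t)|\to 0$.

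Next I would prove~\eqref{Gamma_centra}. By definition,
\[
\frac{\Gamma_n^{\Diamond}(t)}{n}-f_{\Diamond}(e^{-t})
= \Bigl(\frac{\bar{\Gamma}^{\Diamond}_n}{n}-\bar{\Gamma}^{\Diamond}\Bigr)
- \sum_{x\in\cX}\bar L^{\odot}_x\Bigl(\tfrac{D_x^{(n)}(t)}{n}-f_D(e^{-t})\mu_x/\!\!\sum\Bigr) - \dots,
\]
but more cleanly: $D_x^{(n)}(t)/n = \mu_x^{(n)} - \sum_{\theta}\sum_{\ell=0}^{\theta-1} S^{(n)}_{x,\theta,\ell}(t)/n$, and $D_n(t)=\sum_x D_x^{(n)}(t)$, so $\Gamma_n^{\Diamond}(t)/n$ is the sum of $\bar\Gamma^{\Diamond}_n/n$ minus a linear combination (with bounded coefficients) of the ratios $S^{(n)}_{x,\theta,\ell}(t)/n$ and $D_n(t)/n$. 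Theorem~\ref{centrality} gives $\sup_{t\le\tau_n}|S^{(n)}_{x,\theta,\ell}(t)/n-\mu_x q_x(\theta)b(d_x^+,1-e^{-t},\ell)|\top 0$ for each fixed triple and $\sup_{t\le\tau_n}|D_n(t)/n-f_D(e^{-t})|\top 0$. I would then combine these with a truncation argument over $\cX$: fix $\varepsilon>0$, choose a finite $\cX_0\subset\cX$ so that $\sum_{x\notin\cX_0}\mu_x d_x^+$ (hence the corresponding tail of both the $D$-sum and the $\ell S$-sum) is $<\varepsilon$ uniformly, use that the same tail bound holds for $\mu_x^{(n)}$ for large $n$ by Assumption~\ref{cond-average} (uniform integrability of $D^+_n$), apply the finitely many convergences in probability from Theorem~\ref{centrality} on $\cX_0$, and let $\varepsilon\to 0$. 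This yields~\eqref{Gamma_centra}.

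For part~(ii), I would apply Theorem~\ref{thm:LLN}(ii): when $\hz\in(0,1]$ is a stable solution, $\tau^\star_n\top -\ln\hz=:t^\star$ by Lemma~\ref{tau}, and $S^{(n)}_{x,\theta,\ell}/n\top \mu_x q_x(\theta)b(d_x^+,1-\hz,\ell)$ and $|\cD_{n-1}|/n\top f_D(\hz)$. Feeding these into the definition of $\Gamma^{\Diamond}_n(\tst)/n$ (again with the same $\cX$-truncation to pass the limit inside the sum) gives $\Gamma^{\Diamond}_n(\tst)/n\top f_{\Diamond}(\hz)$. For part~(i), Theorem~\ref{thm:LLN}(i) gives $|\cD_{n-1}|=n-o_p(n)$, so $D_x^{(n)}(\tst)/n\top\mu_x$ for every $x$ (all institutions of every type eventually default) and all the $S^{(n)}_{x,\theta,\ell}(\tst)/n\top 0$; substituting, the link term $\sum_x\bar L^{\Diamond}_x\sum_{\theta,\ell}\ell s^{(n)}_{x,\theta,\ell}$ vanishes in the limit and only $\bar\Gamma^{\Diamond}-\sum_x\mu_x\bar L^{\odot}_x$ survives. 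The main obstacle is the interchange of the limit with the (possibly infinite) sum over $\cX$ in the term $\sum_{x}\bar L^{\Diamond}_x\sum_{\theta=1}^{d_x^+}\sum_{\ell=1}^{\theta-1}\ell\, s^{(n)}_{x,\theta,\ell}(z)$, where the inner factor grows like $d_x^+$; this is exactly where Assumption~\ref{cond-average} (finite limiting mean degree, equivalently uniform integrability of $D_n^+$) is needed, and I would make that truncation step explicit rather than appeal to it implicitly.
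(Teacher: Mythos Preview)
Your proposal is correct and follows essentially the same strategy as the paper: express $\Gamma_n^{\Diamond}(t)$ as a bounded-coefficient linear combination of the $S^{(n)}_{x,\theta,\ell}(t)$ and $D^{(n)}_x(t)$, invoke the pointwise-in-$(x,\theta,\ell)$ uniform laws from Theorem~\ref{centrality}, and pass to the infinite sum over $\cX$ via a first-moment truncation using Assumption~\ref{cond-average}. The paper does exactly this, referring back to the estimates \eqref{U_limit} and \eqref{result} from the proof of Theorem~\ref{centrality} (your ``choose a finite $\cX_0$'' is precisely their $\cX_{K_\varepsilon}$ argument).

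The one place where the paper is slightly more economical is in deriving (i) and (ii). Once \eqref{Gamma_centra} is established, the paper simply evaluates it at $t=\tst$, uses Lemma~\ref{tau} to get $e^{-\tst}\top\hz$ (respectively $e^{-\tst}\top 0$), and appeals to continuity of $f_{\Diamond}$ on $[0,1]$; no second truncation is needed. Your route through Theorem~\ref{thm:LLN} and a repeated truncation is perfectly valid but does a little extra work. In particular, for (i) your claim that the link term vanishes is correct, since $\sum_{\ell<\theta}\ell\,S^{(n)}_{x,\theta,\ell}(\tst)\le d_x^+ S^{(n)}_x(\tst)$ and $\sum_{x\notin\cX_0}d_x^+\mu_x^{(n)}$ is small---but this is exactly the same tail control already used in proving \eqref{Gamma_centra}, so you may as well reuse \eqref{Gamma_centra} directly as the paper does.
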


The proof of the above theorem is provided in Appendix~\ref{sec:thm-agg-LLN}.

\medskip

We next consider the central limit theorems for the societal and system-wide aggregation functions. By applying Theorem~\ref{normality}, under Assumption~\ref{cond:CLT}, the following holds.

\begin{theorem}\label{thm-CLT-Sys}
Let Assumption~\ref{cond:CLT} holds and $\tau_n\leq \tau^\star_n$ be a stopping time such that $\tau_n\top t_0$ for some $t_0>0$. Then jointly in $\cD\left[0,\infty\right)$,
\begin{align}\label{norm_risk}
n^{-1/2}\left(\Gamma_n^{\Diamond}(t\wedge\tau_n) -n\hf^{(n)}_{\Diamond}(t\wedge\tau_n)\right)\tod Z_{\Diamond}(t\wedge t_0),
\end{align}
where $Z_{\Diamond}$ is a continuous Gaussian process on $[0,t_0]$ with mean 0 and variance
$$\bigsigma_{\Diamond}^2(t)=\bigsigma_{1,1}(e^{-t})+2\bigsigma_{1,2}(e^{-t})+\bigsigma_{2,2}(e^{-t}),$$
where $\bigsigma_{i,j}(y), i,j=1,2$, are given by \eqref{sigma_11}-\eqref{sigma_12}.

Moreover, the final system-wide aggregation function satisfy:
\begin{itemize}
\item[(i)] If $\hz=0$ then asymptotically almost all institutions default during the cascade and
\begin{align*}
\frac{\Gamma_n^{\Diamond}(\tst)}{n}\top \bar{\Gamma}^{\Diamond} -\sum_{x\in \cX}\mu_x \bar{L}^{\odot}_x.
\end{align*}
\item[(ii)] If $\hz\in\left(0,1\right]$ and $\hz$ is a stable solution, i.e. $\alpha:=f_W'(\hz)>0$, then we have
\begin{align*}
n^{-1/2}\bigl(\Gamma_n^{\Diamond}(\tst) -n\hf^{(n)}_{\Diamond}(\hp_n)\bigr)\tod \cZ_{\Diamond},
\end{align*}
where $\cZ_{\Diamond}$ is a centered Gaussian random variable with variance $\Sigma_\Diamond$ satisfying 
$$\Sigma_{\Diamond}=\bigsigma_{\Diamond}^2(t^{\star})+\Delta(\hz)^2\bigsigma_{W,W}(\hz)-2\Delta(\hz)\bigsigma_{\Diamond,W}(\hz),$$
and
$$\Delta(z):=(f'_W(z))^{-1}f'_{\Diamond}(z), \bigsigma_{\Diamond,W}(e^{-t}):=\Cov(Z_{\Diamond}(t),Z_{W}(t)).$$
The form of $\bigsigma_{\Diamond,W}(y)$ is given by \eqref{sigma_boxW}-\eqref{sigma_boxW2}.
\end{itemize}

\end{theorem}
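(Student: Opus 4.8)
The key structural observation is that, for each fixed $n$, $\Gamma_n^{\Diamond}(t)$ is a deterministic affine functional of the array $\bigl(S^{(n)}_{x,\theta,\ell}(t)\bigr)$. Substituting $D_x^{(n)}(t)=n\mu_x^{(n)}-\sum_{\theta}\sum_{\ell=0}^{\theta-1}S^{(n)}_{x,\theta,\ell}(t)$ into the definition gives
\[
\Gamma_n^{\Diamond}(t)=c_n+\sum_{x\in\cX}\sum_{\theta=1}^{d_x^+}\sum_{\ell=0}^{\theta-1}\bigl(\bar L^{\odot}_x-\ell\,\bar L^{\Diamond}_x\bigr)S^{(n)}_{x,\theta,\ell}(t),\qquad c_n:=\bar\Gamma_n^{\Diamond}-\sum_{x}\bar L_x^{\odot}n\mu_x^{(n)},
\]
and the centering $n\hf^{(n)}_{\Diamond}(t)$ is obtained by applying the same affine map to $n s^{(n)}_{x,\theta,\ell}(e^{-t})$, using the identity $f_S^{(n)}(z)=\sum_{x,\theta,\ell}s^{(n)}_{x,\theta,\ell}(z)$. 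Consequently, were $\cX$ finite, \eqref{norm_risk} would follow at once from the joint convergence in Theorem~\ref{thm-CLT-S} / Theorem~\ref{normality} and the continuous mapping theorem (a fixed linear combination is continuous on $\cD[0,\infty)$ at the continuous limit points), the limit being the corresponding linear combination of the $Z_{x,\theta,\ell}$, equivalently of $Z_S,Z_{H^+},Z_{I^+},Z_W$; its variance then assembles into $\bigsigma_{\Diamond}^2$ by bilinearity of the covariance, with jointness relative to $Z_W$ retained.

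The one substantive point is that $\cX$ may be countably infinite and the weights $\bar L^{\odot}_x-\ell\bar L^{\Diamond}_x$ grow with $d_x^+$ (since $\ell\le d_x^+$). I would deal with this by a truncation exactly in the spirit of the proof of Theorem~\ref{thm-CLT-S}: fix $K$, split the sum over $x$ into $\{d_x^+\le K\}$ and $\{d_x^+>K\}$. On the finite part the previous paragraph applies verbatim. For the tail, the contributions of distinct types are independent by the balls-and-bins construction, so the variance of $n^{-1/2}\sum_{d_x^+>K}(\cdots)$ is a sum of per-type variances; each per-type variance is bounded crudely by $C(\bar L)^2(d_x^+)^{2}\mu_x^{(n)}$ (using $0\le S^{(n)}_{x,\theta,\ell}\le n\mu_x^{(n)}$ together with the second-moment estimates already established for the $S^{(n)}_{x,\theta,\ell}$), and Assumption~\ref{cond:CLT}, which dominates any polynomial weight by an exponential one, forces $\sum_{d_x^+>K}(d_x^+)^2\mu_x^{(n)}\to 0$ uniformly in $n$ as $K\to\infty$. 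Combined with tightness of the truncated processes this makes the tail asymptotically negligible in $\cD[0,\infty)$; letting $n\to\infty$ and then $K\to\infty$ yields \eqref{norm_risk}. This uniform-in-$n$ tail control over $\cX$ is the main obstacle; the rest is bookkeeping of the covariance formulas.

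For the final state, case (i) ($\hz=0$) is not new: Assumption~\ref{cond:CLT} implies Assumption~\ref{cond-average}, so the claim is Theorem~\ref{thm-agg-LLN}(i). For case (ii), put $t^{\star}:=-\ln\hz$; by Lemma~\ref{tau}(ii), $\tst\top t^{\star}$, and by Theorem~\ref{normality} the pair $\bigl(\Gamma_n^{\Diamond},W_n\bigr)$, suitably centered and rescaled by $n^{-1/2}$, converges jointly in $\cD[0,\infty)$ to $\bigl(Z_{\Diamond},Z_W\bigr)$. It remains to pass from the deterministic time $t^{\star}$ to the random time $\tst$, which is the same delta-method-at-a-random-time argument as in the proof of Theorem~\ref{normalityFinal}: since $W_n(\tst)=-1=o_p(n^{1/2})$ and $\hf^{(n)}_W(t^{\star})=f^{(n)}_W(\hz)\to f_W(\hz)=0$, one gets $\tst-t^{\star}=-\bigl(n\,\partial_t\hf^{(n)}_W(t^{\star})\bigr)^{-1}\bigl(W_n(t^{\star})-n\hf^{(n)}_W(t^{\star})\bigr)+o_p(n^{-1/2})$, with $\partial_t\hf^{(n)}_W(t^{\star})\to -\hz f_W'(\hz)=-\hz\alpha\neq 0$ supplying the required transversality; substituting into the smooth curve $t\mapsto n\hf^{(n)}_{\Diamond}(t)$ and using $\hp_n\to\hz$ (also from stability of $\hz$) gives
\[
n^{-1/2}\bigl(\Gamma_n^{\Diamond}(\tst)-n\hf^{(n)}_{\Diamond}(\hp_n)\bigr)\tod Z_{\Diamond}(t^{\star})-\Delta(\hz)Z_W(t^{\star}),
\]
i.e.\ \eqref{norm_tau} for $\clubsuit=\Diamond$, with $\Delta(\hz)=(f_W'(\hz))^{-1}f_{\Diamond}'(\hz)$ (the common chain-rule factor $-\hz$ cancels in the ratio of $t$-derivatives). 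The right-hand side is centered Gaussian with variance $\Var(Z_{\Diamond}(t^{\star}))+\Delta(\hz)^2\Var(Z_W(t^{\star}))-2\Delta(\hz)\Cov(Z_{\Diamond}(t^{\star}),Z_W(t^{\star}))=\bigsigma_{\Diamond}^2(t^{\star})+\Delta(\hz)^2\bigsigma_{W,W}(\hz)-2\Delta(\hz)\bigsigma_{\Diamond,W}(\hz)=\Sigma_{\Diamond}$, as claimed, the cross-covariance $\bigsigma_{\Diamond,W}$ being read off from the stronger covariance version of Theorem~\ref{thm-CLT-S} through the affine representation above.
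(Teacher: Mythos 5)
Your proposal is correct and follows essentially the same route as the paper: write $\Gamma_n^{\Diamond}$ as a fixed affine functional of the basic bin-counts (the paper does so via the $V_{x,\theta,s}^{(n)}$, you via the $S^{(n)}_{x,\theta,\ell}$, which is the same thing after Abel summation), feed it through the joint CLT of Theorem~\ref{normality}, control the infinite sum over $\cX$ by truncation using boundedness of $\bar L^{\odot}_x,\bar L^{\Diamond}_x$ together with Assumption~\ref{cond:CLT}, and finish the fixed-point part by the same Skorokhod--representation/mean-value (delta-method-at-a-random-time) argument used for Theorem~\ref{normalityFinal}. The only small inaccuracy is that the tail-sum machinery you invoke is actually developed in the proof of Theorem~\ref{normality}, not Theorem~\ref{thm-CLT-S}, but the argument itself is the one the paper uses.
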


The proof of the above theorem is provided in Appendix~\ref{sec:thm-CLT-Sys}.

\section{Targeting Interventions in Financial Networks}\label{sec:intervene}

In this section we consider a planner (lender of last resort or government) who seeks to minimize the systemic risk at the beginning of  the financial contagion, after an exogenous macroeconomic shock $\beps$, subject to a budget constraint. As discussed in Section~\ref{sec:sys}, we assume that the systemic risk is represented by $\rho(\Gamma_n^{\Diamond})$, for some convex function $\rho$ applied to system-wide wealth. Note that since we study the interventions for a given shock $\beps$, the uncertainty (in stress scenario) for the risk measure $\rho$ is only on the network structure (which is assumed to be uniformly at random).  The planner only has information regarding the type of each institution and, consequently, the institutions' threshold distributions.  Hence, the planner' decision is only based on the type of each institution. 

The timeline is as follows. 
At time $t=0$, the financial network is subject to an economic shock $\beps$. At time $t=1$, the planer (observing the external shock $\beps$) calculates the threshold distribution $q_x(.)$ for each $x\in \cX$. Then she makes decisions, under some budget constraint, on the number (fraction) of interventions over all defaulted links leading to any institutions with any type $x\in \cX$. When the planner intervenes on a defaulting bank, its threshold (distance to default) increases by 1. 
These interventions will be type-dependent and at random over all defaulted links leading to the same type institutions. 

For $x\in \cX$, let us denote by $\alpha_x^{(n)}$ the planner intervention decision on the fraction of the  saved links leading to any institution of type $x\in \cX$. We assume that $\alpha_x^{(n)} \to \alpha_x$ for  all $x\in \cX$, and some constants $\alpha_x$ independent of $n$. Let $\balpha_n=\left\{\alpha^{(n)}_x\right\}_{x\in \cX}$, and, $\Gamma_n^{\Diamond}(\balpha_n)$ denotes the system-wide wealth under the intervention decision $\balpha_n$. Further, $S_{x,\theta,\ell}^{(n)}(\balpha_n)$ denotes the number of solvent banks with type $x$, threshold $\theta$ and $\ell$ defaulted neighbors under the intervention decision $\balpha_n$. Similarly, $D_n(\balpha_n)$ denotes the total number of defaults under intervention $\balpha_n$. 


Let $C_x\in \RR^+$ denotes the cost associated to saving any defaulted link leading to an institution of type $x\in \cX$. We assume that $C_x$ is a bounded function. We denote by $\Phi_n(\balpha_n)$ the total cost associated to the planner for the intervention strategy $\balpha_n$.

We next state a limit theorem regarding the number of solvent institutions, defaulted institutions, the total aggregate wealth of the financial system and the total cost of intervention for the planner, under the intervention decision $\balpha_n$. 

Let us define
$$f_{W}^{(\balpha)}(z):= \lambda z - \sum_{x\in \cX} \mu_x d_x^-\sum_{\theta=1}^{d_x^+}q_x(\theta) \beta\bigl(d_x^+,\alpha_x +(1-\alpha_x)z, d_x^+-\theta+1\bigr),$$
and, 
 \begin{align}\label{hz:alpha}
 \hz_{\balpha}:=\sup\bigl\{z\in[0,1]: f^{(\balpha)}_W(z)=0\bigr\}.
 \end{align}

\begin{theorem}\label{planner-LLN}
Let Assumption~\ref{cond-average} holds and $\balpha_n \to \balpha$ as $n\to \infty$. If $z^{\star}_{\balpha}$ is a stable solution, then as $n \to \infty$:

\begin{itemize}
\item[(i)] For all $x\in\cX, \theta=1, \dots, d_x^+$ and $\ell=0, \dots, \theta-1$, the final fraction of solvent institutions with type $x$, threshold $\theta$ and $\ell$ defaulted neighbors under intervention $\balpha_n$ converges to
\begin{align*}
\frac{S_{x,\theta,\ell}^{(n)}(\balpha_n)}{n} \top s^{(\balpha)}_{x, \theta, \ell}(\hz_{\balpha}):=\mu_x q_x(\theta)b\left(d_x^+,(1-\alpha_x)(1-\hz_{\balpha}),\ell\right).
\end{align*}
\item[(ii)] The total number of defaulted institutions under intervention $\balpha_n$ converges to:
\begin{align*}
\frac{D_n(\balpha_n)}{n} \top  f_D^{(\balpha)}(\hz_{\balpha}):= 1- \sum_{x\in \cX} \mu_x \sum_{\theta=1}^{d_x^+}q_x(\theta) \beta\bigl(d_x^+, \alpha_x +(1-\alpha_x)\hz_{\balpha}, d_x^+-\theta+1\bigr).
\end{align*}
\item[(iii)] The system-wide wealth under the intervention decision $\balpha_n$ converges to
\begin{align*}
\frac{\Gamma_n^{\Diamond}(\balpha_n)}{n} \top f^{(\balpha)}_{\Diamond}(\hz_{\balpha}):=\bar{\Gamma}^{\Diamond}-\sum_{x\in \cX} \bar{L}^{\odot}_x f_D^{(\balpha)}(\hz_{\balpha})
 - \sum_{x\in \cX} \bar{L}^{\Diamond}_x \sum_{\theta=1}^{d_x^+}\sum_{\ell=1}^{\theta-1} \ell s^{(\balpha)}_{x, \theta, \ell} (\hz_{\balpha}).
 \end{align*}
 \item[(iv)] The total cost of interventions $\balpha_n$ for the planner converges to
 \begin{align*}
\frac{\Phi_n(\balpha_n)}{n} \top \phi(\hz_{\balpha}):= \sum_{x\in \cX} \mu_x  \alpha_x C_x \sum_{\ell=1}^{d_x^+} \ell b\left(d_x^+,1-\hz_{\balpha},\ell\right).
\end{align*}
\end{itemize}
\end{theorem}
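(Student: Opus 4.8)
The plan is to realise the intervention model inside the death-process / balls-and-bins framework of Section~\ref{sec:deathFin} by a \emph{type-dependent thinning} of the infection, and then to apply Theorem~\ref{centrality}, Lemma~\ref{tau} and Theorem~\ref{thm:LLN} almost verbatim, the only change being the substitution $z\mapsto\alpha_x+(1-\alpha_x)z$ in the transition rule of every bin of type $x$. The first step is to recast the planner's action. Since the planner saves an $\alpha_x^{(n)}$-fraction, chosen uniformly at random, of the defaulted in-links of type-$x$ institutions, an equivalent description is obtained by attaching, independently to each in-half-edge $h\in\cH_i^+$ of every institution $i$ of type $x$, a Bernoulli$(1-\alpha_x^{(n)})$ label, calling $h$ \emph{counting} on $1$ and \emph{non-counting} (saved) on $0$. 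We then run the death process of Section~\ref{sec:deathFin} unchanged, except that an $\bS$-bin of type $x$ with initial threshold $\Theta_i$ becomes a $\bD$-bin only once it has accumulated $\Theta_i$ counting in-balls that are matched to infected ($\bIn$) out-balls. By the exchangeability of the configuration model and of the within-type intervention --- the same coupling already used for the initial threshold assignment in Section~\ref{sec:deathFin} --- this construction realises jointly and with the correct law the quantities $S_{x,\theta,\ell}^{(n)}(\balpha_n)$, $D_n(\balpha_n)$, $\Gamma_n^{\Diamond}(\balpha_n)$ and $\Phi_n(\balpha_n)$. Conditionally on $z=e^{-t}$, each in-ball of a type-$x$ bin is counting and matched to an infected out-ball with probability $(1-\alpha_x)(1-z)$, independently across in-balls, so a type-$x$, threshold-$\theta$ bin is still solvent at time $t$ iff $\Bin(d_x^+,(1-\alpha_x)(1-z))\le\theta-1$, an event of probability $\beta\bigl(d_x^+,\alpha_x+(1-\alpha_x)z,d_x^+-\theta+1\bigr)$, which is exactly the announced substitution.

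Rerunning the proof of Theorem~\ref{centrality} for this modified process --- the drifts of $S_{x,\theta,\ell}^{(n)}$, of $H_n^+$, $I_n^+$ and of the number of white $\bIn$ balls change only through the replacement of $(1-z)$ by $(1-\alpha_x)(1-z)$ for type $x$, and Assumption~\ref{cond-average} together with $\alpha_x^{(n)}\to\alpha_x$ is all that the martingale/fluid-limit estimates require --- yields, uniformly on $\{t\le\tau_n\}$, that $S_{x,\theta,\ell}^{(n)}(\balpha_n)(t)/n\top\mu_xq_x(\theta)\,b\bigl(d_x^+,(1-\alpha_x)(1-e^{-t}),\ell\bigr)$, that $D_n(\balpha_n)(t)/n\top f_D^{(\balpha)}(e^{-t})$, and that the normalised number of white balls converges to $f_W^{(\balpha)}(e^{-t})$. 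Running the argument of Lemma~\ref{tau} on this last process, with $\hz_{\balpha}$ the largest and, by hypothesis, stable root of $f_W^{(\balpha)}$, gives $\tst\top-\ln\hz_{\balpha}$; evaluating the uniform limits above at $t=-\ln\hz_{\balpha}$ exactly as in Theorem~\ref{thm:LLN} then proves (i) and (ii).

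Part (iii) follows by linearity and the continuous mapping theorem from $\Gamma_n^{\Diamond}(\balpha_n)/n=\bar\Gamma_n^{\Diamond}/n-\sum_x\bar L_x^{\odot}D_x^{(n)}(\balpha_n)/n-\sum_x\bar L_x^{\Diamond}\sum_{\theta}\sum_{\ell=1}^{\theta-1}\ell\,S_{x,\theta,\ell}^{(n)}(\balpha_n)/n$, where the boundedness of $\bar L^{\odot},\bar L^{\Diamond}$ together with $\sum_x\mu_xd_x^+<\infty$ (Assumption~\ref{cond-average}) supplies the domination needed to exchange the $n\to\infty$ limit with the sums over $x,\theta,\ell$. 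For (iv), the saved links leading to type-$x$ institutions are precisely their in-half-edges that are simultaneously non-counting and matched to an infected out-ball; by the fluid limit for the per-type count of defaulted in-links (the same death-process bookkeeping, with thinned infection probability $(1-\alpha_x)(1-\hz_{\balpha})$ for the counting in-balls), this number is $n\mu_x\alpha_x d_x^+(1-\hz_{\balpha})+o_p(n)=n\mu_x\alpha_x\sum_{\ell=1}^{d_x^+}\ell\,b(d_x^+,1-\hz_{\balpha},\ell)+o_p(n)$, and multiplying by $C_x$ and summing (again dominated by $\sum_x\mu_xd_x^+<\infty$ since $C_x$ is bounded) gives $\Phi_n(\balpha_n)/n\top\phi(\hz_{\balpha})$.

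Everything past the first step is a mechanical transcription of results already proved in the paper, so I expect the only genuinely new point --- and hence the main obstacle --- to be that first step: verifying rigorously that the type-wise uniformly random choice of saved links, superimposed on the random configuration, is truly equivalent in distribution to the independent per-half-edge Bernoulli labelling, so that the modified process stays within the scope of Theorem~\ref{centrality} and the independence across types (needed for the aggregated limits) is preserved.
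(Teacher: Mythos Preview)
Your proposal is correct and follows essentially the same route as the paper. The paper frames the intervention as type-dependent bond percolation \`a la \cite{janson2009percolation}, removing each incoming link to a type-$x$ institution independently with probability $\alpha_x$ before running the death process on the percolated graph; this is exactly your Bernoulli ``counting/non-counting'' labelling, and both lead to the same substitution $e^{-t}\mapsto\alpha_x+(1-\alpha_x)e^{-t}$ in the bin transition probabilities, after which Theorem~\ref{centrality}, Lemma~\ref{tau} and Theorem~\ref{thm:LLN} are re-applied verbatim. Your flagged ``main obstacle''---that a uniformly random choice of a fixed fraction of links is asymptotically equivalent to i.i.d.\ Bernoulli thinning---is indeed the only point not literally covered by the earlier proofs, and the paper handles it the same way you do: it simply adopts the Bernoulli percolation model as the definition of the intervention and does not argue the equivalence separately.
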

The proof of theorem is provided in Appendix~\ref{sec:planner-LLN}. 

\medskip

We conclude that, as $n\to \infty$, the planner optimal decision problem simplifies to 
\begin{align*}
\max_{\balpha} f^{(\balpha)}_{\Diamond}(\hz_{\balpha}):=&\bar{\Gamma}^{\Diamond}-\sum_{x\in \cX} \bar{L}^{\odot}_x f_D^{(\balpha)}(\hz_{\balpha})
 - \sum_{x\in \cX} \bar{L}^{\Diamond}_x \sum_{\theta=1}^{d_x^+}\sum_{\ell=1}^{\theta-1} \ell s^{(\balpha)}_{x, \theta, \ell} (\hz_{\balpha}),\\
 \text{subject to} \quad  & \phi(\hz_{\balpha}):= \sum_{x\in \cX} \mu_x  \alpha_x C_x \sum_{\ell=1}^{d_x^+} \ell b\left(d_x^+,1-\hz_{\balpha},\ell\right) \leq C,
\end{align*}
for some budget constraint $C>0$ and $\hz_{\balpha}$ given by \eqref{hz:alpha}.

\medskip

It would be interesting to extend the model to a continuous-time Markov decision process by the planner, where the links (starting from fundamentally defaulted institutions) are revealed one by one, and consequently, the planner should decide at any time to intervene or not. This would lead to a Markov decision problem and one could solve it (under some regularity assumptions) by using a dynamic programming approach. We refer to \cite{amini2015control} for a similar model in a simpler setup with a core-periphery network structure. We leave this and some related issues to a future work.


\section{Concluding Remarks}
In this paper, we propose a general tractable framework to study default cascades and systemic risk in a heterogeneous financial network, subject to an exogenous macroeconomic shock. We state various limit theorems for the final state of default contagion and systemic risk depending on the network structure and institutions' (observable) characteristics.  

We show that, under some regularity assumptions, the default cascade could be transferred to a death process problem represented by balls-and-bins model. Under suitable assumptions on the degree and threshold distributions, we show that the final size of default cascade and system-wide wealth aggregation functions have asymptotically Gaussian fluctuations. We also show how these results could be used by a social planner  to optimally target interventions during a financial crisis, under partial information of the financial network and under some budget constraint.

The closed form interpretable limit theorems that we provide could also serve as a mandate for regulators to collect data on those specific network characteristics  and assess systemic risk via more intensive computational methods. 

Our results can also be used in a regulatory risk management framework.  The regulator will impose capital requirements on each bank. 
In practice, the capital ratio constraint is the same for all banks. However, using our heterogeneous setup, we could allow the regulator to choose optimally this capital ratio according to the type of the banks. Fixing the time horizon $T$, the regulator's problem is how to choose optimally capital ratio for the institutions of type $x\in \cX$ at time 0 to minimize $\rho\bigl(\Gamma_{\Diamond}(T)\bigr)$. Then the regulator can update the type of each bank and capital ratios at time $T$ according to the new data.

\bibliographystyle{plain}
\bibliography{biblioDEF.bib}

\appendix

\section{Proofs}\label{sec:proofs}
This section contains all the proofs in the main text. We first provide some preliminary results on death processes and martingale theory that will be used in our proofs.
\subsection{Some Death Processes}
The studies of death processes in the balls-and-bins problem rely on the following classical result, see e.g.~\cite[Proposition 4.24]{kallenberg1997foundations}.
\begin{lemma}[The Glivenko-Cantelli Theorem]\label{Gli_cantelli}
Let $T_1,\ldots,T_n$ be i.i.d random variables with distribution function $F(t):=\mathbb{P}(T_i\leq t)$. Let $X_n(t)$ be their empirical distribution function $X_n(t):=\#\{i\leq n:T_i\leq t\}/n$. Then $\sup_{t}|X_n(t)-F(t)|\top 0$ as $n\rightarrow\infty$.
\end{lemma}

Since in the contagion process described in Section~\ref{sec:deathFin}, every in- ball dies independently after an exponential time with parameter one, we consider next a pure death process starting with some number of  balls whose lifetimes are i.i.d $\exp(1)$.

\begin{lemma}[Death Process Lemma] Let $N^{(n)}(t)$ be the number of balls alive at time $t$ in a pure death process with rate 1, and starting with initial number $N^{(n)}(0)=n$. Then
$$\sup\limits_{t\geq 0}|N^{(n)}(t)/n-e^{-t}|\top 0 \quad \quad \quad \text{as} \quad n\to \infty.$$
\end{lemma}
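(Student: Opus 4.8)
The plan is to prove the Death Process Lemma by reducing it to the Glivenko--Cantelli Theorem (Lemma~\ref{Gli_cantelli}). The key observation is that in a pure death process with rate $1$ starting from $n$ balls, each ball has an independent $\exp(1)$ lifetime $T_i$, so a ball is alive at time $t$ precisely when $T_i > t$. Hence $N^{(n)}(t) = \#\{i \leq n : T_i > t\} = n - \#\{i \leq n : T_i \leq t\}$, which gives
\begin{align*}
\frac{N^{(n)}(t)}{n} = 1 - X_n(t),
\end{align*}
where $X_n(t) = \#\{i \leq n : T_i \leq t\}/n$ is the empirical distribution function of the i.i.d.\ sample $T_1, \dots, T_n$.

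Next I would identify the population distribution function: since $T_i \sim \exp(1)$, we have $F(t) = \PP(T_i \leq t) = 1 - e^{-t}$ for $t \geq 0$ (and $F(t) = 0$ for $t < 0$, so the supremum over $t \geq 0$ is what matters). Therefore
\begin{align*}
\sup_{t \geq 0} \Bigl| \frac{N^{(n)}(t)}{n} - e^{-t} \Bigr| = \sup_{t \geq 0} \bigl| (1 - X_n(t)) - (1 - F(t)) \bigr| = \sup_{t \geq 0} |X_n(t) - F(t)| \leq \sup_{t \in \RR} |X_n(t) - F(t)|.
\end{align*}
Applying Lemma~\ref{Gli_cantelli} to the right-hand side yields convergence in probability to $0$, which is exactly the claimed statement.

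Since each step is a short identity or a direct invocation of a cited result, there is really no substantive obstacle here; the only minor point to be careful about is the formal justification that the pure death process with rate $1$ can be represented via i.i.d.\ $\exp(1)$ lifetimes — this follows from the memorylessness of the exponential and the standard competing-exponentials description of the minimum (when $\ell$ balls are alive, the next death occurs after $\exp(\ell)$ time and is uniformly chosen among the survivors), so that the unordered collection of death times is distributed as $n$ i.i.d.\ $\exp(1)$ variables. I would state this representation explicitly as the first line of the proof and then the rest is immediate.
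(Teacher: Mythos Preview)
Your proposal is correct and matches the paper's own proof essentially line for line: the paper also observes that $1 - N^{(n)}(t)/n$ is the empirical distribution function of $n$ i.i.d.\ $\exp(1)$ variables and then invokes the Glivenko--Cantelli theorem. Your additional remark justifying the i.i.d.\ exponential representation of the death process is a welcome clarification, but otherwise there is nothing to add.
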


\begin{pr}
Since $1-N^{(n)}(t)/n$ is the empirical distribution function of the $n$ i.i.d. random variables of distribution $\exp(1)$, whose distribution function is $1-e^{-t}$, thus the result follows by using the Glivenko-Cantelli theorem.\\
\end{pr}


\subsection{Some Martingale Theory}
Our proof of the asymptotic normality for the default contagion is based on a martingale limit theorem by Jacod and Shiryaev~\cite{jacod2013limit}. 
Let $X$ be a martingale defined on $\left[0,\infty\right)$. We denote its quadratic variation by
$[X,X]_t$. We also denote the (bilinear) covariation of two martingales $X$ and $Y$ by $[X,Y]_t$.  In particular, if $X$ and $Y$ are two martingales with path-wise finite variation, then
$$[X,Y]_t:=\sum_{0<s\leq t}\Delta X(s)\Delta Y(s),$$
where $\Delta X(s):= X(s)- X(s-)$ is the jump of $X$ at $s$ and similarly $\Delta Y(s):= Y(s)- Y(s-)$. Note that in this paper, the considered martingales  are always RCLL (right continuous and with left limit) and have only finite
number of jumps. Hence, the quadratic variation will be finite. We also set $[X,Y]_0=0$. For two vector-valued martingales $\bX=(X_i)_{i=1}^{n}$ and $\bY=(Y_j)_{j=1}^{m}$, we define $[\bX,\bY]$ to be the $n\times m$ real matrix with every entry being $([\bX,\bY])_{i,j}=[X_i,Y_j]$.

\medskip

We will use the following martingale limit theorem from \cite{jacod2013limit}.

\begin{theorem}[Martingale Limit Theorem]\label{martingale_clt}
Assume that for each $n$, $\bM^{(n)}(t)=(\bM^{(n)}_i(t))_{i=1}^{q}$ is a $q$-dimensional martingale on $\interval$ with $\bM^{(n)}(0)={\mathbf 0}$, and that $\Sigma(t)$ is a (nonrandom) continuous matrix-valued function satisfying, for every fixed $t\geq 0$,
\begin{itemize}
\item[(i)] $[\bM^{(n)},\bM^{(n)}]_t\top\Sigma(t)$ as  $n\rightarrow\infty$,
\item[(ii)] $\sup_{n}\EE[M^{(n)}_i,M^{(n)}_i]_t<\infty$, for all $i=1,\ldots,q$.
\end{itemize}
Then $\bM^{(n)}\tod \bZ$ as $n\rightarrow\infty$, in $\cD\interval$, where $\bZ$ is a continuous $q$-dimensional Gaussian martingale with $\EE [\bZ(t)]={\mathbf 0}$ and covariances
$$\EE \left[\bZ(t)\bZ'(s)\right]=\Sigma(t), \quad \quad 0\leq t\leq s<\infty.$$
\end{theorem}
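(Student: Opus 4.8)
The plan is to derive the statement from the functional central limit theorem for semimartingales converging to a continuous Gaussian martingale, as developed in \cite{jacod2013limit} (see Chapter~VIII there), by checking that the hypotheses of that general result reduce, in our setting, to hypotheses (i)--(ii) above together with the structural fact --- recalled just before the statement --- that each $\bM^{(n)}$ is piecewise constant with finitely many jumps, all of asymptotically negligible size in the regime considered. Concretely, the general theorem gives convergence in $\cD\interval$ of a sequence of square-integrable martingales started at ${\mathbf 0}$ towards a continuous Gaussian martingale $\bZ$ with a prescribed deterministic quadratic characteristic, provided: (a) the quadratic covariation converges, $[\bM^{(n)},\bM^{(n)}]_t\top\Sigma(t)$ for each fixed $t$, with $\Sigma$ continuous (equivalently, after the usual truncation, the predictable version $\langle\bM^{(n)},\bM^{(n)}\rangle_t$ converges); and (b) the jumps are asymptotically Lindeberg-negligible, that is, $\EE\bigl[\sum_{s\le t}\|\Delta\bM^{(n)}(s)\|^2\,\ind\{\|\Delta\bM^{(n)}(s)\|>\epsilon\}\bigr]\to 0$ for every $\epsilon>0$ and $t\ge 0$.

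First I would note that (a) is exactly hypothesis (i). For (b), observe that in every application of this theorem in the paper $\bM^{(n)}$ is a finite linear combination of compensated pure-jump processes recording, one interaction at a time, the evolution of $O(n)$ balls and bins, so each jump of $\bM^{(n)}$ is deterministically of order $n^{-1/2}$ and hence $\sup_{s\ge 0}\|\Delta\bM^{(n)}(s)\|\to 0$; together with hypothesis (ii), which controls the expected quadratic variation, $\sup_n\EE\bigl[[M^{(n)}_i,M^{(n)}_i]_t\bigr]=\sup_n\EE\bigl[\sum_{s\le t}(\Delta M^{(n)}_i(s))^2\bigr]<\infty$, the indicator in (b) vanishes identically for $n$ large, so (b) holds trivially. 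Granting (a) and (b), the cited theorem yields $\bM^{(n)}\tod\bZ$ in $\cD\interval$ with $\bZ$ a continuous Gaussian martingale, $\bZ(0)={\mathbf 0}$ and $[\bZ,\bZ]_t=\Sigma(t)$; continuity of $\Sigma$ forces $\bZ$ to have no fixed time of discontinuity, hence independent Gaussian increments with $\Cov\bigl(\bZ(t)-\bZ(s)\bigr)=\Sigma(t)-\Sigma(s)$ for $s\le t$. In particular $\EE[\bZ(t)]={\mathbf 0}$, and for $0\le t\le s$, decomposing $\bZ(s)=\bZ(t)+(\bZ(s)-\bZ(t))$ with the second term centered and independent of $\bZ(t)$,
\[
\EE\bigl[\bZ(t)\bZ(s)'\bigr]=\EE\bigl[\bZ(t)\bZ(t)'\bigr]+\EE\bigl[\bZ(t)\bigr]\,\EE\bigl[(\bZ(s)-\bZ(t))'\bigr]=\Sigma(t),
\]
which is the claimed covariance identity.

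If one prefers a self-contained argument in place of the black-box citation, the same two ingredients feed the classical route: finite-dimensional convergence follows from a martingale central limit theorem applied to the increments (using (i) to pin down the limiting covariance and (b) for the Lindeberg condition), and tightness in $\cD\interval$ follows from Aldous's criterion via the bound $\EE\bigl[\|\bM^{(n)}(\tau+\delta)-\bM^{(n)}(\tau)\|^2\mid\mathcal{F}_\tau\bigr]\le\EE\bigl[\mathrm{tr}\bigl([\bM^{(n)},\bM^{(n)}]_{\tau+\delta}-[\bM^{(n)},\bM^{(n)}]_{\tau}\bigr)\mid\mathcal{F}_\tau\bigr]$ over bounded stopping times $\tau$, combined with the continuity of the limit $\Sigma$. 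The vector-valued statement is then recovered either directly or via the Cram\'er--Wold device applied to the one-dimensional martingales $a'\bM^{(n)}$, whose quadratic variations converge to $a'\Sigma a$ by (i). The delicate point --- and the reason the remark on finitely-many-jumps martingales precedes the statement --- is precisely step (b): hypotheses (i)--(ii) by themselves do not force the jumps to be asymptotically negligible, so one must invoke the explicit small-jump structure of the martingales produced by the balls-and-bins construction; once that is available, the remainder is routine.
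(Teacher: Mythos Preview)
The paper does not prove this theorem: it is quoted, with attribution to Jacod--Shiryaev, purely as a tool to be invoked later, so there is no argument in the paper to compare against. Your proposal therefore goes further than the paper does, by sketching how the statement descends from the general functional CLT for semimartingales in that reference; the outline --- check convergence of the quadratic covariation, check a Lindeberg-type jump condition, then read off the covariance of the continuous Gaussian limit from its deterministic bracket --- is the standard and correct route, and your computation of $\EE[\bZ(t)\bZ(s)']=\Sigma(t\wedge s)$ from the independent-increment property of $\bZ$ is fine.

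One point, however, is not quite right. You claim that hypotheses (i)--(ii) alone do not force the jumps to vanish asymptotically, and that one must therefore borrow the $O(n^{-1/2})$ jump bound from the balls-and-bins construction. In fact (i) with \emph{continuous} $\Sigma$ already delivers this: each diagonal entry $[M^{(n)}_i,M^{(n)}_i]$ is non-decreasing and converges in probability, for every fixed $t$, to the continuous function $\Sigma_{ii}(t)$, so a P\'olya/Dini argument upgrades this to uniform convergence on compacts in probability, whence $\sup_{s\le t}\Delta[M^{(n)}_i,M^{(n)}_i]_s\top 0$. Since $\Delta[M^{(n)}_i,M^{(n)}_i]_s=(\Delta M^{(n)}_i(s))^2$, this gives $\sup_{s\le t}|\Delta M^{(n)}_i(s)|\top 0$, and the Lindeberg sum vanishes on an event of probability tending to one. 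Hence the remark before the theorem about finitely many jumps of finite variation is there only to make the bracket calculus elementary, not to smuggle in an extra hypothesis; the theorem stands as stated, and no appeal to the specific structure of the applications is needed. With this correction your derivation is complete.
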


Hence, the above theorem yields joint convergence of the processes $\bigl\{M^{(n)}_i\bigr\}_{i=1}^q$ and could be extended immediately to infinitely many processes (i.e., for the case $q=\infty$). Indeed, by definition, an infinite family of stochastic processes converge jointly if every finite subfamily does. We will use the above theorem for stopped martingales.

%
%
%
%
%

\subsection{Proof of Theorem~\ref{centrality}} \label{sec:thmCentrality}
We denote by $U^{(n)}_{x, \theta,\ell}(t)$ the number of bins (institutions) with type $x\in \cX$, threshold $\theta$ and $\ell$ alive (in-) balls at time $t$. Let $N_{x,\theta}^{(n)}$ denotes the (random) number of bins with type $x$ and threshold $\theta$. Let also $N_x^{(n)}=\sum_\theta N_{x,\theta}^{(n)}$ denotes the number of bins with type $x$.

Note that $U^{(n)}_{x, \theta,0}(0)=N_{x,\theta}^{(n)}$ and $U^{(n)}_{x, \theta,\ell}(0)=0$ for $\ell\geq 1$.
Further, from Assumption~\ref{cond-limit}, $N^{(n)}_{x}/n\rightarrow\mu_x$ and $q^{(n)}_x(\theta)\rightarrow q_x(\theta)$ as $n\rightarrow\infty$, for all $x\in\cX$ and $\theta=0,1,\ldots,d^+_x$. Moreover, by the strong law of large numbers, $N^{(n)}_{x,\theta}/N^{(n)}_x\rightarrow q_x^{(n)}(\theta)$ a.s. as $n\to \infty$.

Consider now the death process as described in Section~\ref{sec:deathFin}.  Let us fix $x\in \cX$ and integers $\theta,r$ with $0\leq r \leq \theta \leq d_x^+$. Consider the $N^{(n)}_{x,\theta}$ bins which starts with $d^+_x$ alive in balls.  For $k= 1,\ldots,N^{(n)}_{x,\theta}$, let $T_k$ be the time that the $(d^+_x-r)$-th ball is removed (killed) in the $k$-th such bin. Then we have $\#\{k:T_k\leq t\}=\sum_{s=0}^{r}U^{(n)}_{x,\theta,s}(t)$. Since the number of remaining balls in any of such bins at time $t$ are i.i.d. random variables with distribution $\Bin(d^+_x,e^{-t})$, then we have
$\PP(T_k\leq t)=\sum_{s=0}^{r}b(d^+_x,e^{-t},s).$ Hence, by using Glivenko-Cantelli theorem,
$$\sup\limits_{t\leq \tau_n}\bigl|\frac{1}{N^{(n)}_{x,\theta}}\sum_{s=r+1}^{d^+_x}U^{(n)}_{x,\theta,s}(t)-\sum_{s=r+1}^{d^+_x}b(d^+_x,e^{-t},s)\bigr|
\top 0, \quad \quad \text{as} \quad N^{(n)}_{x,\theta}\rightarrow\infty.$$
Multiply the above equation by $N^{(n)}_{x,\theta}/N^{(n)}_x$,  and by the law of large numbers, we have
\begin{equation}\label{medi}
  \sup\limits_{t\leq \tau_n}\bigl|\frac{1}{N^{(n)}_x}\sum_{s=r+1}^{d^+_x}U^{(n)}_{x,\theta,s}(t)-q^{(n)}_x(\theta)\sum_{s=r+1}^{d^+_x}b(d^+_x,e^{-t},s)\bigr|
\top 0 , \quad \quad \text{as} \quad N^{(n)}_{x}\rightarrow\infty.
\end{equation}
Moreover, by using Assumption~\ref{cond-limit},
$$\sup\limits_{t\geq 0}\bigl|\frac{N^{(n)}_x}{n}q^{(n)}_x(\theta)\sum_{s=r+1}^{d^+_x}b(d^+_x,e^{-t},s)
-\mu_xq_x(\theta)\sum_{s=r+1}^{d^+_x}b(d^+_x,e^{-t},s)\bigr|
\longrightarrow 0, \quad  \text{as} \quad n\to \infty.$$
By combining the two formulas above and multiplying~\eqref{medi} by $N^{(n)}_x/n$, we obtain
\begin{equation}\label{U_limit}
  \sup\limits_{t\leq \tau_n}\bigl|\frac{1}{n}\sum_{s=r+1}^{d^+_x}U^{(n)}_{x,\theta,s}(t)-\mu_xq_x(\theta)\sum_{s=r+1}^{d^+_x}b(d^+_x,e^{-t},s)\bigr|
\top 0, \quad  \text{as} \quad n\to \infty.
\end{equation}

Consider now $S_{x,\theta,\ell}^{(n)}$, the number of solvent institutions with type $x$, threshold $\theta$ and $\ell=0, \dots, \theta-1$ defaulted neighbors at time $t$. By definition, $S^{(n)}_{x,\theta,\ell}(t)=U^{(n)}_{x,\theta,d^+_x-\ell}(t)$. Hence, by writing~\eqref{U_limit} for $r_1=d^+_x-\ell$ and $r_2=d^+_x-\ell-1$, then taking the difference, we obtain
\begin{align*}
\sup\limits_{t\leq \tau_n}\bigl|\frac{S_{x,\theta,\ell}^{(n)}(t)}{n}-\mu_x q_x(\theta)b\left(d_x^+,1-e^{-t},\ell\right)\bigr|\top 0, \quad  \text{as} \quad n\to \infty.
\end{align*}

Note that the above equation holds for all $x\in\cX$ and $\theta=1, \dots, d^+_x$. Hence, the same convergence also holds for any partial sum over $x$ and $\theta$. In particular,
\begin{equation}\label{bin_limit}
  \sup\limits_{t\leq \tau_n}\bigl|\frac{1}{n}\sum_{\theta=1}^{d^+_x}\sum_{s=d^+_x-\theta+1}^{d^+_x}U^{(n)}_{x,\theta,s}(t)-
  \mu_x\sum_{\theta=1}^{d^+_x}q_x(\theta)\sum_{s=d^+_x-\theta+1}^{d^+_x}b(d^+_x,e^{-t},s)\bigr|
\top 0.
\end{equation}

Let $\cX_K$ be the set of all characteristic $x\in \cX$ such that $d^+_x+d^-_x\leq K$. Since (by Assumption~\ref{cond-average}) $\lambda\in(0,\infty)$ then  for arbitrary small $\varepsilon>0$, there exists $K_{\varepsilon}$ such that $\sum_{x\in\cX\setminus\cX_{K_{\varepsilon}}}\mu_x(d^+_x+d^-_x)<\varepsilon$. Further, by Assumption~\ref{cond-average} and dominated convergence, $$\sum_{x\in\cX\setminus\cX_{K_{\varepsilon}}}(d^+_x+d^-_x)N^{(n)}_x/n\rightarrow
\sum_{x\in\cX\setminus\cX_{K_{\varepsilon}}}(d^+_x+d^-_x)\mu_x<\varepsilon.$$
Hence for $n$ large enough, we have $\sum_{x\in\cX\setminus\cX_{K_{\varepsilon}}}(d^+_x+d^-_x)N^{(n)}_x/n<2\varepsilon$. By \eqref{bin_limit}, we obtain
\begin{align}\label{result}
&\sup\limits_{t\leq \tau_n}\Bigl|\sum_{x\in\cX}(d^+_x+d^-_x)\sum_{\theta=1}^{d^+_x}\sum_{s=d^+_x-\theta+1}^{d^+_x}\bigl(U^{(n)}_{x,\theta,s}(t)/n
  -\mu_xq_x(\theta)b(d^+_x,e^{-t},s)\bigr)\Bigr| \nonumber\\
 \leq& \sup\limits_{t\leq \tau_n}\sum_{x\in\cX_{K_{\varepsilon}}}(d^+_x+d^-_x) \Bigl|\frac{1}{n}\sum_{\theta=1}^{d^+_x}\sum_{s=d^+_x-\theta+1}^{d^+_x}\bigl(U^{(n)}_{x,\theta,s}(t)/n
  -\mu_xq_x(\theta)b(d^+_x,e^{-t},s)\bigr)\Bigr|\nonumber\\
  &+ \sup\limits_{t\leq \tau_n}\sum_{x\in\cX\setminus\cX_{K_{\varepsilon}}} (d^+_x+d^-_x) \Bigl|\frac{1}{n}\sum_{\theta=1}^{d^+_x}\sum_{s=d^+_x-\theta+1}^{d^+_x}\bigl(U^{(n)}_{x,\theta,s}(t)/n
  -\mu_xq_x(\theta)b(d^+_x,e^{-t},s)\bigr)\Bigr|\nonumber\\
  \leq& o_p(1)+\sum_{x\in\cX\setminus\cX_{K_{\varepsilon}}}(d^+_x+d^-_x)(N^{(n)}_x/n+\mu_x)\leq o_p(1)+3\varepsilon.
\end{align}

Since the total number of solvent and defaulted institutions at time $t$ satisfies
$$S_n(t)=\sum_{x\in\cX}\sum_{\theta=1}^{d^+_x}\sum_{s=d^+_x-\theta+1}^{d^+_x}U^{(n)}_{x,\theta,s}(t),$$
which is dominated by $\sum_{x\in\cX}(d^+_x+d^-_x)\sum_{\theta=1}^{d^+_x}\sum_{s=d^+_x-\theta+1}^{d^+_x}U^{(n)}_{x,\theta,s}(t)$. Then, by \eqref{U_limit}, \eqref{result} and letting $\varepsilon\to 0$, we obtain
$$\sup\limits_{t\leq \tau_n}\bigl|\frac{S_n(t)}{n}-f_S(e^{-t})\bigr|\top 0.$$
Further, from $D_n(t)=n-S_n(t)$, the number of defaulted institutions at time $t$ satisfies
$$ \sup\limits_{t\leq \tau_n}\bigl|\frac{D_n(t)}{n}-f_D(e^{-t})\bigr|\top 0.$$

The total number of healthy in links at time $t$ is given by
 given by
$$H^+_n(t)=\sum_{x\in\cX}\sum_{\theta=1}^{d^+_x}\sum_{s=d^+_x-\theta+1}^{d^+_x}sU^{(n)}_{x,\theta,s}(t),$$
which is again dominated by $\sum_{x\in\cX}(d^+_x+d^-_x)\sum_{\theta=1}^{d^+_x}\sum_{s=d^+_x-\theta+1}^{d^+_x}U^{(n)}_{x,\theta,s}(t)$. Then, by \eqref{U_limit}, \eqref{result} and letting $\varepsilon\to 0$, we obtain
$$\sup\limits_{t\leq \tau_n}\bigl|\frac{H^+_n(t)}{n}-f_{H^+}(e^{-t})\bigr|\top 0.$$
Moreover, the total number of infected in links at time $t$ satisfies $I^+_n(t)=L_n(t)-H^+_n(t)$, where $L_n(t)$ denotes the total number of links in the network (also the total number of alive in balls) at time $t$. Since at each interaction we remove 1 link, and from Assumption~\ref{cond-average}, $\sup_{t\geq 0}\bigl|L_n(t)/n - \lambda e^{-t}\bigr| \top 0$. We thus obtain (by definition $f_{I^+}(z) = \lambda z - f_{H^+}(z)$)
$$\sup\limits_{t\leq \tau_n}\bigl|\frac{I^+_n(t)}{n}-f_{I^+}(e^{-t})\bigr|\top 0.$$

Finally, the total number of white (out) balls at time $t$ satisfies $W_n(t)=L_n(t)-H^-_n(t)$, where $H_n^-(t)$ denotes the total number of healthy (out) balls at time $t$, given by
$$H^-_n(t)=\sum_{x\in\cX}\sum_{\theta=1}^{d^+_x}\sum_{s=d^+_x-\theta+1}^{d^+_x}d^-_xU^{(n)}_{x,\theta,s}(t).$$
This is again dominated by $\sum_{x\in\cX}(d^+_x+d^-_x)\sum_{\theta=1}^{d^+_x}\sum_{s=d^+_x-\theta+1}^{d^+_x}U^{(n)}_{x,\theta,s}(t)$, then \eqref{U_limit}, \eqref{result} and Assumption~\ref{cond-average} implies that the number of white balls satisfies
\begin{align*}
\sup\limits_{t\leq \tau_n}\bigl|\frac{W_n(t)}{n}-f_W(e^{-t})\bigr|\top 0.
\end{align*}
This completes the proof.

\subsection{Proof of Lemma~\ref{tau}}\label{sec:lemStop}
Recall that $$f_{W}(z):= \lambda z - \sum_{x\in \cX} \mu_x d_x^-\sum_{\theta=1}^{d_x^+}q_x(\theta) \beta\bigl(d_x^+,z, d_x^+-\theta+1\bigr),$$
and, $\hz:=\sup\bigl\{z\in[0,1]: f_W(z)=0\bigr\}$.  By the initial condition $q_x(\theta)>0$ for some $x\in\cX$. Hence, we have $f_W(1)>0$ and $\zst<1$.
Let us take a constant $t_1>0$ such that $t_1<-\ln\zst$. By continuity of $f_W(z)$ on $[0,1]$, it follows that $f_W(z)>0$ on $\left(\zst,1\right]$. Hence, there exists some constant $C_1>0$ such that $f_W(z)>C_1$ for $t<t_1$.

Since $W_n(\tau_n^{\star})=-1$, if $\tau_n^{\star}\leq t_1$ then $W_n(\tau_n^{\star})/n-f_W(\tau_n^{\star})<-C_1$ for $n$ large enough. But on the other hand,  we have
$$\sup\limits_{t\leq \tau_n^{\star}}|\frac{W_n(t)}{n}-f_{W}(e^{-t}))|\top 0.$$
This is a contradiction. Therefore, we must have
$\mathbb{P}(\tau_n^{\star}\leq t_1)\longrightarrow 0$, as $n\rightarrow\infty$.
In the case $\zst=0$, we can take any finite $t_1$, which implies that $\tau_n^{\star}\top\infty$.

We now consider the case $\hz\in\left(0,1\right]$. Let $\varepsilon>0$ small enough and fix the constant $t_2\in\left(-\ln\zst,-\ln(\zst-\varepsilon)\right)$. By using a similar argument and given the assumption $f_W'(\zst)>0$, we can show there exists some constant $C_2>0$, such that $W_n(\tau_n^{\star})/n-f_W(\tau_n^{\star})\geq C_2$ if $\tau_n^{\star}\geq t_2$. Thus $\mathbb{P}(\tau_n \geq t_2)\longrightarrow 0$ as $n\rightarrow\infty$. Since $t_1$ and $t_2$ are arbitrary, tending both $t_1$ and $t_2$ to $-\ln\zst$, implies that $\tau^\star_n\top-\ln\hz$. This completes the proof of lemma.

\subsection{Proof of Theorem~\ref{thm:LLN}}\label{sec:thmLLN}
By Theorem~\ref{centrality}, it follows that
$$\frac{S_n(\tst)}{n}=f_S(e^{-\tst})+o_p(1).$$

If $\zst=0$ then, by Lemma~\ref{tau},  $\tau_n^{\star}\top\infty$. So $e^{-\tst}\top 0$ and, since $f_S(0)=0$, it follows by the continuity of $f_S$ that $f_S(e^{-\tst})\top 0$. We therefore have
$S_n(\tst)=o_p(n)$.
This implies that $|\cD_{n-1}|=n-S_n(\tst)=n-o_p(n)$, as desired, asymptotically almost all institutions default.

If $\hz\in\left(0,1\right]$ and $f_W'(\hz)>0$, then by Lemma~\ref{tau}, we have $e^{-\tst}\top\zst$. Moreover, the continuity of $f_D$ implies that $f_D(e^{-\tst})\top f_D(\zst)$. Hence, we have by Theorem~\ref{centrality} that
$$\frac{|\cD_{n-1}|}{n}=\frac{D_n(\tst)}{n}\top f_D(\zst).$$
Now using the first statement of Theorem~\ref{centrality} and the continuity of $\mu_xq_x(\theta)b(d^+_x,1-z,\ell)$ on $z$, we obtain
for all $x\in\cX, \theta=1, \dots, d_x^+$ and $\ell=0, \dots, \theta-1$, the final fraction of solvent institutions with type $x$, threshold $\theta$ and $\ell$ defaulted neighbors satisfy
\begin{align*}
\frac{S_{x,\theta,\ell}^{(n)}}{n} \top \mu_x q_x(\theta)b\left(d_x^+,1-\hz,\ell\right).
\end{align*}
This completes the proof of Theorem~\ref{thm:LLN}.

\subsection{Proof of Theorem~\ref{thm-CLT-S}}\label{sec-CLT-S}
Recall that $U^{(n)}_{x, \theta,s}(t)$ denotes the number of bins (institutions) with type $x\in \cX$, threshold $\theta$ and $s$ alive (in-) balls at time $t$. Further, we let $V_{x,\theta,s}^{(n)}(t)$ denotes the number of bins (institutions) with type $x\in \cX$, threshold $\theta$ and at least $s$ alive balls at time $t$, so that  $V_{x,\theta,\ell}^{(n)}(t)=\sum_{s \geq \ell} U_{x,\theta,s}^{(n)}(t)$.

We first study the stochastic process $V_{x,\theta,s}^{(n)}$ for a given $x\in \cX$ and integers $\theta, s$.  Since $V_{x,\theta,s}^{(n)}$ changes by -1 when one of the alive (in) balls in $U_{x,\theta,s}^{(n)}$ bins dies, and there are $sU_{x,\theta,s}^{(n)}(t)$ such balls at time $t$, we obtain
$$dV_{x,\theta,s}^{(n)}(t)=-sU_{x,\theta,s}^{(n)}(t)dt+d\m'_t,$$
where $\m'$ is a martingale.

We define further $\widehat{V}_{x,\theta,s}^{(n)}(t):=e^{st}V_{x,\theta,s}^{(n)}(t)$. Then by Ito's lemma,
\begin{align*}
  d\widehat{V}_{x,\theta,s}^{(n)}(t) & =se^{st}V_{x,\theta,s}^{(n)}(t)dt+e^{st}dV_{x,\theta,s}^{(n)}(t) \\
   & = se^{st}V_{x,\theta,s}^{(n)}(t)dt-se^{st}U_{x,\theta,s}^{(n)}(t)dt+e^{st}d\m'_t\\
   & = se^{-t}\widehat{V}_{x,\theta,(s+1)}^{(n)}(t)dt+d\m_t,
\end{align*}
where $d\m_t=e^{st}d\m'_t$ is also a martingale differential. Thus
\begin{equation}\label{defM}
  M_{x,\theta,s}^{(n)}(t):=\widehat{V}_{x,\theta,s}^{(n)}(t)-s\int_{0}^{t}e^{-r}\widehat{V}_{x,\theta,(s+1)}^{(n)}(r)dr
\end{equation}
is also a martingale for every $0\leq s\leq d^+_x$. We can calculate its quadratic variation by
\begin{align}\label{crochet_Mx}
  \bigl[M_{x,\theta,s}^{(n)},M_{x,\theta,s}^{(n)}\bigr]_t =\sum_{0<r\leq t}|\Delta M_{x,\theta,s}^{(n)}(r)|^2=\sum_{0<r\leq t}|\Delta \widehat {V}_{x,\theta,s}^{(n)}(r)|^2
   =\int_{0}^{t}e^{2sr}d(-V_{x,\theta,s}^{(n)}(r)).
\end{align}

Then,
$$\widetilde{M}_{x,\theta,s}^{(n)}(t):=n^{-1/2}\bigl(M_{x,\theta,s}^{(n)}(t)-M_{x,\theta,s}^{(n)}(0)\bigr),$$
is a martingale with initial value at 0. Further, from Theorem~\ref{centrality}, as $n\rightarrow\infty$,
$$\sup\limits_{t\leq\tst}|V_{x,\theta,s}^{(n)}(t)/n-\mu_xq_x(\theta)\beta(d^+_x,e^{-t},s)|\top 0.$$

For notational convenience, let us denote by
$$\varphi_{x,\theta,s}(y):=\mu_xq_x(\theta)\beta(d^+_x,y,s).$$
Using integration by parts, we obtain for all $t\leq\tst$,
\begin{align*}
  \bigl[\widetilde{M}_{x,\theta,s}^{(n)},\widetilde{M}_{x,\theta,s}^{(n)}\bigr]_t & =n^{-1}\bigl[M_{x,\theta,s}^{(n)},M_{x,\theta,s}^{(n)}\bigr]_t \\
   & = n^{-1}\bigl(V_{x,\theta,s}^{(n)}(0)-e^{2st}V_{x,\theta,s}^{(n)}(t)+\int_{0}^{t}2sV_{x,\theta,s}^{(n)}(r)e^{2sr}dr\bigr) \\
   & = \varphi_{x,\theta,s}(1)-e^{2st}\varphi_{x,\theta,s}(e^{-t})+\int_{0}^{t}2s\varphi_{x,\theta,s}(e^{-r})e^{2sr}dr+o_p(1) \\
   & =\int_{0}^{t}e^{2sr}d(-\varphi_{x,\theta,s}(e^{-r}))+o_p(1).
\end{align*}

Moreover, by \eqref{crochet_Mx}, there exists a constant $C$ such that
\begin{equation}\label{crochet_Mtilde}
  \bigl[\widetilde{M}_{x,\theta,s}^{(n)},\widetilde{M}_{x,\theta,s}^{(n)}\bigr]_t=n^{-1}\bigl[M_{x,\theta,s}^{(n)},M_{x,\theta,s}^{(n)}\bigr]_t\leq n^{-1}e^{2st}V_{x,\theta,s}^{(n)}(0)\leq Ce^{2st},
\end{equation}
which holds uniformly for $n$. Consequently, by Theorem~\ref{martingale_clt}, for the stopped process
$$\widetilde{M}_{x,\theta,s}^{(n)}(t\wedge\tau_n)\tod{Y}_{x,\theta,s}(t\wedge t_0)\quad \quad \text{in} \quad \cD\left[0,\infty\right),$$
where ${Y}_{x,\theta,s}$ is a continuous Gaussian process with $\mathbb{E}[{Y}_{x,\theta,s}(t)]=0$ and covariance
$$\mathbb{E}\left[{Y}_{x,\theta,s}(t){Y}_{x,\theta,s}(u)\right]=\int_{0}^{t}e^{2sr}d(-\varphi_{x,\theta,s}(e^{-r})), \quad\quad 0\leq t\leq u<\infty.$$

Furthermore, for $s\neq r$, we can show that $V_{x,\theta,r}^{(n)}$ and $V_{x,\theta,s}^{(n)}$ never change simultaneously, almost surely. Thus, $[\widetilde{M}_{x,\theta,r}^{(n)},\widetilde{M}_{x,\theta,s}^{(n)}]_t=0$.

Hence, by Theorem~\ref{martingale_clt} applied to the vector-valued martingale $\bigl(\widetilde{M}_{x,\theta,s}^{(n)}\bigr)_{s=0, \dots, d^+_x}$, the convergence holds jointly with a diagonal covariance matrix for $\bigl(Y_{x,\theta,s}\bigr)_{s=0, \dots, d^+_x}$, which implies that the processes $Y_{x,\theta,0}, \dots, Y_{x,\theta, d^+_x}$ are all independent.

As for two different types-thresholds $(x,\theta)$ and $(x',\theta')$, the independence follows since for any $s=0, \dots, d^+_x$ and $s'=0, \dots, d^+_{x'}$, $V_{x,\theta,s}^{(n)}$ and $V_{x'\theta',s'}^{(n)}$ also a.s. never change simultaneously. This coukld also be viewed from the nature of our balls and bins representation: the balls die independently and a.s. never die at the same moment. Hence, the death processes in bins with different types are independent.

We now compute $\widehat{V}_{x,\theta,s}^{(n)}$, using the definition Equation~\eqref{defM} for $M_{x,\theta,s}^{(n)}(t)$, repeatedly. We find that for $s=d^+_x$,
$$\widehat{V}_{x,\theta,d^+_x}^{(n)}(t)=M_{x,\theta,d^+_x}^{(n)}(t),$$
and for $s=d^+_x-1$,
$$\widehat{V}_{x,\theta,s}^{(n)}(t)=M_{x,\theta,s}^{(n)}(t)+s\int_{0}^{t}e^{-r}M_{x,\theta,(s+1)}^{(n)}(r)dr.$$
Then for $s=d^+_x-2$, we find out
\begin{align*}\begin{aligned}
  \widehat{V}_{x,\theta,s}^{(n)}(t) =&M_{x,\theta,s}^{(n)}(t)+s\int_{0}^{t}e^{-r}M_{x,\theta,(s+1)}^{(n)}(r)dr \\
   & +\int_{r_2<r_1<t}s(s+1)e^{-(r_1+r_2)}M_{x,\theta,(s+2)}^{(n)}(r_2)dr_2dr_1 \\
   =& M_{x,\theta,s}^{(n)}(t)+s\int_{0}^{t}e^{-r}M_{x,\theta,(s+1)}^{(n)}(r)dr\\
   & + \int_{0}^{t}s(s+1)(e^{-r}-e^{-t})e^{-r}M_{x,\theta,(s+2)}^{(n)}(r)dr\\
  =& M_{x,\theta,s}^{(n)}(t)+\sum_{j=s+1}^{d^+_x}s\binom{j-1}{s}\int_{0}^{t}(e^{-r}-e^{-t})^{j-s-1}
  e^{-r}M_{x,\theta,j}^{(n)}(r)dr.
\end{aligned}\end{align*}

Assume that the above formula holds for $0<s\leq d^+_x-1$. Then for $s-1$, we deduce
\begin{align*}\begin{aligned}
  \widehat{V}_{x,\theta,s-1}^{(n)}(t) =&M_{x,\theta,s-1}^{(n)}(t)+s\int_{0}^{t}e^{-r}\widehat{V}_{x,\theta,s}^{(n)}(r)dr \\
   =& M_{x,\theta,s-1}^{(n)}(t)+\sum_{j=s+1}^{d^+_i}s\binom{j-1}{s}\int_{0}^{t}\int_{0}^{t_1}(s-1)
   e^{-t_1}(e^{-r}-e^{-t_1})^{j-s-1}e^{-r}M_{x,\theta,j}^{(n)}(r)drdr_1\\
   =& M_{x,\theta,s-1}^{(n)}(t)+(s-1)\int_{0}^{t}e^{-r}M_{x,\theta,s}^{(n)}(r)dr \\
   &+\sum_{j=s+1}^{d^+_x}\frac{s(s-1)}{j-s}\binom{j-1}{s}\int_{0}^{t}(e^{-r}-e^{-t})^{j-s}
  e^{-r}M_{x,\theta,j}^{(n)}(r)dr\\
  =& M_{x,\theta,s-1}^{(n)}(t)+\sum_{j=s}^{d^+_x}(s-1)\binom{j-1}{s-1}\int_{0}^{t}(e^{-r}-e^{-t})^{j-s}
  e^{-r}M_{x,\theta,j}^{(n)}(r)dr.
\end{aligned}\end{align*}
By induction, we obtain that
$$\widehat{V}_{x,\theta,s}^{(n)}(t)= M_{x,\theta,s}^{(n)}(t)+\sum_{j=s+1}^{d^+_x}s\binom{j-1}{s}\int_{0}^{t}(e^{-r}-e^{-t})^{j-s-1}
  e^{-r}M_{x,\theta,j}^{(n)}(r)dr.$$

We next define $ \widehat{v}_{x,\theta,s}^{(n)}(t)$  for all $t\geq 0$, as the conditional expectation of $\widehat{V}_{x,\theta,s}^{(n)}(t)$ given its initial value $V_{x,\theta,s}^{(n)}(0)$. Namely, we have
\begin{align*}
  \widehat{v}_{x,\theta,s}^{(n)}(t):&=\mathbb{E}\bigl[\widehat{V}_{x,\theta,s}^{(n)}(t)|V_{x, \theta,s}^{(n)}(0)\bigr] \\
  & =M_{x,\theta,s}^{(n)}(0)+\sum_{j=s+1}^{d^+_x}s\binom{j-1}{s}\int_{0}^{t}(e^{-r}-e^{-t})^{j-s-1}
  e^{-r} M_{x,\theta,j}^{(n)}(0)dr.
\end{align*}

Note that by definition, $\mathbb{E}\bigl[\widehat{V}_{x,\theta,s}^{(n)}(t)\bigr]=e^{st}\mathbb{E}[V_{x,\theta,s}^{(n)}(t)]$. Further, $V_{x,\theta,s}^{(n)}(t)$ is the number of bins of type $x$, threshold $\theta$ and with at least $s$ balls at time $t$ in the death process where balls die independently with rate 1 (without stopping). Then at time $t$, each of such bins has the binomial probability $\beta(d^+_x,e^{-t},s)$ to have at least $s$ alive balls remaining. The initial number is $$V_{x,\theta,s}^{(n)}(0)=N^{(n)}_{x,\theta},$$
for all $s=0,\dots, d^+_x$. Consequently,
\begin{equation*}
  \widehat{v}_{x,\theta,s}^{(n)}=e^{st}N^{(n)}_{x,\theta}\beta(d^+_x,e^{-t},s).
\end{equation*}

We further define for $t\leq\tst$,
$$\widetilde{V}_{x,\theta,s}^{(n)}(t):=n^{-1/2}\bigl(\widehat{V}_{x,\theta,s}^{(n)}(t)-\widehat{v}_{x,\theta,s}^{(n)}\bigr).$$
It is clear that,
$$\widetilde{V}_{x,\theta,s}^{(n)}(t)= \widetilde{M}_{x,\theta,s}^{(n)}(t)+\sum_{j=s+1}^{d^+_x}s\binom{j-1}{s}\int_{0}^{t}(e^{-r}-e^{-t})^{j-s-1}
  e^{-r}\widetilde{M}_{x,\theta,j}^{(n)}(r)dr.$$
We can then apply Theorem~\ref{martingale_clt} to the above finite sum and take the limit (in distribution) under the summation sign. It follows that
\begin{align*}
\widetilde{V}_{x,\theta,s}^{(n)}(t\wedge\tau_n)\tod \widetilde{\cZ}_{x,\theta,s}(t\wedge t_0),
\end{align*}
in $\cD\left[0,\infty\right)$, where
$$\widetilde{\cZ}_{x,\theta,s}(t):=Y_{x,\theta,s}(t)+\sum_{j=s+1}^{d^+_x}s\binom{j-1}{s}\int_{0}^{t}(e^{-r}-e^{-t})^{j-s-1}
  e^{-r}Y_{x,\theta,j}(r)dr.$$

Note that, although the initial $V_{x,\theta,s}^{(n)}(0)$ is random,
by the standard central limit theorem applied to $N^{(n)}_{x}=n\mu^{(n)}_x$ i.i.d. random variables $\ind{\{\Theta^{(n)}_i=\theta\}}$, we have as $n\rightarrow\infty$,
\begin{align*}
  n^{-1/2}\Bigl(N^{(n)}_{x,\theta}-n\mu^{(n)}_xq^{(n)}_x(\theta)\Bigr)\tod\cZ_{x,\theta},
\end{align*}
where $\cZ_{x,\theta}\sim N(0,\mu_xq_x(\theta)(1-q_x(\theta)))$ and $\Cov(\cZ_{x,\theta_1},\cZ_{x,\theta_2})=-\mu_xq_x(\theta_1)q_x(\theta_2)$ for $\theta_1\neq\theta_2$. We denote by (for all $\theta_1, \theta_2$)
$$\psi_{x,\theta_1,\theta_2}:=\Cov(\cZ_{x,\theta_1},\cZ_{x,\theta_2}).$$

Notice that for all $x_1\neq x_2$, $\cZ_{x_1,\theta_1}, \cZ_{x_2,\theta_2}$ are independent and $\Cov(\cZ_{x_1,\theta_1},\cZ_{x_2,\theta_2})=0$. Thus we have jointly for all triple $(x,\theta,s)$,
\begin{equation}\label{clt_initial}
  n^{-1/2}\Bigl(\widehat{v}_{x,\theta,s}^{(n)}-ne^{st}\mu^{(n)}_xq_x^{(n)}(\theta)\beta(d^+_x,e^{-t},s)\Bigr)\tod \widehat{\cZ}_{x,\theta,s}(t),
\end{equation}
in $\mathcal{C}\interval$ as $n\rightarrow\infty$, where $\widehat{\cZ}_{x,\theta,s}(t)$ is also a Gaussian process with mean 0 and covariance
$$\EE\bigl[\widehat{\cZ}_{x,\theta,s}(t)\widehat{\cZ}_{x,\theta,s}(u)\bigr]=
e^{(t+u)s}\beta(d^+_x,e^{-t},s)\beta(d^+_x,e^{-u},s)\psi_{x,\theta,\theta}.$$
Moreover, $\widehat{\cZ}_{x,\theta,s}$ and $\widetilde{\cZ}_{x,\theta,s}$ are independent.

The above argument shows that $\widetilde{V}_{x,\theta,s}^{(n)}$ converges to a Gaussian process. We next define
$$\widetilde{\widehat{V}}_{x,\theta,s}^{(n)}(t):=n^{-1/2}\Bigl(\widehat{V}_{x,\theta,s}^{(n)}(t)
-ne^{st}\mu^{(n)}_xq_x^{(n)}(\theta)\beta(d^+_x,e^{-t},s)\Bigr).$$
By \eqref{clt_initial}, we obtain that
\begin{equation*}
  \widetilde{\widehat{V}}_{x,\theta,s}^{(n)}(t\wedge\tau_n)\tod\widetilde{\widehat{\cZ}}_{x,\theta,s}(t\wedge t_0),
\end{equation*}
where $\widetilde{\widehat{\cZ}}_{x,\theta,s}$ is a Gaussian process with mean 0 and covariance
$$\mathbb{E}\bigl[\widetilde{\widehat{\cZ}}_{x,\theta,s}(t)\widetilde{\widehat{\cZ}}_{x,\theta,s}(u)\bigr]=\EE\bigl[\widehat{\cZ}_{x,\theta,s}(t)
\widehat{\cZ}_{x,\theta,s}(u)\bigr]
+\mathbb{E}\bigl[\widetilde{\cZ}_{x,\theta,s}(t)\widetilde{\cZ}_{x,\theta,s}(u)\bigr],$$
for all $0\leq t\leq u<\infty$.

Next, we define for all triple $(x,\theta,s)$,
$$V_{x,\theta,s}^{*(n)}(t):=e^{-st}\widetilde{\widehat{V}}_{x,\theta,s}^{(n)}(t),\quad \cZ'_{x,\theta,s}(t):=e^{-st}\widetilde{\cZ}_{x,\theta,s}(t), $$
and
$$\cZ^{\ast}_{x,\theta,s}(t):=e^{-st}\widehat{\cZ}_{x,\theta,s}(t),\quad \cZ_{x,\theta,s}(t):=e^{-st}\widetilde{\widehat{\cZ}}_{x,\theta,s}(t).$$
Then we have
\begin{equation}\label{conver_X}
  V_{x,\theta,s}^{*(n)}(t\wedge\tau_n)\tod \cZ_{x,\theta,s}(t\wedge t_0).
\end{equation}

We let further
$$\bigsigma^{\ast}_{x,\theta_1,\theta_2, r, s}(y):=\Cov\bigl((\cZ^{\ast}_{x,\theta_1,r}(-\ln y),\cZ^{\ast}_{x,\theta_2,s}(-\ln y))\bigr),$$
and,
$$\bigsigma_{x,\theta,r,s}(y):=\mbox{Cov}\bigl(\cZ'_{x,\theta,r}(-\ln y),\cZ'_{x,\theta,s}(-\ln y)\bigr).$$

By using all the independence and covariance formulas above, it follows that
\begin{equation}\label{relation_Cov}
\begin{split}
  \Cov\bigl(\cZ_{x,\theta,s_1}(t),\cZ_{x,\theta,s_2}(t)\bigr)=& \bigsigma^{\ast}_{x,\theta,\theta,s_1,s_2}(e^{-t}) +\bigsigma_{x,\theta,s_1,s_2}(e^{-t}), \\
   \Cov\bigl(\cZ_{x,\theta_1,s_1}(t),\cZ_{x,\theta_2,s_2}(t)\bigr)= & \bigsigma^{\ast}_{x,\theta_1,\theta_2,s_1,s_2}(e^{-t}),\quad \text{for all} \quad \theta_1\neq\theta_2,\\
  \Cov\bigl(X_{x_1,\theta_1,s_1}(t),X_{x_2,\theta_2,s_2}(t)\bigr)= & 0,\quad \text{ for all} \quad x_1\neq x_2,
\end{split}
\end{equation}
where
\begin{align}\label{sigma_ast}
  & \bigsigma^{\ast}_{x,\theta_1,\theta_2,s_1,s_2}(e^{-t})= \beta(d^+_x,e^{-t},s_1)\beta(d^+_x,e^{-t},s_2)\psi_{x,\theta_1,\theta_2} , \nonumber\\
  & \psi_{x,\theta,\theta}= \mu_xq_x(\theta)(1-q_x(\theta)), \quad \psi_{x,\theta_1,\theta_2}= -\mu_xq_x(\theta_1)q_x(\theta_2)\quad \text{for all} \quad  \theta_1\neq\theta_2.
\end{align}

We now compute $\bigsigma_{x,\theta,r,s}(y)$. Recall that
$$\mbox{Cov}(Y_{x,\theta,s}(-\ln y),Y_{x,\theta,r}(-\ln y))=\ind_{r= s}\int_{y}^{1}u^{-2s}d\varphi_{x,\theta,s}(u).$$
Then we obtain
\begin{align*}
 \mbox{Var}\bigl(\widetilde{\cZ}_{x,\theta,s}(-\ln y)\bigr)
 =&  \int_{y}^{1}v^{-2s}d\varphi_{x,\theta,s}(v)\nonumber\\
 +\sum_{j=s+1}^{d^+_x}s^2\binom{j-1}{s}^2& \int\int_{y<u<z<1}(u-y)^{j-s-1}(z-y)^{j-s-1}
 \Bigl(\int_{z}^{1}v^{-2j}d\varphi_{x,\theta,j}(v)\Bigr)dudz \nonumber\\
=&\frac{1}{2}\sum_{j=s}^{d^+_x}\binom{j-1}{s-1}^2\int_{y}^{1}(v-y)^{2j-2s}v^{-2j}d\varphi_{x,\theta,j}(v).
\end{align*}
For $r\geq s$, we can write $r=s+k$ for some $k\geq1$, and deduce that
\begin{equation*}
\begin{split}
 \mbox{Cov} & \bigl(\widetilde{\cZ}_{x,\theta,s}(-\ln y),\widetilde{\cZ}_{x,\theta,s+k}(-\ln y)\bigr)\\
 & =\frac{1}{2}\sum_{j=s+k}^{d^+_x}\binom{j-1}{s-1}\binom{j-1}{s+k-1}\int_{y}^{1}
 (v-x)^{2j-2s-k}v^{-2j}d\varphi_{x,\theta,j}(v).
  \end{split}
\end{equation*}
Hence we have
\begin{equation}\label{sigma sr}
\begin{split}
\bigsigma_{x,\theta,s,s+k}(y): &=y^{2s+k}\mbox{Cov}\bigl(\widetilde{\cZ}_{x,\theta,s}(-\ln y),\widetilde{\cZ}_{x,\theta,s+k}(-\ln y)\bigr)\\
& = \frac{1}{2}y^{2s+k}\sum_{j=s+k}^{d^+_x}\binom{j-1}{s-1}\binom{j-1}{s+k-1}\int_{y}^{1}
(v-y)^{2j-2s-k}v^{-2j}d\varphi_{x,\theta,j}(v).
\end{split}
\end{equation}

For the number of solvent institutions with type $x\in \cX$, threshold $\theta=1,\dots, d_x^+-1$ and $\ell=1, \dots, \theta-1$ defaulted neighbors at time $t$, we have
$$S^{(n)}_{x,\theta,\ell}(t)=V_{x,\theta,d^+_{x}-\ell}^{(n)}-V^{(n)}_{x,\theta,d^+_x-\ell+1}.$$
Moreover, for $\ell=0$, $S^{(n)}_{x,\theta,0}(t)=V_{x,\theta,d^+_x}^{(n)}$. Using the joint asymptotic normality of $V^{*(n)}_{x,\theta,d_x^+-\ell}$ and $V^{*(n)}_{x,\theta,d_x^+-\ell+1}$, we obtain that in $\cD\interval$,
\begin{align*}
n^{-1/2}\left(S^{(n)}_{x,\theta,\ell}(t\wedge\tau_n)-n\mu^{(n)}_xq_x^{(n)}(\theta)b\bigl(d_x^+,1-e^{-(t\wedge \tau_n)},\ell\bigr)\right)\tod Z_{x,\theta,\ell}(t\wedge t_0),
\end{align*}
where $Z_{x,\theta,\ell}=\cZ_{x,\theta,d_x^+-\ell}-\cZ_{x,\theta,d_x^+-\ell+1}$ for $\ell\geq 1$ and $Z_{x,\theta,0}=\cZ_{x,\theta,d^+_x}$. For convenience, we set a convention that $\cZ_{x,\theta,s}= 0$ if $s>d^+_x$.

Thus for any two triple $(x_1,\theta_1,\ell_1)$ and $(x_2,\theta_2,\ell_2)$,
\begin{align}\begin{aligned}\label{Cov_sll}
\Cov\bigl(Z_{x_1,\theta_1,\ell_1}(t),Z_{x_2,\theta_2,\ell_2}(t)\bigr) =& \Cov\bigl(\cZ_{x_1,\theta_1,d_{x_1}^+-\ell_1}(t),\cZ_{x_2,\theta_2,d_{x_2}^+-\ell_2}(t)\bigr)\\
&+\Cov\bigl(\cZ_{x_1,\theta_1,d_{x_1}^+-\ell_1+1}(t),\cZ_{x_2,\theta_2,d_{x_2}^+-\ell_2+1}(t)\bigr)\\
& - \Cov\bigl(\cZ_{x_1,\theta_1,d_{x_1}^+-\ell_1}(t),\cZ_{x_2,\theta_2,d_{x_2}^+-\ell_2+1}(t))\\
& -\Cov\bigl(\cZ_{x_1,\theta_1,d_{x_1}^+-\ell_1+1}(t),\cZ_{x_2,\theta_2,d_{x_2}^+-\ell_2}(t)\bigr), \end{aligned}\end{align}
where the covariances on the right hand side could be calculated by \eqref{relation_Cov}, \eqref{sigma_ast} and \eqref{sigma sr}.

Moreover, the variance of $Z_{x,\theta,\ell}(t)$ is given by
\begin{align}\label{Var_sll}
\bigsigma_{x,\theta,\ell}^2(t)   = & \bigsigma_{x,\theta,d_{x}^+-\ell, d_{x}^+-\ell}(e^{-t})+
  \bigsigma_{x,\theta,d_{x}^+-\ell+1,d_{x}^+-\ell+1}(e^{-t})\nonumber \\
  &-2\bigsigma_{x,\theta,d_{x}^+-\ell, d_{x}^+-\ell+1}(e^{-t})+b^2(d^+_x,e^{-t},d^+_x-\ell)\psi_{x,\theta,\theta},
\end{align}
where $\psi_{x,\theta,\theta}=\mu_xq_x(\theta)(1-q_x(\theta))$. This completes the proof of Theorem~\ref{thm-CLT-S}.

\subsection{Proof of Theorem~\ref{normality}}\label{sec:normality}
We first show that the moment regularity condition (i.e., Assumption~\ref{cond:CLT}) guarantees that, as $n\rightarrow\infty$, we have not only $f^{(n)}_{\clubsuit}(z)\rightarrow f_{\clubsuit}(z)$ for all $\clubsuit\in\{S, D, H^+, I^+, W\}$, but also we have these convergences together with all their derivatives, uniformly on $[0,1]$.

\begin{lemma}\label{lem-remak}
The Assumption~\ref{cond:CLT} guarantees that, as $n\rightarrow\infty$, $f^{(n)}_{\clubsuit}(z)\rightarrow f_{\clubsuit}(z)$, for all $\clubsuit\in\{S, D, H^+, I^+, W\}$, together with all their derivatives, uniformly on $[0,1]$.
\end{lemma}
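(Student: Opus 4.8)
\emph{Proof plan.} Since $f^{(n)}_D=1-f^{(n)}_S$ and the affine summand $\lambda z$ appearing in $f^{(n)}_{I^+}$ and $f^{(n)}_W$ carries no dependence on $n$ (nor do the degrees $d_x^\pm$), it suffices to treat generic series of the form
\[
G^{(n)}(z)=\sum_{x\in\cX}\mu^{(n)}_x\, w_x\sum_{\theta=1}^{d_x^+}q^{(n)}_x(\theta)\, p_{x,\theta}(z),\qquad z\in[0,1],
\]
where $w_x\in\{1,d_x^-\}$ and $p_{x,\theta}$ is an ($n$-independent) polynomial of degree at most $d_x^+$ --- namely $p_{x,\theta}=\beta(d_x^+,\cdot\,,d_x^+-\theta+1)$ for $f_S$ and $f_W$, and $p_{x,\theta}=\sum_{\ell=d_x^+-\theta+1}^{d_x^+}\ell\,b(d_x^+,\cdot\,,\ell)$ for $f_{H^+}$ --- and to show that for each integer $k\ge 0$, $\tfrac{d^k}{dz^k}G^{(n)}\to\tfrac{d^k}{dz^k}G$ uniformly on $[0,1]$, where $G$ is the same series with $\mu_x,q_x$ in place of $\mu^{(n)}_x,q^{(n)}_x$.

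\emph{Step 1: a uniform derivative bound.} Expanding $z^r(1-z)^{d-r}$ into monomials, the sum of the absolute values of its coefficients is at most $2^d$, while $\binom dr\le 2^d$; differentiating $k$ times multiplies each monomial coefficient by a factor $\le d^k$, and on $[0,1]$ each surviving monomial is bounded by $1$. Hence for every $k$ there is a constant $c_k$ with
\[
\sup_{z\in[0,1]}\Bigl|\tfrac{d^k}{dz^k}\beta(d,z,\ell)\Bigr|\le c_k\,5^{\,d},\qquad \sup_{z\in[0,1]}\Bigl|\tfrac{d^k}{dz^k}\bigl(\ell\,b(d,z,\ell)\bigr)\Bigr|\le c_k\,5^{\,d},
\]
uniformly over $d\ge1$ and $0\le\ell\le d$. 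Since $\sum_{\theta=1}^{d_x^+}q^{(n)}_x(\theta)\le 1$ and $w_x\le d_x^-$, the term of index $x$ in $\tfrac{d^k}{dz^k}G^{(n)}$ is bounded on $[0,1]$ by $c_k\,d_x^+ d_x^-\,5^{d_x^+}\mu^{(n)}_x$, and using $ab\le\tfrac12(a^2+b^2)$ this is $\le C_k\,\mu^{(n)}_x\bigl(A_0^{d_x^+}+A_0^{d_x^-}\bigr)$ for a fixed $A_0>1$. By Assumption~\ref{cond:CLT} the series $\sum_{x\in\cX}C_k\mu^{(n)}_x(A_0^{d_x^+}+A_0^{d_x^-})$ is $O(1)$ uniformly in $n$, and the analogous series with $\mu_x$ is finite.

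\emph{Step 2: uniform tail in $x$, then the head.} For any $A_1>A_0$ and $m\in\NN$, $\sum_{x:\,d_x^+>m}\mu^{(n)}_xA_0^{d_x^+}\le (A_0/A_1)^{m}\sum_x\mu^{(n)}_xA_1^{d_x^+}=(A_0/A_1)^{m}\,O(1)$, and similarly with $d_x^-$; so, given $\varepsilon>0$, one fixes $m$ making the contribution of $\{x:\max(d_x^+,d_x^-)>m\}$ to the bound of Step~1 less than $\varepsilon$, uniformly in $n$ (and also for $G$). On $\cX_m:=\{x:\max(d_x^+,d_x^-)\le m\}$ every $|\tfrac{d^k}{dz^k}p_{x,\theta}|$ is bounded by a constant $M_{m,k}$; by tightness of the $\mu^{(n)}$ (which follows from $\sum_x\mu^{(n)}_x=\sum_x\mu_x=1$ and $\mu^{(n)}_x\to\mu_x$, e.g.\ by Scheff\'e's lemma) one picks a finite $\cX^F\subset\cX_m$ with $\sum_{x\in\cX_m\setminus\cX^F}\mu^{(n)}_x<\varepsilon$ for all large $n$, and $\sum_{x\in\cX_m\setminus\cX^F}\mu_x<\varepsilon$. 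Over the finite set $\cX^F$ the sum defining $\tfrac{d^k}{dz^k}G^{(n)}$ converges to that for $G$, uniformly in $z\in[0,1]$, by Assumption~\ref{cond-limit} ($\mu^{(n)}_x\to\mu_x$, $q^{(n)}_x(\theta)\to q_x(\theta)$, finitely many $\theta$, fixed polynomials). Adding the three contributions gives $\limsup_n\sup_{z\in[0,1]}\bigl|\tfrac{d^k}{dz^k}G^{(n)}(z)-\tfrac{d^k}{dz^k}G(z)\bigr|\le C\,\varepsilon$ for every $\varepsilon>0$, hence it equals $0$, which proves the lemma.

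\emph{Main obstacle.} The monomial-coefficient estimates of Step~1 are routine; the real content is forcing the sum over the (possibly infinite) type space $\cX$ to behave uniformly in $n$. This is exactly where Assumption~\ref{cond:CLT} is used: via the ``one extra exponential power'' argument (cf.\ Remark~\ref{rem-finmo}) the high-degree types contribute uniformly little, after which the pointwise data of Assumption~\ref{cond-limit}, together with tightness on the bounded-degree part, upgrade to uniform convergence of $G^{(n)}$ and of all its derivatives; without the exponential-moment control this upgrade fails.
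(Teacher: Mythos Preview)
Your proof is correct and complete, but it follows a different route from the paper's. The paper casts each $f^{(n)}_{\clubsuit}(z)$ as an expectation $\EE\,h(X_n;z)$ for a random pair $X_n=(D_n^+,\Theta_n)$ with law $\sum_{x}\mu^{(n)}_x q^{(n)}_x(\cdot)$, proves pointwise convergence of each derivative via $X_n\tod X$ together with the uniform integrability that Assumption~\ref{cond:CLT} supplies (through the \emph{polynomial} bound $|\partial_z^{\,j}b(v,z,\ell)|\le(2v)^j$), and then upgrades to uniform convergence by Arzel\`a--Ascoli. You instead argue entirely analytically: an exponential coefficient bound $c_k\,5^{d_x^+}$ on the $k$-th derivative of each summand, a high-degree tail killed via the ``one more exponential power'' trick from Assumption~\ref{cond:CLT}, then a tightness/Scheff\'e split of the bounded-degree part into a finite head (where Assumption~\ref{cond-limit} gives the limit) and a negligible residue. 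Your approach is more elementary---no random-variable reformulation, no Arzel\`a--Ascoli---and yields uniform convergence directly; the paper's approach gives a sharper intermediate estimate and fits the probabilistic spirit of the surrounding arguments. Both use Assumption~\ref{cond:CLT} in exactly the same essential way, namely to make the contribution of high-degree types uniformly small in $n$.
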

\begin{proof}
We only provide the proof for $f_{H^+}$. The proof for the case $\clubsuit\in\{S, D, I^+, W\}$ follows in the same way. Let us consider the following function
$$h((v,y);z):=\sum_{\ell=v-y+1}^{v}\ell b(v,z,\ell).$$
We define a sequence of bi-dimensional nonnegative integer valued random variables $\{X_n\}$ and $X$ with distribution respectively,
$$\PP(X_n=(x,y))=\sum_{x\in\cA_{v}}\mu^{(n)}_{x}q^{(n)}_x(y),$$
and
$$\PP(X=(x,y))=\sum_{x\in\cA_{v}}\mu_{x}q_x(y),$$
where $\cA_v:=\{x\in\cX: d^+_x=v\}$. Then it follows that
$$f^{(n)}_{H^+}(z)=\EE h(X_n;z),$$
$$f_{H^+}(z)=\EE h(X;z).$$
By Assumption~\ref{cond-average}, we have $X_n\rightarrow X$ in distribution as $n\rightarrow\infty$. Moreover, for any $z\in[0,1]$, $0\leq h((v,y);z)\leq v$. Thus, $h(X_n)\leq X^{(1)}_n$, where $X^{(1)}_n$ is first dimensional component of $X_n$. Note also that by Assumption~\ref{cond:CLT}, $X^{(1)}_n$ is uniformly integrable. Hence we have (as $n\to \infty$) for all $z\in[0,1]$,
$$f^{(n)}_{H^+}(z)=\EE h(X_n;z)\rightarrow \EE h(X;z)=f_{H^+}(z).$$
Further, an elementary calculation gives that
\begin{equation*}
  \frac{d}{dz}b(v,z,\ell)=vb(v-1,z,\ell-1)-vb(v-1,z,\ell).
\end{equation*}
Combining now with the fact that $b(v,z,\ell)\in[0,1]$, we have $|\frac{d}{dz}b(v,z,\ell)|\leq 2v$. In addition, using a simple induction gives that for every $j\geq 0$, $|\frac{d^j}{dz^j}b(v,z,\ell)|\leq (2v)^j$. We therefore obtain that
\begin{equation}\label{bound_derivt}
 \bigl|\frac{d^j}{dz^j}h(X_n;z)\bigr|\leq (2X^{(1)}_n)^j\sum_{\ell=1}^{X^{(1)}_n}\ell\leq (2X^{(1)}_n)^{j+2}.
\end{equation}
This is again, by Assumption~\ref{cond:CLT}, uniformly integrable. Hence, we also have (as $n\to \infty$)
$$\frac{d^j}{dz^j}f^{(n)}_{H^+}(z)=\EE \frac{d^j}{dz^j} h(X_n;z)\rightarrow \EE \frac{d^j}{dz^j}h(X;z)=\frac{d^j}{dz^j}f_{H^+}(z)$$
for all $z\in[0,1]$. Moreover, \eqref{bound_derivt} also implies all these derivatives are uniformly bounded. Thus by the Arzela-Ascoli theorem, as $n\rightarrow\infty$, $f^{(n)}_{H^+}(z)\rightarrow f_{H^+}(z)$ together with all its derivatives uniformly on $[0,1]$. This completes the proof for $\clubsuit=H^+$. 
\end{proof}

We can now start the proof of Theorem~\ref{normality}. Recall that $L_n(t)$ denotes the total number of alive in balls at time $t$. At the initial time, $L_n(0)=n\lambda^{(n)}$ and $L_n(t)$ decreases by 1 each time a (in) ball dies. Since the death happens after an exponential time with rate 1 independently, therefore on $[0,\tst]$, writing in differential form, we have
$$dL_n(t)=-L_n(t)dt+d\m_t,$$
where $\m$ is a martingale. Then by Ito's lemma we have
$$d(e^tL_n(t))=e^tdL_n(t)+e^tL_n(t)dt=e^td\m_t,$$
which implies $\widehat{L}_n(t):=e^tL_n(t)$ is another martingale.
The quadratic variation of $\widehat{L}_n(t)$ is
\begin{align*}
  [\widehat{L}_n,\widehat{L}_n]_t & =\sum_{0<s\leq t}|\Delta \widehat{L}_n(s)|^2=\sum_{0<s\leq t}e^{2s}|\Delta L_n(s)|^2 \\
   &=\sum_{0<s\leq t}e^{2s}(L_n(s)-L_n(s-))=\int_{0}^{t}e^sd(-L_n(s)) \\
   & =-e^{2t}L_n(t)+L_n(0)+\int_{0}^{t}2e^{2s}L_n(s)ds.
\end{align*}
In particular,
$$[\widehat{L}_n,\widehat{L}_n]_t\leq e^{2t}\int_{0}^{t}d(-L_n(s))\leq n\lambda^{(n)}e^{2t}.$$

In addition, as shwon in the proof of Lemma~\ref{centrality}, uniformly on $[0,\tau_n]$ we have
$$\sup\limits_{t\leq \tau_n}\bigl|\frac{L_n(t)}{n}-\lambda e^{-t}\bigr|\top 0.$$
Going back to the quadratic variation, for every $t\in[0,T\wedge\tau_n]$ with $T$ fixed,
\begin{align*}
  [\widehat{L}_n,\widehat{L}_n]_t & =-e^{2t}L_n(0)e^{-t}+L_n(0)+\int_{0}^{t}2e^{2s}L_n(0)e^{-s}ds+o_p(n) \\
   & =L_n(0)(-e^t+1+2e^t-2)+o_p(n) \\
   & =L_n(0)(e^t-1)+o_p(n)\\
   & =\frac{\lambda n}{2}(e^t-1)+o_p(n).
\end{align*}

Next we define
$$\widetilde{L}_n(t):=n^{-1/2}\bigl(\widehat{L}_n(t)-\widehat{L}_n(0)\bigr)=n^{-1/2}\bigl(e^tL_n(t)-L_n(0)\bigr),$$
and the quadratic variation is given by
$$[\widetilde{L}_n,\widetilde{L}_n]_t=n^{-1}[\widehat{L}_n,\widehat{L}_n]_t=\frac{\lambda}{2}(e^t-1)+o_p(1).$$

Let us stop the process at $\tau_n\leq\tst$. By assumption $\tau_n\top t_0$, the quadratic variation of the stopped process converges in probability to $\lambda(e^{t\wedge t_0}-1)/2$. Then for any fixed $t\geq 0$, there exist some constant $C$ such that
$$[\widetilde{L}_n,\widetilde{L}_n]_{t\wedge\tau_n}=n^{-1}[\widehat{L}_n,\widehat{L}_n]_{t\wedge\tau_n}\leq \lambda^{(n)}e^{2t} \leq Ce^{2t},$$
which holds uniformly for $n$. Thus, from Theorem~\ref{martingale_clt} for the stopped process, we have
$$\widetilde{L}_n(t\wedge\tau_n)\tod\widetilde{Z}(t\wedge t_0)\quad \quad \text{in} \quad \cD\left[0,\infty\right),$$
where $\widetilde{Z}$ is a continuous Gaussian process with $\mathbb{E}\widetilde{Z}(t)=0$ and the covariances
$$\mathbb{E}\bigl[\widetilde{Z}(t)\widetilde{Z}(u)\bigr]=\lambda(e^{t}-1)/2, \quad\quad 0\leq t\leq u<\infty.$$

We further define
$$\widetilde{\widehat{L}}_n(t):=e^{-t}\widetilde{L}_n(t), \quad \widehat{Z}(t):=e^{-t}\widetilde{Z}(t),$$
so that
$$\widetilde{\widehat{L}}_n(t\wedge\tau_n)\tod \widehat{Z}(t\wedge t_0)\quad \quad \text{in} \quad \cD\left[0,\infty\right).$$

Let $U_{x,\theta,s}^{(n)}(t)$ and $V_{x,\theta,s}^{(n)}(t)$ be as defined in the proof of Theorem~\ref{sec-CLT-S}. Further, we  denote by $V_{x,s}^{(n)}(t)$ the number of bins that are of type $x$ and with at least $s$ (in) balls at time $t$.

Note that the followings hold:
\begin{equation}\label{Sn}
  S_n(t)=\sum_{x\in\cX}\sum_{\theta=1}^{d^+_x}V_{x,\theta,d^+_x-\theta+1}^{(n)}(t), \quad D_n(t)=n-S_n(t),
\end{equation}
 and,
\begin{align}\begin{aligned}\label{HDn}
  H^+_n(t) & =\sum_{x\in\cX}\sum_{\theta=1}^{d^+_x}\sum_{s=d^+_x-\theta+1}^{d^+_x}sU_{x,\theta,s}^{(n)}(t) \\
   & = \sum_{x\in\cX}\sum_{\theta=1}^{d^+_x}\bigl[(d^+_x-\theta+1)V^{(n)}_{x,\theta,d^+_x-\theta+1}(t)
   +\sum_{s=d^+_x-\theta+2}^{d^+_x}V_{x,\theta,s}^{(n)}(t)\bigr].
\end{aligned}\end{align}
Further, $I^+_n(t)=L_n(t)-H^+_n(t)$, and,
\begin{align}\begin{aligned}\label{ICn}
W_n(t) =& L_n(t)-\sum_{x\in\cX}d^-_x\sum_{\theta=1}^{d^+_x}V_{x,\theta,d^+_x-\theta+1}^{(n)}(t).
\end{aligned}\end{align}

For  $\widetilde{V}_{x,s}^{(n)}(t) =\sum_\theta \widetilde{V}_{x,\theta, s}^{(n)}(t)$, we have
\begin{align*}
\widetilde{V}_{x,s}^{(n)}(t) &= \sum_{\theta=1}^{d^+_x}\widetilde{M}_{x,\theta,s}^{(n)}(t)+\sum_{j=s+1}^{d^+_x}s\binom{j-1}{s}\int_{0}^{t}(e^{-r}-e^{-t})^{j-s-1}
  e^{-r}\sum_{\theta=1}^{d^+_x}\widetilde{M}_{x,\theta,j}^{(n)}(r)dr\\
  &=\widetilde{M}_{x,s}^{(n)}(t)+\sum_{j=s+1}^{d^+_x}s\binom{j-1}{s}\int_{0}^{t}(e^{-r}-e^{-t})^{j-s-1}
  e^{-r}\widetilde{M}_{x,j}^{(n)}(r)dr,
\end{align*}
where $\widetilde{M}_{x,s}^{(n)}(t):=\sum_{\theta=1}^{d^+_x}\widetilde{M}_{x,\theta,s}^{(n)}(t)$. This is again a partial sum and  Theorem~\ref{martingale_clt} applies. We therefore have
$$\widetilde{V}_{x,s}^{(n)}(t\wedge\tau_n)\tod\widetilde{\cZ}_{x,s}(t\wedge t_0):= \sum_{\theta=1}^{d^+_x}\widetilde{\cZ}_{x,\theta,s}(t\wedge t_0),$$
in $\cD\left[0,\infty\right)$. More precisely,
$$\widetilde{M}_{x,s}^{(n)}(t\wedge\tau_n)\tod{Y}_{x,s}(t\wedge t_0)\quad \quad \text{in} \quad\cD\left[0,\infty\right),$$
where ${Y}_{x,s}$ is a continuous Gaussian process with $\mathbb{E}{Y}_{x,s}(t)=0$ and covariance
$$\mathbb{E}\bigl[{Y}_{x,s}(t){Y}_{x,s}(u)\bigr]=\sum_{\theta=1}^{d^+_x}\int_{0}^{t}e^{2sr}d(-\varphi_{x,\theta,s}(e^{-r}))
=\int_{0}^{t}e^{2sr}d(-\varphi_{x,s}(e^{-r})), \quad\quad 0\leq t\leq u<\infty,$$
where
$$\varphi_{x,s}(y)=\sum_{\theta=1}^{d^+_x}\varphi_{x,\theta,s}(y)=\mu_x\beta(d^+_x,y,s).$$

We now prove that the convergence also hold for an infinite sum which is used to prove our final result. Denote $\cX^+_s$ and $\cX^-_s$ the set of characteristics which have in-degree $d^+_x\geq s$ and out-degree $d^-_x\geq s$ respectively. Let
$$Q_{x,\theta,s}^{(n)}(t):=e^{-st}n^{-1/2}\bigl(\widehat{v}_{x,\theta,s}^{(n)}-ne^{st}\mu^{(n)}_x
q_x^{(n)}(\theta)\beta(d^+_x,e^{-t},s)\bigr).$$
  Then we have for all $(x,\theta,s)$ and all $t>0$,
$$\Var(Q_{x,\theta,s}^{(n)}(t))\leq \mu^{(n)}_x
q_x^{(n)}(\theta)(1-q_x^{(n)}(\theta)).$$

Let $\Theta_x$ be an arbitrary subset of $[1,\ldots,d^+_x]$. By \eqref{clt_initial}, the convergence holds for the finite sum $\sum_{x\in\cX\setminus\cX^+_s}\sum_{\theta\in\Theta_x}Q_{x,\theta,s}^{(n)}(t)$. We now consider the following infinite sum
$$\sum_{x\in\cX^+_s}(d^+_x+d^-_x)\sum_{\theta\in\Theta_x}Q_{x,\theta,s}^{(n)}(t).$$
Since power function can be controlled by exponential function, there exists a constant $C>1$ such that for all $t>0$,
\begin{equation*}
\begin{split}
\sum_{x\in\cX^+_s}((d^+_x)^2+(d^-_x)^2)\Var(\sum_{\theta\in\Theta_x}Q_{x,\theta,s}^{(n)}(t))
\leq & \sum_{x\in\cX^+_s}((d^+_x)^2+(d^-_x)^2)\sum_{\theta\in\Theta_x}\Var(Q_{x,\theta,s}^{(n)}(t)) \\
\leq & \sum_{x\in\cX^-_s}(d^-_x)^2\mu^{(n)}_x+2\sum_{x\in\cX^+_s}(d^+_x)^2\mu^{(n)}_x \\
\leq & \sum_{x\in\cX^-_s}C^{d^-_x}\mu^{(n)}_x+2\sum_{x\in\cX^+_s}C^{d^+_x}\mu^{(n)}_x .
\end{split}
\end{equation*}

By Assumption~\ref{cond:CLT}, as $s\rightarrow\infty$,
\begin{equation}\label{Q_conver}
\sup_{t>0}|\sum_{x\in\cX^+_s}((d^+_x)^2+(d^-_x)^2)\sum_{\theta=1}^{d^+_x}Q_{x,\theta,s}^{(n)}(t)|\top 0.
\end{equation}
Then using the convergence of the partial sums of $Q_{x,\theta,s}^{(n)}$, we can extend the convergence to an infinite sum of $Q_{x,\theta,s}^{(n)}$, by using e.g.~\cite[Theorem 4.2]{billingsley1968convergence}. Further, the limit is also
continuous. We therefore have in $\mathcal{C}\interval$,
\begin{equation}\label{Qsum_conver}
\sum_{x\in\cX}(d^+_x+d^-_x)\sum_{\theta\in\Theta_x}Q_{x,\theta,s}^{(n)}(t)\tod \sum_{x\in\cX}(d^+_x+d^-_x)\sum_{\theta\in\Theta_x}\cZ^{\ast}_{x,\theta,s}(t).
\end{equation}
On the other hand, for the following infinite sum, we have
\begin{align*}
\bar{V}_{s}^{(n)}(t): &= \sum_{x\in\cX}(d^+_x+d^-_x)\sum_{\theta\in\Theta_x}\widetilde{V}_{x,\theta,s}^{(n)}(t)\\ &=\sum_{x\in\cX}(d^+_x+d^-_x)\sum_{\theta\in\Theta_x}\widetilde{M}_{x,\theta,s}^{(n)}(t)\\
& \quad +\sum_{x\in\cX}(d^+_x+d^-_x)\sum_{\theta\in\Theta_x}\sum_{j=s+1}^{d^+_x}s\binom{j-1}{s}\int_{0}^{t}(e^{-r}-e^{-t})^{j-s-1}
  e^{-r}\widetilde{M}_{x,\theta,j}^{(n)}(r)dr\\
  &=\bar{M}_{s}^{(n)}(t)+\sum_{j=s+1}^{\infty}s\binom{j-1}{s}\int_{0}^{t}(e^{-r}-e^{-t})^{j-s-1}
  e^{-r}\bar{M}_{j}^{(n)}(r)dr,
\end{align*}
where $\bar{M}_{s}^{(n)}(t):=\sum_{x\in\cX}(d^+_x+d^-_x)\sum_{\theta\in\Theta_x}\widetilde{M}_{x,\theta,s}^{(n)}(t)$ is a martingale with initial value 0.

We set the convention that $\widetilde{M}_{x,s}^{(n)}(t)=0$ for $d^+_x<s$. By a similar way as done for \eqref{crochet_Mtilde}, we obtain that the quadratic variation of $\bar{M}_{s}^{(n)}$, which is
$$[\bar{M}_{s}^{(n)},\bar{M}_{s}^{(n)}]_t\leq 2e^{2st}\sum_{x\in\cX^+_s}((d^+_x)^2+(d^-_x)^2)V_{x,s}^{(n)}(0)/n.$$
Using Assumption~\ref{cond:CLT} we have for all $A>1$, there exists constants $C_s$ and $C_{s,A}$ such that
\begin{equation*}
\begin{split}
\sum_{x\in\cX^+_s}((d^+_x)^2+(d^-_x)^2)V_{x,s}^{(n)}(0)\leq & \sum_{x\in\cX^-_s}(d^-_x)^2V_{x,s}^{(n)}(0)+2\sum_{x\in\cX^+_s}(d^+_x)^2V_{x,s}^{(n)}(0) \\
\leq & A^{-s}\sum_{x\in\cX}\mu^{(n)}_x(2(C_sA)^{d^+_x}+(C_sA)^{d^-_x})\leq C_{s,A} A^{-s}n.
\end{split}
\end{equation*}
 Thus for any $t>0$ and a fixed $T$, by choosing $A=e^{2t+4T}$ we get
$$[\bar{M}_{s}^{(n)},\bar{M}_{s}^{(n)}]_t\leq 2\sum_{x\in\cX^+_s}((d^+_x)^2+(d^-_x)^2)e^{2st}V_s^{(n)}(0)/n\leq C_{s,A} e^{-4Ts}.$$
By Doob's $L^2$ inequality, we have (for some constant $C'_{s,T}$)
$$\mathbb{E}[\sup_{t\leq T}(\bar{M}_{s}^{(n)}(t))^2]\leq4\mathbb{E}[\bar{M}_{s}^{(n)},\bar{M}_{s}^{(n)}]_T\leq C'_{s,T}e^{-4Ts}.$$
Then combining the Cauchy-Schwarz inequality, we obtain (for some constant $C''_{s,T}$)
\begin{equation}\label{supcontrol}
  \mathbb{E}[\sup_{t\leq T}|\bar{M}_{s}^{(n)}(t)|]\leq C''_{s,T}e^{-2Ts}.
\end{equation}
 Let us define
$$\xi_{N,n}(t):=\sum_{j=N}^{\infty}s\binom{j-1}{s}\int_{0}^{t}(e^{-r}-e^{-t})^{j-s-1}
  e^{-r}\bar{M}_{j}^{(n)}(r)dr.$$
 Then by \eqref{supcontrol} and some (simple) calculations we find that
\begin{align*}
\mathbb{E}\bigl(\sup_{t\leq T}|\xi_{N,n}(t)|\bigr) &\leq\sum_{j=N}^{\infty}s\binom{j-1}{s}\int_{0}^{t}(e^{-r}-e^{-t})^{j-s-1}
  e^{-r}\mathbb{E}\bigl[\sup_{t\leq T}|\bar{M}_{j}^{(n)}(t)|\bigr]dr\\
  & \leq C''_{s,T}T\sum_{j=N}^{\infty}s\binom{j-1}{s}(1-e^{-T})^{j-s-1}e^{-2Tj}\\
  & \leq C''_{s,T}Tse^{sT}\sum_{j=N}^{\infty}e^{-2Tj},
\end{align*}
which implies for any fixed $s$ and $T$, $\mathbb{E}(\sup_{t\leq T}|\xi_{N,n}(t)|)\rightarrow 0$ as $N\rightarrow\infty$, uniformly in $n$. 

Using again the convergence of the partial sums of $\widetilde{V}_{x,\theta,s}^{*(n)}$, we can extend the convergence to some infinite sums of $\widetilde{V}_{x,\theta,s}^{(n)}$. It follows that in $\cD\interval$,
\begin{equation}\label{Vsum_conver}
\sum_{x\in\cX}(d^+_x+d^-_x)\sum_{\theta\in\Theta_x}e^{-st}\widetilde{V}_{x,\theta,s}^{(n)}(t\wedge\tau_n)\tod
\sum_{x\in\cX}(d^+_x+d^-_x)\sum_{\theta\in\Theta_x}\cZ'_{x,\theta,s}(t\wedge t_0).
\end{equation}
Combining now \eqref{Qsum_conver} and \eqref{Vsum_conver}, it then follows that jointly
\begin{equation*}
 \sum_{x\in\cX}(d^+_x+d^-_x)\sum_{\theta\in\Theta_x}V^{*(n)}_{x,\theta,s}(t\wedge \tau_n)\tod
\sum_{x\in\cX}(d^+_x+d^-_x)\sum_{\theta\in\Theta_x}\cZ_{x,\theta,s}(t\wedge t_0),
\end{equation*}
and also the partial sum for any fixed $r$,
\begin{equation*}
 \sum_{s=1}^{r}\sum_{x\in\cX}(d^+_x+d^-_x)\sum_{\theta\in\Theta_x}V^{*(n)}_{x,\theta,s}(t\wedge \tau_n)\tod
\sum_{s=1}^{r}\sum_{x\in\cX}(d^+_x+d^-_x)\sum_{\theta\in\Theta_x}\cZ_{x,\theta,s}(t\wedge t_0).
\end{equation*}
Then notice that
\begin{align*}
   \sum_{x\in\cX}\sum_{\theta=1}^{d^+_x}\sum_{s=d^+_x-\theta+2}^{d^+_x}V_{x,\theta,s}^{*(n)}(t)
  =  \sum_{s=2}^{\infty}\sum_{x\in\cX}\sum_{\theta\in\Theta_x}V_{x,\theta,s}^{*(n)}(t).
\end{align*}
Similarly, we define the following infinite tail sum
\begin{equation*}
  \bar{\xi}_{N,n}(t):=\sum_{s=N}^{\infty}\sum_{x\in\cX}\sum_{\theta\in\Theta_x}\widetilde{V}_{x,\theta,s}^{(n)}.
\end{equation*}
Note that when $s$ is large, $C''_{s,T}$ can be bounded by another constant $C_T$ only depending on $T$. Then by the same way as above and \eqref{supcontrol}, we obtain
\begin{align*}
  \mathbb{E}\bigl(\sup_{t\leq T}|\bar{\xi}_{N,n}(t)|\bigr) &\leq \sum_{s=N}^{\infty}\sum_{j=s+1}^{\infty}s\binom{j-1}{s}\int_{0}^{t}(e^{-r}-e^{-t})^{j-s-1}
  e^{-r}\mathbb{E}\bigl[\sup_{t\leq T}|\bar{M}_{j}^{(n)}(t)|\bigr]dr
  +\sum_{s=N}^{\infty}\bar{M}_{s}^{(n)}(t) \\
  & \leq C_{T}\sum_{s=N}^{\infty}e^{-2Ts}+ \leq C_{T}T\sum_{j=N+1}^{\infty}\sum_{s=N}^{j-1}s\binom{j-1}{s}(1-e^{-T})^{j-s-1}e^{-2Tj} \\
   & \leq C_{T}\sum_{s=N}^{\infty}e^{-2Ts}+ C_T\sum_{j=N+1}^{\infty}je^{jT}e^{-2Tj} \leq 2C_{T}\sum_{s=N}^{\infty}se^{-sT},
\end{align*}
which implies that for any fixed $T>0$, $\mathbb{E}(\sup_{t\leq T}|\xi_{N,n}(t)|)\rightarrow 0$ as $N\rightarrow\infty$, uniformly in $n$. Hence the same argument allow us to pass the limit under the infinite sum and with the limit being continuous. Using \eqref{Q_conver}, the other part for the sum of $Q^{(n)}_{x,\theta,s}$ also converges. It then follows that, by using \eqref{Vsum_conver}, jointly in $\cD\interval$,
$$\sum_{x\in\cX}\sum_{\theta=1}^{d^+_x}\sum_{s=d^+_x-\theta+2}^{d^+_x}V_{x,\theta,s}^{*(n)}(t\wedge\tau_n)
\tod
\sum_{x\in\cX}\sum_{\theta=1}^{d^+_x}\sum_{s=d^+_x-\theta+2}^{d^+_x}\cZ_{x,\theta,s}(t\wedge t_0 ).$$

Using the same argument but easier than the one above, we can obtain the other three convergence relations
$$\sum_{x\in\cX}\sum_{\theta=1}^{d^+_x}V^{*(n)}_{x,\theta,d^+_x-\theta+1}(t\wedge\tau_n)
\tod \sum_{x\in\cX}\sum_{\theta=1}^{d^+_x}\cZ_{x,\theta,d^+_x-\theta+1}(t\wedge t_0) ,$$
$$\sum_{x\in\cX}\sum_{\theta=1}^{d^+_x}(d^+_x-\theta+1)V^{*(n)}_{x,\theta,d^+_x-\theta+1}(t\wedge\tau_n)
\tod
\sum_{x\in\cX}\sum_{\theta=1}^{d^+_x}(d^+_x-\theta+1)\cZ_{x,\theta,d^+_x-\theta+1}(t\wedge t_0),$$
and,
$$\sum_{x\in\cX}d^-_x\sum_{\theta=1}^{d^+_x}V^{*(n)}_{x,\theta,d^+_x-\theta+1}(t\wedge\tau_n)
\tod
\sum_{x\in\cX}d^-_x\sum_{\theta=1}^{d^+_x}\cZ_{x,\theta,d^+_x-\theta+1}(t\wedge t_0).$$

Then combining together \eqref{conver_X}, \eqref{Sn}, \eqref{HDn} and \eqref{ICn}  yields
\begin{align*}
n^{-1/2}\left(\clubsuit_n(t\wedge\tau_n)-n\hf^{(n)}_{\clubsuit}(t\wedge\tau_n)\right)\tod Z_{\clubsuit}(t\wedge t_0),
\end{align*}
with
\begin{equation}\label{Zs}
  Z_{S}:=\sum_{x\in\cX}\sum_{\theta=1}^{d^+_x}\cZ_{x,\theta,d^+_x-\theta+1},
\end{equation}
\begin{equation}\label{ZH+}
  Z_{H^+}:=\sum_{x\in\cX}\sum_{\theta=1}^{d^+_x}(d^+_x-\theta+1)\cZ_{x,\theta,d^+_x-\theta+1}
  +\sum_{x\in\cX}\sum_{\theta=1}^{d^+_x}\sum_{s=d^+_x-\theta+2}^{d^+_x}\cZ_{x,\theta,s},
\end{equation}
\begin{equation}\label{ZI+}
  Z_{I^+}:=\widehat{Z}-Z_{H^+},
\end{equation}
\begin{equation}\label{ZW}
  Z_{W}:=\widehat{Z}-\sum_{x\in\cX}d^-_x\sum_{\theta=1}^{d^+_x}\cZ_{x,\theta,d^+_x-\theta+1}.
\end{equation}

 In order to complete the proof, we also need to compute the covariances for the continuous Gaussian processes $Z_{S}, Z_{H^+}, Z_{I^+}$ and $Z_{W}$.

For convenience, we make a change of variable $y=e^{-t}$, which decreases from 1 to 0 as $t$ varies from 0 to $\infty$. We use the notations
\begin{align*}
\widehat{\bigsigma}^2(y):= \Var(\widehat{Z}(-\ln y)), \quad \widehat{\bigsigma}_{x,\theta,s}(y):=\Cov\bigl(\widehat{Z}(-\ln y),\cZ_{x,\theta,s}(-\ln y)\bigr),
\end{align*}
and,
$$\bigsigma_{x,\theta,r,s}(y):=\Cov\bigl(\cZ_{x,\theta,r}(-\ln y),\cZ_{x,\theta,s}(-\ln y)\bigr)$$
which has been already computed in Section~\ref{sec-CLT-S}. We now compute $\widehat{\bigsigma}^2(y)$ and $\widehat{\bigsigma}_{x,\theta,s}(y)$.

In order to compute $\widehat{\bigsigma}_{x,\theta,s}$, we apply Theorem~\ref{martingale_clt} to $\widetilde{L}_n$ and $\widetilde{M}_{x,\theta,s}^{(n)}$ for all $s=d^+_x-\theta+1, \dots, d^+_x$. Observe that each time $V_{x,\theta,s}^{(n)}$ decreases by 1, also an in ball dies and thus $L_n$ decreases by 1. Hence, the quadratic covariation is
\begin{align*}
  [\widetilde{M}_{x,\theta,s}^{(n)},\widetilde{L}_n]_t & =n^{-1}\sum_{0<r\leq t}\Delta M_{x,\theta,s}^{(n)}(r)\Delta \widehat{L}_n(r)=n^{-1}\sum_{0<r\leq t}\Delta \widehat{V}_{x,\theta,s}^{(n)}(r)\Delta \widehat{L}_n(r) \\
  & = n^{-1}\sum_{0<r\leq t}e^{(s+1)r}\Delta V_{x,\theta,s}^{(n)}(r)\Delta L_n(r)=n^{-1}\int_{0}^{t}e^{(s+1)r}d(-V_{x,\theta,s}^{(n)}(r)).
\end{align*}
Using integration by parts as before we obtain that
\begin{align*}
  [\widetilde{M}_{x,\theta,s}^{(n)},\widetilde{L}_n]_t & = \int_{0}^{t}e^{(s+1)r}d(-\varphi_{x,\theta,s}(e^{-r}))+o_p(1)\\
   & =\int_{e^{-t}}^{1}u^{-(s+1)}d\varphi_{x,\theta,s}(u)+o_p(1).
\end{align*}

We can now compute all needed covariances. First, for $\widehat{\bigsigma}^2(y)$ we have
\begin{equation}\label{sigmaww}
  \widehat{\bigsigma}^2(y):=\mbox{Var}(\widehat{Z}(-\ln y))=\mbox{Var}(y\widetilde{Z}(-\ln y))=\lambda(y-y^2)/2.
\end{equation}

Similarly, the above analysis  together with Theorem~\ref{martingale_clt}, gives that for all $x,\theta$,
$$\mbox{Cov}\bigl(Y_{x,\theta,s}(-\ln y),\widetilde{Z}(-\ln y)\bigr)=\int_{y}^{1}u^{-(s+1)}d\varphi_{x,\theta,s}(u).$$

On the other hand, for $v\leq t$, $\mbox{Cov}(Y_{x,\theta,s}(v),\widetilde{Z}(t))=\mbox{Cov}(Y_{x,\theta,s}(v),\widetilde{Z}(v))$. Thus we have
\begin{align*}\begin{aligned}
 & \mbox{Cov}(\widetilde{\cZ}_{x,\theta,s}(-\ln y),\widetilde{Z}(-\ln y))  \\
  = & \mbox{Cov}(Y_{x,\theta,s}(t),\widetilde{Z}(t))+\int_{0}^{t}(e^{-r}-e^{-t})^{j-s-1}
  e^{-r} \mbox{Cov}(Y_{x,\theta,s}(r),\widetilde{Z}(r))dr  \\
  = & \int_{y}^{1}u^{-(s+1)}d\varphi_{x,\theta,s}(u)+\sum_{j=s+1}^{d^+_x}s\binom{j-1}{s}f_{sj}(y),
\end{aligned}\end{align*}
where, with a change of variable $u=e^{-r}$, the function $f_{sj}(y)$ is
\begin{align*}\begin{aligned}
f_{sj}(y)& =\int_{y}^{1}(u-y)^{j-s-1} \int_{u}^{1}v^{-(j+1)}d\varphi_{x,\theta,j}(v)du\\
& = \int_{y}^{1}\int_{y}^{v}(u-y)^{j-s-1}v^{-(j+1)}dud\varphi_{x,\theta,j}(v)\\
 &= \frac{1}{j-s}\int_{y}^{1}(v-y)^{j-s}v^{-(j+1)}d\varphi_{x,\theta,j}(v).
\end{aligned}\end{align*}
We thus obtain
$$\mbox{Cov}(\widetilde{\cZ}_{x,\theta,s}(-\ln y),\widetilde{Z}(-\ln y))=\sum_{j=s}^{d_x^+}
\binom{j-1}{s-1}\int_{y}^{1}(v-y)^{j-s}v^{-(j+1)}d\varphi_{x,\theta,j}(v).$$
Also, notice that $\mbox{Cov}(\widehat{\cZ}_{x,\theta,s}(-\ln y),\widetilde{Z}(-\ln y))=0$ since they are independent. Then we conclude that
\begin{equation}\label{sigmaWs}
\begin{split}
 \widehat{\bigsigma}_{x,\theta,s}(y) &: =y^{s+1}\mbox{Cov}(\widetilde{\cZ}_{x,\theta,s}(-\ln y),\widetilde{Z}(-\ln y))\\
  & =y^{s+1}\sum_{j=s}^{d^+_x}
\binom{j-1}{s-1}\int_{y}^{1}(v-y)^{j-s}v^{-(j+1)}d\varphi_{x,\theta,j}(v).
  \end{split}
\end{equation}

We can write now the covariances for the processes $Z_{S}$, $Z_{H^+}$, $Z_{I^+}$ and $Z_{W}$ from $\widehat{\bigsigma}^2(y)$, $\widehat{\bigsigma}_{x,\theta,s}(y)$, $\bigsigma_{x,\theta,r,s}(y)$ given by \eqref{sigma sr} and $\bigsigma^{\ast}_{x,\theta_1,\theta_2,s_1,s_2}$ given by \eqref{sigma_ast}.  We only write the covariances between $Z_{S}$, $Z_{H^+}$ and $Z_{W}$; the covariances of $Z_{I^+}$ could be easily deduced from those of $Z_{H^+}$. For convenience, we set $\pi_x(\theta):=d^+_x-\theta+1$ in the following.

\medskip

For the variances,  by using \eqref{Zs}-\eqref{ZW}, we have:
\begin{align}\begin{aligned}\label{sigma S}
  \bigsigma_{S,S}(y)=&\sum_{x\in\cX}\sum_{\theta=1}^{d^+_x}\bigsigma_{x,\theta,\pi_x(\theta),\pi_x(\theta)}(y)
  +\sum_{x\in\cX}\sum_{\theta_1=1}^{d^+_x} \sum_{\theta_2=1}^{d^+_x} \bigsigma^{\ast}_{x,\theta_1,\theta_2, \pi_x(\theta_1),\pi_x(\theta_2)}
  (y),
\end{aligned}\end{align}
\begin{align}\begin{aligned}\label{sigma HD}
  \bigsigma_{H^+,H^+}(y)=\sum_{x\in\cX}\sum_{\theta=1}^{d^+_x}&\Bigl[\pi^2_x(\theta)\bigsigma_{x,\theta,\pi_x(\theta),\pi_x(\theta)}(y)
  + 2\sum_{s=\pi_x(\theta)+1}^{d^+_x}\pi_x(\theta)\bigsigma_{x,\theta,\pi_x(\theta),s}(y)\\
 &\quad +\sum_{s=\pi_x(\theta)+1}^{d^+_x} \sum_{r=\pi_x(\theta)+1}^{d^+_x}\bigsigma_{x,\theta,r,s}(y)\Bigr]\\
 +\sum_{x\in\cX}\sum_{\theta_1=1}^{d^+_x}\sum_{\theta_2=1}^{d^+_x}
 &\Bigl[\sum_{s_1=\pi_x(\theta_1)+1}^{d_x^+}\sum_{s_2=\pi_x(\theta_2)+1}^{d_x^+}
 \pi_x(\theta_1)\pi_x(\theta_2)\bigsigma^{\ast}_{x,\theta_1,\theta_2,\pi_x(\theta_1),\pi_x(\theta_2)}(y)\\
 &\quad +\bigsigma^{\ast}_{x,\theta_1,\theta_2,s_1,s_2}(y) +2\pi_x(\theta_2)\sum_{s_2=\pi_x(\theta_2)+1}^{d_x^+}
 \bigsigma^{\ast}_{x,\theta_1,\theta_2, s,\pi_x(\theta_2)}(y)\Bigr],
\end{aligned}\end{align}
and,
\begin{equation}\label{sigma_IC}
\begin{split}
  \bigsigma_{W,W}(y)= &\sum_{x\in\cX}\sum_{\theta=1}^{d^+_x}\Bigl[(d^-_x)^2 \bigsigma_{x,\theta,\pi_x(\theta),\pi_x(\theta)}(y)
  -2d^-_x \widehat{\bigsigma}_{x,\theta,\pi_x(\theta)}(y)\Bigr]+\widehat{\bigsigma}^2(y)\\
  &\quad +\sum_{x\in\cX}(d^-_x)^2\sum_{\theta_1=1}^{d^+_x}\sum_{\theta_2=1}^{d^+_x}
 \bigsigma^{\ast}_{x,\theta_1,\theta_2,\pi_x(\theta_1),\pi_x(\theta_2)}(y).
\end{split}
\end{equation}

For the covariances, by using again \eqref{Zs}-\eqref{ZW}, we have:
\begin{equation}\label{sigma_SHD}
\begin{split}
&  \bigsigma_{S,H^+}(y)= \sum_{x\in\cX}\sum_{\theta=1}^{d^+_x}\Bigl[\pi_x(\theta)\bigsigma_{x,\theta,\pi_x(\theta),\pi_x(\theta)}(y)
  +\sum_{s=\pi_x(\theta)+1}^{d^+_x}\bigsigma_{x,\theta,\pi_x(\theta),s}(y)\Bigr]\\
  &+\sum_{x\in\cX}\sum_{\theta_1,\theta_2=1}^{d^+_x}\Bigl[
 \pi_x(\theta_1)\bigsigma^{\ast}_{x,\theta_1,\theta_2,\pi_x(\theta_1),\pi_x(\theta_2)}(y)
 +
 \sum_{s=\pi_x(\theta_1)+1}^{d_x^+}
 \bigsigma^{\ast}_{x,\theta_1,\theta_2,s,\pi_x(\theta_2)}(y)\Bigr],
\end{split}
\end{equation}
\begin{equation}\label{sigma_SIC}
\begin{split}
  \bigsigma_{S,W}(y)= &\sum_{x\in\cX}\sum_{\theta=1}^{d^+_x}\Bigl[\widehat{\bigsigma}_{x,\theta,\pi_x(\theta)}(y)
  -d^-_x\bigsigma_{x,\theta,\pi_x(\theta),\pi_x(\theta)}(y)\Bigr]\\
  &-\sum_{x\in\cX}d^-_x\sum_{\theta_1=1}^{d^+_x}\sum_{\theta_2=1}^{d^+_x}
 \bigsigma^{\ast}_{x,\theta_1,\theta_2,\pi_x(\theta_1),\pi_x(\theta_2)}(y),
\end{split}
\end{equation}
and,
\begin{equation}\label{sigma_HDIC}
\begin{split}
 & \bigsigma_{H^+,W}(y)=\sum_{x\in\cX}\sum_{\theta=1}^{d^+_x}
  \Bigl[-d^-_x\pi_x(\theta)\bigsigma_{x,\theta,\pi_x(\theta),\pi_x(\theta)}(y)
  +\pi_x(\theta)\widehat{\bigsigma}_{x,\theta,\pi_x(\theta)}(y)\\
  & \quad \quad \quad \quad \quad \quad \quad \quad \quad
  +\sum_{s=\pi_x(\theta)+1}^{d^+_x}\bigl(\widehat{\bigsigma}_{x,\theta,s}(y)-d^-_x\bigsigma_{x,\theta,\pi_x(\theta),s}(y)\bigr)\Bigr]\\
& -\sum_{x\in\cX}d^-_x \sum_{\theta_1,\theta_2=1}^{d^+_x} \Bigl[
 \sum_{s=\pi_x(\theta_1)+1}^{d^+_x}
 \bigsigma^{\ast}_{x,\theta_1,\theta_2,s,\pi_x(\theta_2)}(y)
 +
 \pi_x(\theta_1)\bigsigma^{\ast}_{x,\theta_1,\theta_2,\pi_x(\theta_1),\pi_x(\theta_2)}(y)\Bigr].
\end{split}
\end{equation}

As for the covariances of $Z_{I^+}$, we can deduce from the above six formulas by using the relation \eqref{ZI+}. This completes the proof of Theorem~\ref{normality}.

\subsection{Proof of Theorem~\ref{normalityFinal}}\label{sec:normalityFinal}

As discussed in Remark~\ref{rem-finmo}, the Assumption~\ref{cond:CLT} implies Assumption~\ref{cond-average}. Hence, the case $\hz=0$ follows from Theorem~\ref{thm:LLN}. We now consider the case $\hz\in\left(0,1\right]$, where  $\hz$ is a stable solution and $\alpha:=f_W'(\hz)>0$.

Since $\alpha>0$, for a positive constant $\varepsilon$ small enough, we have $f_W(\hz-\varepsilon)<0$ and $f_W(\hz+\varepsilon)>0$.
By Lemma~\ref{lem-remak}, we have  $f^{(n)}_W\rightarrow f_W$ uniformly on $[0,1]$. For $n$ large enough, we also have $f^{(n)}_W(\hz-\varepsilon)<0$ and $f^{(n)}_W(\hz+\varepsilon)>0$. Hence for $n$ large enough, there exists a sequence $\hp_n$ in $(\hz-\varepsilon,\hz+\varepsilon)$ such that $f^{(n)}_W(\hp_n)=0$ and $f^{(n)}_W>0$ on $[\hz+\varepsilon,1]$. Since $\varepsilon$ can be arbitrarily small, we obtain $\hp_n\rightarrow\hz$.

Define $\widehat{t}_n:=-\ln\hp_n$. Consequently we also have $\widehat{t}_n\rightarrow t^{\star}$.

Next, we use the Skorokhod representation theorem which shows that one can change the probability space where all the random variables are well defined and all the convergence results of Theorem~\ref{thm-CLT-S} and $\tst\rightarrow t^{\star}$ (from Lemma~\ref{tau}) hold a.s..

Taking $t=\tst$ and $t_0=t^{\star}$, we get
\begin{align*}
  W_n(\tst) & =n\hf^{(n)}_W(\tst)+n^{1/2}Z_{W}(\tst\wedge t^{\star})+o(n^{1/2}) \\
   & = n\hf^{(n)}_W(\tst)+n^{1/2}Z_{W}(t^{\star})+o(n^{1/2}),
\end{align*}
by the continuity of $Z_{W}$. Since $W_n(\tst)=-1$, then
$$\hf^{(n)}_W(\tst)=-n^{-1/2}Z_{W}\bigl(t^{\star})+o(n^{-1/2}\bigr).$$

Since, as $n\rightarrow\infty$, $\tst\rightarrow t^{\star}$ and $\widehat{t}_n\rightarrow t^{\star}$ hold a.s., there exists some $\xi_n$ in the interval between $\widehat{t}_n$ and $\tst$ such that $\xi_n\rightarrow t^{\star}$, and further, $$(\hf^{(n)}_W)'(\xi_n)\rightarrow\hf'_W(t^{\star})=-\hz\alpha.$$ It follows then by Mean-Value theorem that
$$\hf^{(n)}_W(\tst)=\hf^{(n)}_W(\tst)-\hf^{(n)}_W(\widehat{t}_n)=(\hf^{(n)}_W)'(\xi_n)
(\tst-\widehat{t}_n)=(-\hz\alpha+o(1))(\tst-\widehat{t}_n).$$

Hence we have
$$\tst-\widehat{t}_n=\Bigl(-\frac{1}{\hz\alpha}+o(1)\Bigr)\hf^{(n)}_W(\tst)=
n^{-1/2}\frac{1}{\hz\alpha}(Z_{W}(t^{\star})+o(1)).$$

Then, by a similar argument for $S_n(\tst)$, we have a.s. for some $\xi'_n\rightarrow t^{\star}$,
\begin{align}\begin{aligned}
n^{-1/2}S_n(\tst)& = n^{1/2}\hf^{(n)}_S(\tst)+Z_{S}(t^{\star})+o(1)\\
&= n^{1/2}\hf^{(n)}_S(\tst)+n^{1/2}(\hf^{(n)}_S)'(\xi'_n)(\tst-\widehat{t}_n)+Z_{S}(t^{\star})+o(1)\\
&= n^{1/2}f^{(n)}_S(\hp_n)+\frac{\hf'_S(t^{\star})}{\alpha\hz}Z_{W}(t^{\star})+Z_{S}(t^{\star})+o(1)  \\
& =n^{1/2}f^{(n)}_S(\hp_n)-\frac{f'_S(\hz)}{\alpha}Z_{W}(t^{\star})+Z_{S}(t^{\star})+o(1),
\end{aligned}\end{align}
where the last equality follows from the fact that $(\hf_S)'(t)=-(f_S)(e^{-t})e^{-t}$ and $e^{-t^{\star}}=\hz$.

Using a similar argument, we have the following analogues:
\begin{equation*}
  n^{-1/2}H^+_n(\tst)=n^{1/2}f^{(n)}_{H^+}(\hp_n)-\frac{f'_{H^+}(\hz)}
  {\alpha}Z_{W}(t^{\star})+Z_{H^+}(t^{\star})+o(1),
\end{equation*}
\begin{equation*}
  n^{-1/2}I^+_n(\tst)=n^{1/2}f^{(n)}_{I^+}(\hp_n)-\frac{f'_{I^+}(\hz)}
  {\alpha}Z_{W}(t^{\star})+Z_{I^+}(t^{\star})+o(1),
\end{equation*}
and,
\begin{equation*}
  n^{-1/2}S^{(n)}_{x,\theta,\ell}(\tst)=n^{1/2}{s}^{(n)}_{x,\theta,\ell}(\hp_n)-\frac{{s}'_{x,\theta,\ell}(\hz)}
  {\alpha}Z_{W}(t^{\star})+Z_{x,\theta,l}(t^{\star})+o(1),
\end{equation*}
for all $x\in\cX$ and $0\leq \ell < \theta\leq d^+_x$. This completes the proof of Theorem~\ref{normalityFinal}.

\subsection{Proof of Theorem~\ref{thm-agg-LLN}}\label{sec:thm-agg-LLN}
Using the notations as in the proof of Theorem~\ref{centrality} (see Section~\ref{sec:thmCentrality}), we have
$$S^{(n)}_{x,\theta,\ell}(t)=U^{(n)}_{x,\theta,d^+_x-\ell}(t),$$
and then
$$\sum_{x\in \cX} \bar{L}^{\Diamond}_x \sum_{\theta=1}^{d_x^+}\sum_{\ell=1}^{\theta-1} \ell S^{(n)}_{x, \theta, \ell} (t) = \sum_{x\in\cX}\bar{L}^{\Diamond}_x\sum_{\theta=1}^{d^+_x}\sum_{\ell=1}^{\theta-1}\ell U^{(n)}_{x,\theta,d^+_x-\ell}(t),$$
and,
$$\sum_{x\in\cX}\bar{L}^{\odot}_xD^{(n)}_x(t)=\sum_{x\in\cX}\bar{L}^{\odot}_x(n\mu^{(n)}_x-
\sum_{\theta=1}^{d^+_x}\sum_{s=d^+_x-\theta+1}^{d^+_x}U^{(n)}_{x,\theta,s}(t)).$$

Since $\bar{L}^{\odot}_x$ and $\bar{L}^{\Diamond}_x$ are bounded by a constant $C$, the two above expressions are both dominated by $$nC\sum_{x\in\cX}(d^+_x+d^-_x)\sum_{\theta=1}^{d^+_x}\sum_{s=d^+_x-\theta+1}^{d^+_x}U_{d^+_x,s}(t).$$
Thus by \eqref{U_limit} and \eqref{result}, it follows that
\begin{align*}
\sup\limits_{t\leq \tau_n}\bigl|\frac{\Gamma_n^{\Diamond}(t)}{n}-f_{\Diamond}(e^{-t})\bigr|\top 0.
\end{align*}

For $\hz=0$, by Lemma~\ref{tau}, $\tst\top\infty$. Then $e^{-\tst}\top 0$, and $f_{\Diamond}(0)=\bar{\Gamma}^{\Diamond}-\sum_{x\in\cX}\bar{L}^{\odot}_x\mu_x$. thus it follows by the continuity of $f_{\Diamond}$ that
$$f_{\Diamond}(e^{-\tst})=\bar{\Gamma}^{\Diamond}-\sum_{x\in\cX}\bar{L}^{\odot}_x\mu_x+o_p(1).$$
We therefore have
$$\frac{\Gamma^{\Diamond}_n(\tst)}{n}\top\bar{\Gamma}^{\Diamond}-\sum_{x\in\cX}\bar{L}^{\odot}_x\mu_x.$$

For $\hz\in\left(0,1\right]$ and $f_W'(\hz)>0$, by Lemma~\ref{tau},  $\tau^\star_n\top-\ln\hz$. Then a similar argument as above and the continuity of $f_{\Diamond}$, implies that
\begin{align*}
\frac{\Gamma_n^{\Diamond}(\tst)}{n}\top f_{\Diamond}(\hz).
\end{align*}
This  completes the proof of Theorem~\ref{thm-agg-LLN}.

\subsection{Proof of Theorem~\ref{thm-CLT-Sys}}\label{sec:thm-CLT-Sys}
Recall that we defined $V_{x,\theta,s}^{(n)}(t)$ as the number of bins (institutions) with type $x\in \cX$, threshold $\theta$ and at least $s$ alive balls at time $t$. We notice that
$$\sum_{x\in\cX}\bar{L}^{\odot}_xD^{(n)}_x(t)=
\sum_{x\in\cX}\bar{L}^{\odot}_x\bigl(n\mu^{(n)}_x-
\sum_{\theta=1}^{d^+_x}V_{x,\theta,d^+_x-\theta+1}^{(n)}(t)\bigr),$$
and,
$$\sum_{\ell=1}^{\theta-1}\ell S^{(n)}_{x,\theta,\ell}(t)=(\theta-1)V_{x,\theta,d^+_x-\theta+1}^{(n)}(t)
-\sum_{s=d^+_x-\theta+1+1}^{d^+_x}V_{x,\theta,s}^{(n)}(t).$$

Since $\bar{L}^{\odot}_x$ and $\bar{L}^{\Diamond}_x$ are assumed bounded, there exists a constant $C$ such that $\bar{L}^{\odot}_x+\bar{L}^{\Diamond}_x \leq C$ for all $x\in\cX$. Hence, using the same arguments as in Section~\ref{sec:normality}, used for the convergence of $H^+_n$ and $S_n$, (and the same bound for the tail sum multiplied by the constant $C$) leads to the convergence of the following (infinite) sum
$$\sum_{x\in\cX}\bar{L}^{\odot}_x
\sum_{\theta=1}^{d^+_x}V^{*(n)}_{x,\theta,d^+_x-\theta+1}(t)\tod
\sum_{x\in\cX}\bar{L}^{\odot}_x
\sum_{\theta=1}^{d^+_x}\cZ_{x,\theta,d^+_x-\theta+1}(t),$$
\begin{align*}
\sum_{x\in\cX}\bar{L}^{\Diamond}_x & \sum_{\theta=1}^{d^+_x}[(\theta-1)V^{*(n)}_{x,\theta,d^+_x-\theta+1}(t)
-\sum_{s=d^+_x-\theta+1+1}^{d^+_x}V_{x,\theta,s}^{*(n)}(t)] \\
& \tod\sum_{x\in\cX}\bar{L}^{\Diamond}_x\sum_{\theta=1}^{d^+_x}[(\theta-1)\cZ_{x,\theta,d^+_x-\theta+1}(t)
-\sum_{s=d^+_x-\theta+1+1}^{d^+_x}\cZ_{x,\theta,s}(t)].
\end{align*}

Hence we have jointly in $\cD\interval$,
\begin{align*}
n^{-1/2}\left(\Gamma_n^{\Diamond}(t\wedge\tau_n) -n\hf^{(n)}_{\Diamond}(t\wedge\tau_n)\right)\tod Z_{\Diamond}(t\wedge t_0),
\end{align*}
where
\begin{equation}\label{Z_risk}
Z_{\Diamond}:=-\sum_{x\in\cX}\bar{L}^{\odot}_x
\sum_{\theta=1}^{d^+_x}\cZ_{x,\theta,d^+_x-\theta+1}(t)-\sum_{x\in\cX}\bar{L}^{\Diamond}_x\sum_{\theta=1}^{d^+_x}\Bigl[(\theta-1)\cZ_{x,\theta,d^+_x-\theta+1}(t)
-\sum_{s=d^+_x-\theta+1+1}^{d^+_x}\cZ_{x,\theta,s}(t)\Bigr].
\end{equation}

Let further
$$Z^{(1)}_{\Diamond}:=\sum_{x\in\cX}\bar{L}^{\odot}_x
\sum_{\theta=1}^{d^+_x}\cZ_{x,\theta,d^+_x-\theta+1}(t),$$
$$Z^{(2)}_{\Diamond}:=\sum_{x\in\cX}\bar{L}^{\Diamond}_x\sum_{\theta=1}^{d^+_x}[(\theta-1)\cZ_{x,\theta,d^+_x-\theta+1}(t)
-\sum_{s=d^+_x-\theta+1+1}^{d^+_x}\cZ_{x,\theta,s}(t)].$$

Let $\bigsigma_{i,j}(e^{-t}):=\Cov(Z^{(i)}_{\Diamond}(t),Z^{(j)}_{\Diamond}(t))$, $i,j=1,2$.
The covariance $\bigsigma^2_{\Diamond}(t)$ is therefore
$$\bigsigma^2_{\Diamond}(t)=\bigsigma_{1,1}(e^{-t})+2\bigsigma_{1,2}(e^{-t})+\bigsigma_{2,2}(e^{-t}),$$
where $\bigsigma_{i,j}(y)$ (for $i,j=1,2$) are calculated in the following. By using results of Section~\ref{normality}, we have (recall that $\pi_x(\theta):=d_x^+-\theta+1$):

\begin{equation}\label{sigma_11}
\begin{split}
  \bigsigma_{1,1}(y)= \sum_{x\in\cX}(\bar{L}^{\odot}_x)^2\Bigl(\sum_{\theta=1}^{d^+_x}\bigsigma_{x,\theta,\pi_x(\theta),\pi_x(\theta)}(y)
  +\sum_{\theta_1,\theta_2=1}^{d^+_x}\bigsigma^{\ast}_{x,\theta_1,\theta_2,\pi_x(\theta_1),\pi_x(\theta_2)}(y)\Bigr),
\end{split}
\end{equation}
\begin{equation}\label{sigma_22}
\begin{split}
&  \bigsigma_{2,2}(y)=
  \sum_{x\in\cX}(\bar{L}^{\Diamond}_x)^2\sum_{\theta=1}^{d^+_x}\Bigl[(\theta-1)^2\bigsigma_{x,\theta,\pi_x(\theta),\pi_x(\theta)}(y)
  -2(\theta-1)\sum_{s=\pi_x(\theta)+1}^{d^+_x}\bigsigma_{x,\theta,\pi_x(\theta),s}(y)\\
 & +\sum_{r,s=\pi_x(\theta)+1}^{d^+_x}\bigsigma_{x,\theta,r,s}(y)\Bigr]+\sum_{x\in\cX}(\bar{L}^{\Diamond}_x)^2\sum_{\theta_1,\theta_2=1}^{d^+_x}
 \Bigl[\sum_{s_1=\pi_x(\theta_1)+1}^{d_x^+}\sum_{s_2=\pi_x(\theta_2)+1}^{d_x^+}
 \bigsigma^{\ast}_{x,\theta_1,\theta_2,s_1,s_2}(y)\\
 &+(\theta_1-1)(\theta_2-1)\bigsigma^{\ast}_{x,\theta_1,\theta_2,\pi_x(\theta_1),\pi_x(\theta_2)}(y)
 -2(\theta_2-1)\sum_{s=\pi_x(\theta_1)+1}^{d_x^+}
 \bigsigma^{\ast}_{x,\theta_1,\theta_2,s,\pi_x(\theta_2)}(y)\Bigr],
\end{split}
\end{equation}
and,
\begin{equation}\label{sigma_12}
\begin{split}
 & \bigsigma_{1,2}(y)= \sum_{x\in\cX}\sum_{\theta=1}^{d^+_x}\Bigl[(\theta-1)\bigsigma_{x,\theta,\pi_x(\theta),\pi_x(\theta)}(y)
  -\sum_{s=\pi_x(\theta)+1}^{d^+_x}\bigsigma_{x,\theta,\pi_x(\theta),s}(y)\Bigr]\\
  &+\sum_{x\in\cX}\sum_{\theta_1,\theta_2=1}^{d^+_x}\Bigl[
 (\theta_1-1)\bigsigma^{\ast}_{x,\theta_1,\theta_2,\pi_x(\theta_1),\pi_x(\theta_2)}(y)
 - \sum_{s=\pi_x(\theta_1)+1}^{d_x^+}
 \bigsigma^{\ast}_{x,\theta_1,\theta_2,s,\pi_x(\theta_2)}(y)\Bigr].
\end{split}
\end{equation}

For the final system-wide aggregation functions, if $\hz=0$ then by Theorem~\ref{normalityFinal} asymptotically almost all institutions default during the cascade and
\begin{align*}
\frac{\Gamma_n^{\Diamond}(\tst)}{n}\top \bar{\Gamma}^{\Diamond} -\sum_{x\in \cX}\mu_x \bar{L}^{\odot}_x.
\end{align*}
We consider now $\hz\in\left(0,1\right]$ and $f_W'(\hz)>0$. Note that the convergence of aggregation functions \eqref{norm_risk} hold jointly with the  convergence of other functions such as $H^+_n, S_n, D_n, W_n$, i.e. \eqref{norm_risk} and \eqref{joint_normality} hold jointly. Hence, similarly as in the proof of Theorem~\ref{normalityFinal} in Section~\ref{sec:normalityFinal},  we find that
\begin{equation*}
  n^{-1/2}\Gamma^{\Diamond}_n(\tst)=n^{1/2}f^{(n)}_{\Diamond}(\hp_n)-\frac{f'_{\Diamond}(\hz)}
  {\alpha}Z_{W}(t^{\star})+Z_{\Diamond}(t^{\star})+o(1).
\end{equation*}
This gives $$\cZ_{\Diamond}:=Z_{\Diamond}(t^{\star})-\alpha^{-1}f'_{\Diamond}(\hz)Z_{W}(t^{\star})=Z_{\Diamond}(t^{\star})-\Delta(\hz) Z_{W}(t^{\star}).$$
Hence $\cZ_{\Diamond}$ is a centered Gaussian random variable with variance
$$\Sigma_{\Diamond}:=\bigsigma_{\Diamond}^2(t^{\star})+\Delta(\hz)^2\bigsigma_{W,W}(\hz)-2\Delta(\hz)\bigsigma_{\Diamond,W}(\hz).$$

We already calculated $\bigsigma_{\Diamond}^2$ and $\bigsigma_{W,W}$ (see Theorem~\ref{normality}). The term $\bigsigma_{\Diamond,W}(e^{-t})$ can be  calculated by using
\begin{align}\begin{aligned}\label{sigma_boxW}
  \bigsigma_{\Diamond,W}(e^{-t})= &-\Cov(Z^{(1)}_{\Diamond}(t),Z_W(t))-\Cov(Z^{(2)}_{\Diamond}(t),Z_W(t)) \\
   & =-\bigsigma^{(1)}_{\Diamond,W}(e^{-t})-\bigsigma^{(2)}_{\Diamond,W}(e^{-t}),
\end{aligned}\end{align}
where
\begin{equation}\label{sigma_boxW1}
\begin{split}
   \bigsigma^{(1)}_{\Diamond,W}(y)= &\sum_{x\in\cX}\bar{L}^{\odot}_x\sum_{\theta=1}^{d^+_x}\widehat{\bigsigma}_{x,\theta,\pi_x(\theta)}(y)
  -\sum_{x\in\cX}\bar{L}^{\odot}_xd^-_x\sum_{\theta=1}^{d^+_x}\bigsigma_{x,\theta,\pi_x(\theta),\pi_x(\theta)}(y)\\
  &-\sum_{x\in\cX}\bar{L}^{\odot}_xd^-_x\sum_{\theta_1,\theta_2=1}^{d^+_x}
 \bigsigma^{\ast}_{x,\theta_1,\theta_2,\pi_x(\theta_1),\pi_x(\theta_2)}(y),
\end{split}
\end{equation}
and,
\begin{equation}\label{sigma_boxW2}
\begin{split}
 &  \bigsigma^{(2)}_{\Diamond,W}(y)= \sum_{x\in\cX}\bar{L}^{\Diamond}_x\sum_{\theta=1}^{d^+_x}
  \Bigl[-d^-_x(\theta-1)\bigsigma_{x,\theta,\pi_x(\theta),\pi_x(\theta)}(y)
  +(\theta-1)\widehat{\bigsigma}_{x,\theta,\pi_x(\theta)}(y)\\
  & \quad \quad \quad \quad \quad \quad
  -\sum_{s=\pi_x(\theta)+1}^{d^+_x}\Bigl(\widehat{\bigsigma}_{x,\theta,s}(y)-d^-_x\bigsigma_{x,\theta,\pi_x(\theta),s}(y)\Bigr)\Bigr]\\
& +\sum_{x\in\cX}\bar{L}^{\Diamond}_x d^-_x \sum_{\theta_1,\theta_2=1}^{d^+_x} \Bigl[
 \sum_{s=\pi_x(\theta_1)+1}^{d^+_x}
 \bigsigma^{\ast}_{x,\theta_1,\theta_2,s,\pi_x(\theta_2)}(y)
 - (\theta_1-1)\bigsigma^{\ast}_{x,\theta_1,\theta_2,\pi_x(\theta_1),\pi_x(\theta_2)}(y)\Bigr].
\end{split}
\end{equation}
This completes the proof of Theorem~\ref{thm-CLT-Sys}.

\subsection{Proof of Theorem~\ref{planner-LLN}}\label{sec:planner-LLN}
Our proof of Theorem~\ref{planner-LLN} is based on ideas applied in~\cite{janson2009percolation} which studies the conditions for existence of giant component in the percolated random (non directed) graph with given vertex degrees. We consider type-dependent bond percolation model where we remove each incoming link to any institution of type $x\in\cX$ with probability $\alpha_x$.

Hence, we first remove all potential saved links by the planner from the network. Note that we  also include extra removed links between solvent institutions that will not play any role in the default contagion process.  Then we run the death process as described in Section~\ref{sec:deathFin} and Appendix~\ref{sec:thmCentrality}. We denote by $\widetilde{U}^{(n)}_{x, \theta,\ell}(t)$ the number of bins (institutions) with type $x\in \cX$, threshold $\theta$ and $\ell$ alive (in-) balls at time $t$ in the percolated random graph. It is then not hard to show that \eqref{U_limit} changes to 
\begin{equation*}
  \sup\limits_{t\leq \tau_n}\bigl|\frac{1}{n}\sum_{s=r+1}^{d^+_x}\widetilde{U}^{(n)}_{x,\theta,s}(t)-\mu_xq_x(\theta)\sum_{s=r+1}^{d^+_x}b(d^+_x,\alpha_x+(1-\alpha_x)e^{-t},s)\bigr|
\top 0, \quad  \text{as} \quad n\to \infty.
\end{equation*}

Consider now $\widetilde{S}_{x,\theta,\ell}^{(n)}$, the number of solvent institutions with type $x$, threshold $\theta$ and $\ell=0, \dots, \theta-1$ defaulted neighbors at time $t$. Hence, by writing above equation for $r_1=d^+_x-\ell$ and $r_2=d^+_x-\ell-1$, then taking the difference, we obtain 
\begin{align*}
\sup\limits_{t\leq \tau_n}\bigl|\frac{\widetilde{S}_{x,\theta,\ell}^{(n)}(t)}{n}-\mu_x q_x(\theta)b\left(d_x^+,(1-\alpha_x)(1-e^{-t}),\ell\right)\bigr|\top 0, \quad  \text{as} \quad n\to \infty.
\end{align*}

Let $\widetilde{W}_n(t)$ and $\widetilde{D}_n(t)$ denote respectively the number of white balls and the total number of defaults at time $t$ in the percolated random graph. Then, by following the same steps as in Section~\ref{sec:thmCentrality}, we obtain
\begin{align*}
\sup\limits_{t\leq \tau_n}\bigl|\frac{\widetilde{W}_n(t)}{n}-f^{(\balpha)}_W(e^{-t})\bigr|\top 0, \ \ \sup\limits_{t\leq \tau_n}\bigl|\frac{\widetilde{D}_n(t)}{n}-f_D^{(\balpha)}(e^{-t})\bigr|\top 0.
\end{align*} 

Let $\widetilde{\tau}^{\star}_n$ be the first time when $\widetilde{W}_n(\widetilde{\tau}^{\star}_n)=-1$. Then, similar to the proof of Lemma~\ref{tau} in Section~\ref{sec:lemStop}, we find that $\widetilde{\tau}^{\star}_n\to-\ln  \hz_{\balpha}$, where $\hz_{\balpha}:=\sup\bigl\{z\in[0,1]: f^{(\balpha)}_W(z)=0\bigr\}$ and 
$$f_{W}^{(\balpha)}(z):= \lambda z - \sum_{x\in \cX} \mu_x d_x^-\sum_{\theta=1}^{d_x^+}q_x(\theta) \beta\bigl(d_x^+,\alpha_x +(1-\alpha_x)z, d_x^+-\theta+1\bigr).$$

Next, by following the same steps as proof of Theorem~\ref{thm:LLN} in Section~\ref{sec:thmLLN}, we obtain 
\begin{align*}
\frac{S_{x,\theta,\ell}^{(n)}(\balpha_n)}{n} \top s^{(\balpha)}_{x, \theta, \ell}(\hz_{\balpha}), \quad 
\frac{D_n(\balpha_n)}{n} \top  f_D^{(\balpha)}(\hz_{\balpha}),
\end{align*}
which then implies (by definition) that system-wide wealth converges to
\begin{align*}
\frac{\Gamma_n^{\Diamond}(\balpha_n)}{n} \top f^{(\balpha)}_{\Diamond}(\hz_{\balpha}):=\bar{\Gamma}^{\Diamond}-\sum_{x\in \cX} \bar{L}^{\odot}_x f_D^{(\balpha)}(\hz_{\balpha})
 - \sum_{x\in \cX} \bar{L}^{\Diamond}_x \sum_{\theta=1}^{d_x^+}\sum_{\ell=1}^{\theta-1} \ell s^{(\balpha)}_{x, \theta, \ell} (\hz_{\balpha}),
 \end{align*}
and,
the total cost of interventions $\balpha_n$ for the planner converges to
 \begin{align*}
\frac{\Phi_n(\balpha_n)}{n} \top \phi(\hz_{\balpha}):= \sum_{x\in \cX} \mu_x  \alpha_x C_x \sum_{\ell=1}^{d_x^+} \ell b\left(d_x^+,1-\hz_{\balpha},\ell\right).
\end{align*}
This completes the proof of Theorem~\ref{planner-LLN}.

\section{Further Extensions}\label{app:Ext}

In this appendix we discuss further extensions for our baseline aggregation functions which can be used for measuring systemic risk, as discussed in Section~\ref{sec:sys}. We will suppress the dependence of parameters on the size of the network $n$, if it is clear from the context.  

Recall that $D_x(t)=\sum_{\theta} D_{x,\theta}(t)$ denotes the total number of defaulted institutions with type $x\in \cX$ at time $t$. We assume that the (system-wide) loss associated to an infected link (coming from a defaulted institution) leading to solvent institutions are independent random variables with distribution depending on the type of the host institution. Further, each defaulted institution trigger (another) system-wide loss which are again assumed to be independent random variables depending on the type of this defaulting institution.

We first define
$$I_{x}(t):=\sum_{\theta=1}^{d^+_x}\Bigl[(\theta-1)D_{x,\theta}(t)+\sum_{\ell=1}^{\theta-1}\ell S_{x,\theta,\ell}(t)\Bigr],$$
as the total number of infected links leading to solvent institutions with type $x\in \cX$ up to time $t$. Let $\Bigl\{L_{x,D}^{(i)}\Bigr\}_{i=1}^\infty$ are  i.i.d. positive bounded random variables with common distribution $F_{x,D}$, which has expectation $\bar{L}_{x,D}$ and variance $\Sigma^2_{x,D}$,
for all $x\in \cX$. Similarly, let $\Bigl\{L_{x,I}^{(i)}\Bigr\}_{i=1}^\infty$ are  i.i.d. positive bounded random variables with common distribution $F_{x,I}$, which has expectation $\bar{L}_{x,I}$ and variance $\Sigma^2_{x,I}$,
for all $x\in \cX$. 

We consider the following modifications for the (random) aggregation function $\Gamma^{\Diamond}_n$:
$$\Gamma_n^{\Diamond}(t) := \bar{\Gamma}^{\Diamond}_n-\sum_{x\in\cX}Y_{D_x}(t)-\sum_{x\in\cX}Y_{I_x}(t),$$
where 
$$Y_{D_x}(t):=\sum_{i=1}^{D_x(t)}L_{x,D}^{(i)} \quad \text{and} \quad Y_{I_x}(t):=
\sum_{i=1}^{I_{x}(t)}L_{x,I}^{(i)}.$$


We first state our main theorems of this section, regarding the central limit theorems for the (extended) aggregation function $\Gamma^{\Diamond}_n$.

Let us define the functions
$$f^{(n)}_{x,I}(z):=\sum_{\theta=1}^{d^+_x}\bigl[(\theta-1)\mu^{(n)}_xq^{(n)}_x(\theta)-\beta(d^+_x,z,d^+_x-\theta+2)\bigr],$$
$$f^{(n)}_{x,D}(z):=\mu^{(n)}_x\bigl(1-\sum_{\theta=1}^{d^+_x}q^{(n)}_x(\theta)\beta(d^+_x,z,d^+_x-\theta+1)\bigr),$$
$$f^{(n)}_{\Diamond}(z):=\bar{\Gamma}^{\Diamond}_n-\sum_{x\in\cX}\bar{L}_{x,D}f^{(n)}_{x,D}(z)-\sum_{x\in\cX}\bar{L}_{x,I}f^{(n)}_{x,I}(z)$$
and correspondingly
$$\hf^{(n)}_{x,I}(t):=f^{(n)}_{x,I}(e^{-t}), \quad \hf^{(n)}_{x,D}(t):=f^{(n)}_{x,D}(e^{-t}), \quad \hf^{(n)}_{\Diamond}(t):=f^{(n)}_{\Diamond}(e^{-t}).$$
When we write all the above functions without the up-script $(n)$, it means that we substitute all the $\mu^{(n)}_x, q^{(n)}_x(\theta)$ by their limit $\mu_x, q_x(\theta)$, e.g., 
$$f_{x,D}(z):=\mu_x\bigl(1-\sum_{\theta=1}^{d^+_x}q_x(\theta)\beta(d^+_x,z,d^+_x-\theta+1)\bigr).$$

\begin{theorem}\label{norm_risk_random}
 Let Assumption~\ref{cond:CLT} holds and $\tau_n\leq \tau^\star_n$ be a stopping time such that $\tau_n\top t^\star$ for some $t^\star>0$. Then jointly in $\cD\left[0,\infty\right)$, as $n\rightarrow\infty$,
 \begin{equation}\label{norm_Gamma_random}
   n^{-1/2}(\Gamma^{\Diamond}_n(t\wedge\tst)-n\hf^{(n)}_{\Diamond}(t\wedge\tst))\tod \cZ_{\Diamond}({t\wedge t^{\star}}),
 \end{equation}
 where $\cZ_{\Diamond}(t)$ is a Gaussian random variable with mean 0 and variance $\Psi(t)$ given by
 $$\Psi(t)=\sum_{x\in\cX}\bigl[\Psi_{x,D}(t)+\Psi_{x,I}(t)+2\Psi_{x,D,I}(t)\bigr],$$
 where
 \begin{align*}
 \Psi_{x,D,I}(t)=&\bar{L}_{x,D}\bar{L}_{x,I}\bigsigma_{x,D,I}(e^{-t}),\\
 \Psi_{x,D}(t)=&\hf_{x,D}(t)\Sigma^2_{x,D}+\bar{L}_{x,D}\bigsigma_{x,D,D}(e^{-t}),\\
 \Psi_{x,I}(t)=&\hf_{x,I}(t)\Sigma^2_{x,D}+\bar{L}_{x,I}\bigsigma_{x,I,I}(e^{-t}),
 \end{align*}
and  the forms of $\bigsigma_{x,D,D}(y)$, $\bigsigma_{x,I,I}(y)$ and $\bigsigma_{x,D,I}(y)$ are given by \eqref{sigma DDx}, \eqref{sigma VVx} and \eqref{sigma DVx}.
\end{theorem}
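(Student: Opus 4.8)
The plan is to reduce \eqref{norm_Gamma_random} to two pieces: the central limit behaviour of the \emph{counting} processes $D_x(t)$ and $I_x(t)$, which follows from the death--process machinery already developed for Theorem~\ref{normality} and Theorem~\ref{thm-CLT-S}, and a new martingale central limit theorem for the \emph{randomly--indexed loss sums} $Y_{D_x}(t)=\sum_{i\le D_x(t)}L^{(i)}_{x,D}$ and $Y_{I_x}(t)=\sum_{i\le I_x(t)}L^{(i)}_{x,I}$. First I would note that $D_x(t)$ and $I_x(t)$ are measurable with respect to the death process alone and write them as finite linear combinations of the processes $V^{(n)}_{x,\theta,s}(t)$ (using $S^{(n)}_{x,\theta,\ell}=V^{(n)}_{x,\theta,d^+_x-\ell}-V^{(n)}_{x,\theta,d^+_x-\ell+1}$ for $\ell\geq1$ and $D_{x,\theta}(t)=N^{(n)}_{x,\theta}-V^{(n)}_{x,\theta,d^+_x-\theta+1}(t)$). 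Summing over $x\in\cX$ and controlling the tail exactly as in the proof of Theorem~\ref{normality} --- via Assumption~\ref{cond:CLT}, the bound \eqref{supcontrol} and the $\bar\xi_{N,n}$ estimate --- gives, jointly in $\cD[0,\infty)$, Gaussian limits $n^{-1/2}(D_x(t\wedge\tau_n)-n\hf^{(n)}_{x,D}(t\wedge\tau_n))\tod Z_{D_x}(t\wedge t^\star)$ and $n^{-1/2}(I_x(t\wedge\tau_n)-n\hf^{(n)}_{x,I}(t\wedge\tau_n))\tod Z_{I_x}(t\wedge t^\star)$, whose covariances are the corresponding linear combinations of the $\bigsigma^{\ast}$ and $\bigsigma_{x,\theta,r,s}$ of Section~\ref{sec-CLT-S}; these are precisely the objects denoted $\bigsigma_{x,D,D}$, $\bigsigma_{x,I,I}$, $\bigsigma_{x,D,I}$ in the statement.

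For the compound part I would split $Y_{D_x}(t)=\bar L_{x,D}D_x(t)+M^D_{x,n}(t)$ with $M^D_{x,n}(t):=\sum_{i=1}^{D_x(t)}(L^{(i)}_{x,D}-\bar L_{x,D})$, and similarly $Y_{I_x}(t)=\bar L_{x,I}I_x(t)+M^I_{x,n}(t)$. In the filtration that reveals one fresh loss at each jump of the corresponding counting process, $M^D_{x,n}$ and $M^I_{x,n}$ are martingales starting at $0$ (each new loss has mean $\bar L$ and is independent of the past and of the death process), with quadratic variation $[M^D_{x,n}]_t=\sum_{i\le D_x(t)}(L^{(i)}_{x,D}-\bar L_{x,D})^2$, so that $n^{-1}[M^D_{x,n}]_t=(D_x(t)/n)\cdot D_x(t)^{-1}\sum_{i\le D_x(t)}(L^{(i)}_{x,D}-\bar L_{x,D})^2\top \hf_{x,D}(t)\,\Sigma^2_{x,D}$, \emph{uniformly in} $t\leq\tau_n$, by Theorem~\ref{centrality} and a Glivenko--Cantelli--type law (Lemma~\ref{Gli_cantelli}) for the partial sums of the squared loss increments, using monotonicity of $D_x$. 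Since the losses are bounded, $n^{-1}[M^D_{x,n}]_t\leq C^2 D_x(t)/n\leq C^2$ uniformly in $n$, so Theorem~\ref{martingale_clt} applied to the stopped process gives $n^{-1/2}M^D_{x,n}(t\wedge\tau_n)\tod G^D_x(t\wedge t^\star)$, a continuous Gaussian process with variance $\hf_{x,D}(t\wedge t^\star)\Sigma^2_{x,D}$, and likewise for $M^I_{x,n}$ with variance $\hf_{x,I}(t\wedge t^\star)\Sigma^2_{x,I}$. The crucial point is the asymptotic orthogonality: $[M^D_{x,n},M^I_{x',n}]_t$, $[M^D_{x,n},M^D_{x',n}]_t$ for $x\neq x'$, and the covariations of $M^D_{x,n},M^I_{x,n}$ with the death--process martingales $\widetilde M^{(n)}_{x,\theta,s}$ and $\widetilde L_n$ are all sums, over co--jump times, of products carrying a fresh centred loss factor, hence tend to $0$ in $L^2$ after $n^{-1}$ normalization; thus in the limit the $G^D_x,G^I_x$ (over all $x$) are mutually independent and independent of $(Z_{D_x},Z_{I_x})_{x\in\cX}$.

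Then I would assemble, writing $n^{-1/2}(\Gamma^{\Diamond}_n(t\wedge\tau_n)-n\hf^{(n)}_{\Diamond}(t\wedge\tau_n))=-\sum_{x\in\cX}\bigl[\bar L_{x,D}\,n^{-1/2}(D_x-n\hf^{(n)}_{x,D})+\bar L_{x,I}\,n^{-1/2}(I_x-n\hf^{(n)}_{x,I})+n^{-1/2}M^D_{x,n}+n^{-1/2}M^I_{x,n}\bigr](t\wedge\tau_n)$, and deducing from the joint convergence of all summands --- extended from finite to infinite sums over $\cX$ by the same tail bound as in Theorem~\ref{normality}, now dominating the extra i.i.d.\ loss factors by the uniform bound $C$ --- that the limit is $\cZ_{\Diamond}(t\wedge t^\star)=-\sum_{x\in\cX}[\bar L_{x,D}Z_{D_x}+\bar L_{x,I}Z_{I_x}+G^D_x+G^I_x](t\wedge t^\star)$. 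By the orthogonality above its variance is $\Psi(t)=\sum_{x\in\cX}[\Psi_{x,D}(t)+\Psi_{x,I}(t)+2\Psi_{x,D,I}(t)]$, the loss contributions $\hf_{x,D}(t)\Sigma^2_{x,D}$, $\hf_{x,I}(t)\Sigma^2_{x,I}$ coming from $\Var(G^D_x),\Var(G^I_x)$ and the remaining terms $\bar L_{x,D}\bigsigma_{x,D,D}$, $\bar L_{x,I}\bigsigma_{x,I,I}$, $\bar L_{x,D}\bar L_{x,I}\bigsigma_{x,D,I}$ from the variances and covariance of the count fluctuations $Z_{D_x},Z_{I_x}$, whose explicit forms follow from \eqref{sigma sr}--\eqref{sigma_ast} by the linear--combination bookkeeping.

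The main obstacle is the compound--sum step: choosing the filtration so that $M^D_{x,n}$ and $M^I_{x,n}$ are genuinely martingales, establishing the \emph{uniform in $t$} convergence of the normalized brackets $n^{-1}[M^D_{x,n}]_t\top\hf_{x,D}(t)\Sigma^2_{x,D}$ (combining the functional law of large numbers for $D_x(t)$ with a Glivenko--Cantelli law for the squared loss increments), and verifying the asymptotic orthogonality of the loss fluctuations with the death--process martingales and with one another so that $\Psi(t)$ decomposes as stated. Everything else --- the central limit theorems for $D_x$, $I_x$ and the passage from finite to infinite sums over $\cX$ --- is a routine adaptation of the arguments already used for Theorems~\ref{normality} and \ref{thm-CLT-S}.
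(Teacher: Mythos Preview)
Your proof is correct and takes a genuinely different route from the paper. The paper handles the randomly--indexed sums $Y_{D_x}$, $Y_{I_x}$ by an auxiliary conditioning argument (Lemma~\ref{limitcondistri} and Proposition~\ref{finitedimen}): for each fixed $t$ one conditions on the value of the counting--process fluctuation $\cV_n$, applies the ordinary CLT to the now deterministically--indexed sum, and integrates out; the joint Gaussianity of the pieces and the cross--covariances are then read off from a direct density computation. As the paper itself notes after the statement, this yields only convergence at each fixed $t$, not functional convergence. Your route instead centers the losses, writes $Y_{D_x}=\bar L_{x,D}D_x+M^D_{x,n}$, and feeds the centred--loss martingales $M^D_{x,n}$, $M^I_{x,n}$ into the same functional martingale CLT (Theorem~\ref{martingale_clt}) that already drives Theorem~\ref{normality}. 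The asymptotic orthogonality you identify --- each bracket $[M^D_{x,n},\widetilde M^{(n)}_{x,\theta,s}]_t$ or $[M^D_{x,n},M^I_{x',n}]_t$ is a sum of $O(n)$ terms, each containing an independent mean--zero loss factor, hence has $L^2$ norm $O(n^{1/2})$ and is $o_p(n)$ --- is exactly what makes the limits $G^D_x,G^I_x$ independent of one another and of the death--process limits. This buys you genuine convergence in $\cD[0,\infty)$, which is what the theorem actually claims, and the variance you obtain matches the statement once one notes that the paper's $\bar L_{x,D}\bigsigma_{x,D,D}$ (and similarly for $I$) should read $\bar L_{x,D}^2\bigsigma_{x,D,D}$, consistent with the $\upsilon^2 g^2$ term of Proposition~\ref{finitedimen}. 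The one point worth making explicit is that the death--process martingales $\widetilde M^{(n)}_{x,\theta,s}$ and $\widetilde L_n$ remain martingales in the enlarged filtration that also reveals the losses, which is immediate from the independence of the $L^{(i)}_{x,\cdot}$ from the death process; with that in hand Theorem~\ref{martingale_clt} applies to the full vector and your assembly step goes through.
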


The above theorem shows the asymptotic normality of system wide wealth $\Gamma^{\Diamond}_n(t)$ at any fixed time $t>0$. Although we don't have the asymptotic normality for the whole process, but we could still conclude the following theorem at the final moment of contagion.

\begin{theorem}\label{norm_risk_random_final}
Let $t^\star=-\ln \hz$ and $\hp_n$ be the largest $z\in[0,1]$ such that $f_W^{(n)}(z)=0$. Under the above assumptions, as $n\rightarrow\infty$, the final (extended) aggregation functions satisfy:
\begin{itemize}
\item[(i)] If $\hz=0$ then asymptotically almost all institutions default during the cascade and
\begin{align*}
\frac{\Gamma_n^{\Diamond}(\tst)}{n} \top\bar{\Gamma}^{\Diamond} -\sum_{x\in \cX}[\mu_x \bar{L}_{x,D}+\sum_{\theta=1}^{d^+_x}(\theta-1)\mu_x q_x(\theta) \bar{L}_{x,I}].
\end{align*}
\item[(ii)] If $\hz\in\left(0,1\right]$ and $\hz$ is a stable solution, i.e. $\alpha:=f_W'(\hz)>0$, then we have the following asymptotic normality for the final aggregate function:
    \begin{equation*}
            n^{-1/2}(\Gamma^{\Diamond}_n(\tst)-nf^{(n)}_{\Diamond}(\hp_n))
           \tod    \cZ_{\Diamond}({t^{\star}})-\alpha^{-1}f'_{\Diamond}(\hz)Z_{W}(t^{\star}).
       \end{equation*}
\end{itemize}
\end{theorem}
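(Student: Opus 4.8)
The plan is to reduce the statement to the fixed‑time central limit theorem of Theorem~\ref{norm_risk_random}, the stopping‑time limit in Lemma~\ref{tau}, and the identity $W_n(\tst)=-1$, following the Skorokhod‑representation scheme already used for Theorem~\ref{normalityFinal} in Section~\ref{sec:normalityFinal}. Since Assumption~\ref{cond:CLT} implies Assumption~\ref{cond-average} (Remark~\ref{rem-finmo}), part~(i) is purely a law‑of‑large‑numbers statement. First I would establish the uniform LLN $\sup_{t\le\tau_n}|\Gamma_n^{\Diamond}(t)/n-f_{\Diamond}(e^{-t})|\top 0$: writing $\Gamma_n^{\Diamond}(t)=\bar\Gamma_n^{\Diamond}-\sum_x Y_{D_x}(t)-\sum_x Y_{I_x}(t)$ with $Y_{D_x}(t)=\sum_{i=1}^{D_x(t)}L_{x,D}^{(i)}$ and $Y_{I_x}(t)=\sum_{i=1}^{I_x(t)}L_{x,I}^{(i)}$, one rewrites $D_{x,\theta}$ and $\sum_{\ell}\ell S_{x,\theta,\ell}$ in terms of the $V^{(n)}_{x,\theta,s}$ as in the proof of Theorem~\ref{thm-CLT-Sys}, applies the LLN for $D_x(t)/n$ and $I_x(t)/n$ (as in Theorem~\ref{centrality}/Theorem~\ref{thm-agg-LLN}) together with the strong law for the bounded i.i.d.\ losses, uses the monotonicity of $t\mapsto D_x(t)$ and $t\mapsto I_x(t)$ to upgrade pointwise to uniform convergence, and controls the tail over $x\in\cX$ via the boundedness of $\bar L_{x,D},\bar L_{x,I}$ and Assumption~\ref{cond-average} exactly as in \eqref{result}. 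When $\hz=0$, Lemma~\ref{tau} gives $\tst\top\infty$, hence $e^{-\tst}\top 0$; since $\beta(d_x^+,0,m)=\ind\{m\le 0\}=0$ for $m\ge 1$ one gets $f_{x,D}(0)=\mu_x$ and $f_{x,I}(0)=\sum_\theta(\theta-1)\mu_xq_x(\theta)$, and continuity of $f_{\Diamond}$ yields the stated limit; the statement $|\cD_{n-1}|=n-o_p(n)$ is inherited from Theorem~\ref{thm:LLN}.

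For part~(ii) I would argue verbatim as in Section~\ref{sec:normalityFinal}. The crucial input is that the convergence \eqref{norm_Gamma_random} of $\Gamma_n^{\Diamond}$ holds \emph{jointly} with the convergence of $W_n$ from Theorem~\ref{normality} --- this is really part of the content of the proof of Theorem~\ref{norm_risk_random}, since both processes are continuous functionals of the common Gaussian limits $\{\cZ_{x,\theta,s}\}$ and $\widehat Z$, with the loss‑driven fluctuations independent of the death process. Granting this, one: (a) uses the uniform convergence $f_W^{(n)}\to f_W$ and $f_{\Diamond}^{(n)}\to f_{\Diamond}$ together with their derivatives (Lemma~\ref{lem-remak}) and stability $f_W'(\hz)=\alpha>0$ to produce $\hp_n\to\hz$ and $\widehat t_n:=-\ln\hp_n\to t^\star$; (b) passes to a common probability space (Skorokhod) on which $\tst\to t^\star$ (Lemma~\ref{tau}) and all CLTs hold a.s.; (c) from $W_n(\tst)=-1$ and $W_n(\tst)=n\hf_W^{(n)}(\tst)+n^{1/2}Z_W(t^\star)+o(n^{1/2})$ deduces $\hf_W^{(n)}(\tst)=-n^{-1/2}Z_W(t^\star)+o(n^{-1/2})$, then a mean‑value expansion with $(\hf_W^{(n)})'(\xi_n)\to\hf_W'(t^\star)=-\hz\alpha$ gives $\tst-\widehat t_n=n^{-1/2}(\hz\alpha)^{-1}(Z_W(t^\star)+o(1))$; (d) applies the same mean‑value expansion to $\Gamma_n^{\Diamond}$ around $\widehat t_n$, using $\hf_{\Diamond}'(t^\star)=-f_{\Diamond}'(\hz)\hz$ and $e^{-t^\star}=\hz$, to obtain $n^{-1/2}\Gamma_n^{\Diamond}(\tst)=n^{1/2}f_{\Diamond}^{(n)}(\hp_n)-\alpha^{-1}f_{\Diamond}'(\hz)Z_W(t^\star)+\cZ_{\Diamond}(t^\star)+o(1)$, which is the claimed convergence.

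The main obstacle is making the doubly‑stochastic loss structure interact correctly with the random stopping time. In~(ii) one must verify that Theorem~\ref{norm_risk_random} holds jointly with the process $W_n$ and not merely marginally, so that the random time shift $\tst-\widehat t_n$ --- itself governed by $Z_W$ --- can be substituted consistently into the expansion for $\Gamma_n^{\Diamond}$; this requires tracking the loss fluctuations and the death‑process fluctuations simultaneously and checking their asymptotic independence. In~(i) the subtlety is the uniform‑in‑time LLN for random sums whose index is a death‑process count: this needs the monotonicity of those counts plus a uniform control of the infinitely many type contributions reweighted by the (bounded) loss means, i.e.\ the estimate \eqref{result} with weights $\bar L_{x,D}+\bar L_{x,I}$. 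Everything else is bookkeeping identical to Sections~\ref{sec:thm-agg-LLN}, \ref{sec:normalityFinal} and \ref{sec:thm-CLT-Sys}.
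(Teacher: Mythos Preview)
Your proposal is essentially correct and tracks the paper's argument closely; part~(i) is identical in structure. For part~(ii), however, there is one technical point where the paper proceeds differently from your ``verbatim as in Section~\ref{sec:normalityFinal}''. In that section the passage from $Z_{W}(\tst\wedge t^{\star})$ to $Z_{W}(t^{\star})$ is justified by continuity of the limiting \emph{process} $Z_{W}$. But Theorem~\ref{norm_risk_random}, as proved through Lemma~\ref{limitcondistri} and Proposition~\ref{finitedimen}, only delivers $\cZ_{\Diamond}(t)$ as a centered Gaussian \emph{random variable} for each fixed $t$, not as a continuous Gaussian process in $\cD[0,\infty)$; the random-sum structure $\sum_{i\le D_x(t)}L_{x,D}^{(i)}$ does not fit the martingale framework of Theorem~\ref{martingale_clt} that underpinned path continuity of $Z_S,Z_W,\dots$. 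The paper sidesteps this by observing that the variance $\Psi(t)$ is continuous in $t$, so the law of $\cZ_{\Diamond}(t)$ varies continuously and $\cZ_{\Diamond}(t_n)\tod\cZ_{\Diamond}(t)$ whenever $t_n\to t$; after Skorokhod and $\tst\wedge t^{\star}\to t^{\star}$ a.s.\ (Lemma~\ref{tau}), this suffices to replace $\cZ_{\Diamond}(\tst\wedge t^{\star})$ by $\cZ_{\Diamond}(t^{\star})$. Your alternative---upgrading Theorem~\ref{norm_risk_random} to functional joint convergence with $W_n$ and checking asymptotic independence of the loss and death-process fluctuations---would also close the gap, but is heavier than the paper's variance-continuity shortcut and is precisely the ``main obstacle'' you flagged.
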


\subsection{Auxiliary Lemmas}

We first provide (under some regularity conditions) a central limit theorem  for functions which could be written as $Y_n(t):=\sum_{i=1}^{\lfloor X_n(t)\rfloor}G_i$, where $X_n(t)$ is a non-decreasing stochastic process satisfying $X_n(t)=O(n)$ for all $t>0$ and $\bigl\{G_i\bigr\}_{i\geq 1}$ are i.i.d. positive bounded random variables with mean $g$ and variance $\sigma^2$. 

\begin{lemma}\label{limitcondistri}
  Using the notations above and for fixed $t>0$, if $X_n(t):=f_n(t)n+\cV n^{1/2}$ with $(f_n(t))_{n=1}^{\infty}$ a real valued sequence converging to $f(t)$, and $\cV$ a bounded real-valued random variable, then as $n\rightarrow\infty$, conditioned on $\left\{\cV=x\right\}$ for some $x$ on $\mbox{supp}(\cV)$, we have
  $$\Bigl(\frac{Y_n(t)-gX_n(t)}{\sqrt{nf(t)}\sigma}\mid \cV=x \Bigr)\tod\cN(0,1).$$
\end{lemma}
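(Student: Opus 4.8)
The plan is to reduce the statement to the classical Lindeberg--L\'evy central limit theorem for a sum of a (conditionally) deterministic number of i.i.d.\ summands, and then transfer the normalization from $\sqrt{M_n}$ to $\sqrt{nf(t)}$ by Slutsky's lemma. Fix $t>0$ and $x\in\mbox{supp}(\cV)$, and work on the event $\{\cV=x\}$. Since the losses $\{G_i\}_{i\geq1}$ are independent of $\cV$, conditioning on $\{\cV=x\}$ leaves their joint law unchanged, while it turns $X_n(t)=f_n(t)n+\cV n^{1/2}$ into the deterministic quantity $x_n:=f_n(t)n+xn^{1/2}$; accordingly $M_n:=\lfloor x_n\rfloor$ is a deterministic integer with $M_n\to\infty$.

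First I would split, conditionally on $\{\cV=x\}$,
$$Y_n(t)-gX_n(t)=\sum_{i=1}^{M_n}(G_i-g)\;-\;g\bigl(X_n(t)-M_n\bigr).$$
The fractional-part correction obeys $0\le X_n(t)-M_n<1$, so $|g(X_n(t)-M_n)|\le g$ is bounded uniformly in $n$; dividing by $\sqrt{nf(t)}\,\sigma\to\infty$ (here one uses $f(t)>0$ and $\sigma^2>0$, both of which hold in all our applications) makes this term $o(1)$, hence negligible. Next, since $\{G_i-g\}_{i\ge1}$ are i.i.d.\ with mean $0$ and variance $\sigma^2\in(0,\infty)$ (finiteness coming from boundedness of the $G_i$), the classical CLT applied along the deterministic sequence $M_n$ gives
$$\frac{1}{\sigma\sqrt{M_n}}\sum_{i=1}^{M_n}(G_i-g)\ \tod\ \cN(0,1)\qquad(n\to\infty),$$
this convergence being the same conditionally on $\{\cV=x\}$ as unconditionally, by the independence noted above. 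Finally, from $f_n(t)\to f(t)$ and $xn^{1/2}/n\to0$ we get $M_n/n\to f(t)$, hence $\sqrt{M_n}/\sqrt{nf(t)}\to1$; one more application of Slutsky's lemma then yields
$$\Bigl(\frac{Y_n(t)-gX_n(t)}{\sqrt{nf(t)}\,\sigma}\ \Big|\ \cV=x\Bigr)\ \tod\ \cN(0,1),$$
as claimed.

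The argument is essentially routine; the only genuinely delicate point is the legitimacy of the interchange of conditioning and limit, i.e.\ that on $\{\cV=x\}$ the summation index $M_n$ becomes truly deterministic, so that no randomly-stopped-sum (Anscombe-type) argument is needed. I would make this explicit. For the more general situation in which $X_n(t)$ is only \emph{asymptotically} of the form $f_n(t)n+\cV n^{1/2}$ (as arises when $X_n(t)$ is itself a genuinely random contagion count, e.g.\ $D_x^{(n)}(t)$ or $I_x^{(n)}(t)$), the same conclusion follows by invoking Anscombe's theorem for randomly indexed sums together with the Skorokhod representation already used in the proofs of Theorems~\ref{thm-CLT-S} and~\ref{normality}; but for the lemma exactly as stated, the deterministic reduction above is all that is required.
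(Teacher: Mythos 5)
Your proof is correct and follows essentially the same route as the paper's: condition on $\{\cV=x\}$ to render the summation index $M_n=\lfloor X_n(t)\rfloor$ deterministic, split off the bounded floor-correction term, apply the classical CLT with the deterministic index, and transfer the normalization from $\sqrt{M_n}$ to $\sqrt{nf(t)}$ via Slutsky. Your statement that the remainder is $o(1)$ (via $M_n/n\to f(t)$) is slightly more careful than the paper's $O(n^{-1/2})$ (which would additionally require $f_n(t)-f(t)=O(n^{-1/2})$), but the argument is otherwise the same.
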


\begin{proof}
Conditioned on the event $\left\{\cV=x\right\}$, $X_n(t)=f_n(t)n+xn^{1/2}$ which is non-random. Hence, by standard central limit theorem (CLT), we have
$$\Bigl(\frac{Y_n(t)-g\lfloor X_n(t)\rfloor}{\sqrt{\lfloor nf_n(t)+xn^{1/2}\rfloor}\sigma}|\cV=x\Bigr) \tod\cN(0,1).$$
Further, we have the decomposition
\begin{align*}
 \frac{Y_n(t)-gX_n(t)}{\sqrt{nf(t)}\sigma} & =\frac{\sqrt{\lfloor nf_n(t)+xn^{1/2}\rfloor}}{\sqrt{nf(t)}}\centerdot\frac{Y_n(t)-g\lfloor X_n(t)\rfloor}{\sqrt{\lfloor nf_n(t)+xn^{1/2}\rfloor}\sigma}+\frac{g\lfloor X_n(t)\rfloor-gX_n(t)}{\sqrt{nf(t)}\sigma}\\
   & = \sqrt{1+O(n^{-1/2})}\frac{Y_n(t)-g\lfloor X_n(t)\rfloor}{\sqrt{\lfloor nf(t)+xn^{1/2}\rfloor}\sigma}+O(n^{-1/2}).
\end{align*}
It follows thus by Slutsky's theorem that as $n\rightarrow\infty$,
$$\Bigl(\frac{Y_n(t)-gX_n(t)}{\sqrt{nf(t)}\sigma}|\cV=x\Bigr)\tod\cN(0,1).$$
\end{proof}

Using the above lemma, we prove the following proposition.

\begin{proposition}\label{finitedimen}
 Using the notations above and for fixed $t>0$, let $X_n(t):=f_n(t)n+\cV_n n^{1/2}$ with $\bigl\{f_n(t)\bigr\}_{n=1}^{\infty}$ a real-valued sequence converging to $f(t)$ and $\cV_n$ a sequence of random variables which converges to a Gaussian random variable $\cV\thicksim\cN(0,\upsilon^2)$ in distribution. Then we have, as $n\rightarrow\infty$,
$$\frac{Y_n(t)-ngf_n(t)}{\sqrt{n(f(t)\sigma^2+\upsilon^2g^2)}}\tod\cN(0,1).$$
\end{proposition}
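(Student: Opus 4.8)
The plan is to reduce the statement to a joint central limit theorem for two asymptotically independent pieces: the ``intrinsic'' fluctuation of the partial sum $\sum_i (G_i-g)$ and the fluctuation $g\cV_n$ inherited from the random upper index $X_n(t)$. Write $m_n:=\lfloor X_n(t)\rfloor$ and $\hat S_k:=\sum_{i=1}^k (G_i-g)$, so that $Y_n(t)=\hat S_{m_n}+g\,m_n$ and hence
$$
n^{-1/2}\bigl(Y_n(t)-n g f_n(t)\bigr)=n^{-1/2}\hat S_{m_n}+g\cV_n+O(n^{-1/2}),
$$
where the error collects the floor correction $m_n-X_n(t)=O(1)$ together with $m_n-n f_n(t)=\cV_n n^{1/2}+O(1)$. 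It therefore suffices to prove that $\bigl(n^{-1/2}\hat S_{m_n},\,\cV_n\bigr)$ converges jointly in distribution to a pair of \emph{independent} Gaussians, $\cN(0,f(t)\sigma^2)$ and $\cV\sim\cN(0,\upsilon^2)$. The conclusion then follows from Slutsky's theorem, the addition rule for independent Gaussians, and dividing by $\sqrt{f(t)\sigma^2+g^2\upsilon^2}$ (which we take to be positive, the statement being degenerate otherwise).

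For the joint convergence I would condition on $\cV_n$ and pass to characteristic functions. Since the losses $(G_i)_{i\ge1}$ are independent of $X_n(t)$, hence of $\cV_n$, for real $s,u$ we have
$$
\EE\!\left[e^{i\left(s n^{-1/2}\hat S_{m_n}+u\cV_n\right)}\right]=\EE\!\left[e^{iu\cV_n}\,\phi_n(m_n)\right],\qquad \phi_n(k):=\EE\bigl[e^{is n^{-1/2}\hat S_k}\bigr].
$$
Conditioned on $\cV_n$, $m_n$ is deterministic and this is exactly the situation of Lemma~\ref{limitcondistri} (more precisely, its underlying CLT step), which gives $\phi_n(m_n)\to e^{-s^2 f(t)\sigma^2/2}$. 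Because the $G_i$ are bounded, $\EE|G_1-g|^3<\infty$ and an elementary characteristic-function expansion yields $\phi_n(k)=\exp\!\bigl(-\tfrac{k}{n}\tfrac{s^2\sigma^2}{2}\bigr)\bigl(1+O(1/n)\bigr)$ uniformly over $0\le k\le Cn$; since $m_n=\lfloor n f_n(t)+\cV_n n^{1/2}\rfloor$ with $f_n(t)\to f(t)$ and $\cV_n n^{-1/2}\top0$ (the sequence $\cV_n$ being tight), we get $m_n/n\top f(t)$ and therefore $\phi_n(m_n)\to e^{-s^2 f(t)\sigma^2/2}$ \emph{in probability}. As this limit is a constant not depending on the conditioning value, bounded convergence factorizes the outer expectation: $\EE[e^{iu\cV_n}\phi_n(m_n)]\to e^{-u^2\upsilon^2/2}\,e^{-s^2 f(t)\sigma^2/2}$, which is precisely the characteristic function of the desired independent Gaussian pair.

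The one point requiring genuine care — the main obstacle — is this conditional CLT for $\hat S_{m_n}$ with a \emph{random}, and in the limit unbounded, number of summands $m_n$: unlike in Lemma~\ref{limitcondistri}, $\cV_n$ need not be bounded. I would handle it exactly as above, via the uniform characteristic-function expansion (valid thanks to the finite third moment of the bounded $G_i$), carried out on the event $\{0\le m_n\le Cn\}$, whose probability tends to $1$ uniformly in $n$ as $C\to\infty$ by tightness of $\cV_n$; alternatively one may invoke Anscombe's theorem for randomly indexed sums together with $m_n/n\top f(t)$. The remaining ingredients — the floor/index bookkeeping in the first display, the application of Slutsky, and combining the two independent Gaussian limits — are routine.
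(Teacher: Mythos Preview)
Your argument is correct and follows a genuinely different route from the paper's. You decompose
\[
n^{-1/2}\bigl(Y_n(t)-ngf_n(t)\bigr)=n^{-1/2}\hat S_{m_n}+g\cV_n+O(n^{-1/2})
\]
and establish joint convergence of $\bigl(n^{-1/2}\hat S_{m_n},\cV_n\bigr)$ to an independent Gaussian pair via characteristic functions, using that $\phi_n(m_n)\to e^{-s^2f(t)\sigma^2/2}$ in probability (an Anscombe-type step) together with boundedness to pass to expectations. The paper instead works on the density side: it computes explicitly the Gaussian convolution integral $A(z)$ giving the density of $\cN(0,f(t)\sigma^2+\upsilon^2 g^2)$, shows $A_n(z):=\EE[h_z(\cV_n)]\to A(z)$ pointwise, truncates $\cV_n$ to $[-K,K]$ so that Lemma~\ref{limitcondistri} applies on the bounded piece, and then invokes Scheff\'e's lemma to upgrade pointwise convergence of densities to convergence of probabilities.

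Your approach is more conceptual and arguably cleaner: it makes the independence structure of the limit transparent, avoids the explicit Gaussian integral and Scheff\'e's lemma, and handles the unboundedness of $\cV_n$ directly through tightness rather than via a separate truncation argument layered on top of Lemma~\ref{limitcondistri}. The paper's approach, on the other hand, yields the limiting density in closed form as a by-product, and its truncation step makes the dependence on Lemma~\ref{limitcondistri} fully explicit. One minor remark: your multiplicative error in the characteristic-function expansion is $O(n^{-1/2})$ rather than $O(1/n)$ for $k\le Cn$, but this is immaterial to the argument.
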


\begin{proof}
We first consider the following integration
\begin{equation}\label{density Z}
 A(z):=\int_{-\infty}^{\infty}\frac{1}{2\pi\sqrt{f(t)}\sigma r}\exp\left\{-\frac{u^2}{2\upsilon^2}
-\frac{(z-gu)^2}{2f(t)\sigma^2}\right\}du.
\end{equation}
Let us denote by $a:=\upsilon^2g^2+f(t)\sigma^2$. Then by a change of variable $y=\frac{\sqrt{a}}{r\sigma\sqrt{f(t)}}u-\frac{rgz}{\sigma\sqrt{af(t)}}$, we obtain
\begin{align*}
  A(z) & = \frac{1}{2\pi\sqrt{f(t)}\sigma r}\int_{-\infty}^{\infty}\exp \left\{-\frac{1}{2f(t)\sigma^2\upsilon^2}(f(t)\sigma^2u^2+z^2\upsilon^2-2gzu\upsilon^2+\upsilon^2g^2u^2)\right\}du\\
   & = \frac{1}{2\pi\sqrt{f(t)}\sigma r}\int_{-\infty}^{\infty}\exp \left\{-\frac{1}{2f(t)\sigma^2\upsilon^2}((\sqrt{a}u-\frac{\upsilon^2gz}{\sqrt{a}})^2+\frac{f(t)\sigma^2\upsilon^2z^2}{a})\right\}du\\
   & = \frac{1}{2\pi\sqrt{f(t)}\sigma r}e^{-\frac{z^2}{2a}}
   \int_{-\infty}^{\infty}\frac{r\sigma\sqrt{f(t)}}{\sqrt{a}}e^{-\frac{y^2}{2}}dy\\
   & =\frac{1}{\sqrt{2\pi a}}e^{-\frac{z^2}{2a}}.
\end{align*}
On the other hand, define a function
$$h_z(x):=\frac{1}{\sqrt{2\pi f(t)}\sigma}\exp\left\{-\frac{(z-gx)^2}{2f(t)\sigma^2}\right\},$$
which is continuous and bounded. Thus by $\cV_n\tod\cV$, we have (as $n\to \infty$)
$$A_n(z):=\mathbb{E}[h_z(\cV_n)]\longrightarrow\mathbb{E}[h_z(\cV)]=A(z).$$
We denote
$$Z_n(t):=\frac{Y_n(t)-ngf_n(t)}{\sqrt{n}},$$
and let $Z(t)$ be a random variable with distribution
$$Z(t)\thicksim \cN(0,\sigma^2f(t)).$$
Let further $\mu_n$ be the probability measure of $\cV_n$ and $\mu$ be that of $\cV$. For convenience, we also denote by
$$\Phi_x(B):=\mathbb{P}(Z(t)-gx\in B),$$
and
$$G_{\cV_n}(B|x):=\mathbb{P}(Z_n(t)\in B|\cV_n=x).$$
Then for any Borel set $B\subset\mathbb{R}$, we have

\begin{equation*}\label{diff}
\begin{split}
\bigl|\mathbb{P}(Z(t)\in B)-\mathbb{P}(Z_n(t)\in B)\bigr| =& \bigl|\int_{\mathbb{R}}\Phi_x(B)d\mu(x)-\int_{\mathbb{R}}G_{\cV_n}(B|x)d\mu_n(x)\bigr|\\
\leq & 2\epsilon+\bigl|\int_{[-K,K]}G_{\cV_n}(B|x)d\mu_n(x)-\int_{[-K,K]}\Phi_x(B)d\mu_n(x)\bigr|\\
&+ \bigl|\int_{\mathbb{R}}\Phi_x(B)d\mu_n(x)-\int_{\mathbb{R}}\Phi_x(B)d\mu(x)\bigr|,
\end{split}
\end{equation*}
where we take $K$ large enough such that $\int_{\mathbb{R}\setminus[-K,K]}1d\mu_n(x)\leq\epsilon,$
uniformly on $n$. 

Next we check the right hand side of the above inequality term by term. For the second term, we have
$$\bigl|\int_{[-K,K]}G_{\cV_n}(B|x)d\mu_n(x)-\int_{[-K,K]}\Phi_x(B)d\mu_n(x)\bigr|\rightarrow 0.$$
Since any Borel set is a continuity set of Gaussian distribution, for every $x\in\mbox{supp}(\cV_n)\cap[-K,K]$, $G_{\cV_n}(B|x)\rightarrow\Phi_x(B)$ by Lemma~\ref{limitcondistri}. Then the result follows by the dominant convergence theorem.

For the third term, we have
\begin{align*}
 & \bigl|\int_{\mathbb{R}}\Phi_x(B)d\mu_n(x)-\int_{\mathbb{R}}\Phi_x(B)d\mu(x)\bigr|  
  \leq  \int_{B}|\mathbb{E}[h_z(\cV_n)]-\mathbb{E}[h_z(\cV)]|dz \\
  \leq& \int_{\mathbb{R}}|A_n(z)-A(z)|dz\rightarrow 0,
\end{align*}
where the first inequality follows by Fubini's theorem and the third line is by Scheff\'e's lemma since $\int_{\mathbb{R}}A_n(z)dz=\int_{\mathbb{R}}A(z)dz=1$ and $A_n(z)\rightarrow A(z)$ for every $z\in\mathbb{R}$.

Since we can choose $\epsilon$ is arbitrarly, we finally get for any borel set $B\in\mathbb{R}$, $$\mathbb{P}(Z_n(t)\in B)\rightarrow\int_{B}A(z)dz.$$ 
On the other hand, since $A(z)$ is the density of $\cN(0,a)$ and all Borel sets are continuity set of $\cN(0,a)$, it follows that $Z_n(t)\tod\cN(0,a)$, which is equivalent to the statement of proposition. The proof is complete.
\end{proof}

As proved above, under some proper assumptions, $Y_n(t)$ converges to a Gaussian random variable. We assume now a vector ${\bf Y}_n:=(Y^{(1)}_n(t),Y^{(2)}_n(t),\ldots,Y^{(m)}_n(t))$,  such that every component $Y^{(i)}_n(t)$ and every $X^{(i)}_n(t)$ satisfy the conditions in Proposition~\ref{finitedimen}. In addition, for any couple $i\neq j$, $G^{(i)}$ and $G^{(j)}$ are independent, and $\cV^{(i)}_n, i=1,\ldots,m$, converge jointly to a centered Gaussian vector $(\cV^{(1)},\ldots,\cV^{(m)})$ with covariance $A_{i,j}:=\Cov(\cV^{(i)},\cV^{(j)})$, $i,j=1,\ldots,m$.
Let
$$Z^{(i)}_n(t):=\frac{Y^{(i)}_n(t)-ng_if_i(t)}{\sqrt{n}},$$
where $Y^{(i)}_n(t):=\sum_{j=1}^{\lfloor X^{(i)}_n(t)\rfloor}G^{(i)}_j$, and let $Z_i(t)$ be the limit of $Z^{(i)}_n(t)$. On the other hand, conditioned on $(\cV^{(1)},\ldots,\cV^{(m)})$, $(Z_1(t),\ldots,Z_m(t))$ is an independent Gaussian vector, and by assumption $(\cV^{(1)},\ldots,\cV^{(m)})$ itself is also Gaussian. We observe that the formula inside the integration \eqref{density Z} is actually the joint density of $(Z(t),\cV)$and they are indeed jointly Gaussian. Using the properties of Gaussian vector, one can verify that $(Z_1(t),\cV^{(1)},\ldots,Z_m(t),\cV^{(m)})$ is jointly Gaussian. Hence $(Z_1(t),\ldots,Z_m(t))$ is also jointly Gaussian.

Moreover, we have the following covariance
\begin{equation*}
  \Cov(Z_i(t),Z_j(t))=A_{i,j}g_ig_j.
\end{equation*}
Indeed, by calculating $\Cov(Z^{(i)}_n(t),Z^{(j)}_n(t))$, we have (as $n\to \infty$)
\begin{align*}
  \Cov(Z^{(i)}_n(t),Z^{(j)}_n(t)) & =\mathbb{E}[\mathbb{E}[Z^{(i)}_n(t)Z^{(j)}_n(t)|\cV^{(i)}_n,\cV^{(j)}_n]] \\
   & =\mathbb{E}[n^{-1}\lfloor n^{-1/2}\cV^{(i)}_n\rfloor \lfloor n^{-1/2}\cV^{(j)}_n\rfloor g_i g_j] \\
   & =g_ig_j\Cov(\cV^{(i)}_n,\cV^{(j)}_n)+O(n^{-1})\longrightarrow A_{i,j}g_ig_j.
\end{align*}

\subsection{Proof of Theorem~\ref{norm_risk_random}}

For a fixed $0<t\leq\tst$, the processes $D_x(t\wedge\tst)$, $I_x(t\wedge\tst)$, $x\in\cX$, satisfy the conditions for $X_n(t)$ in Proposition~\ref{finitedimen}. Indeed, by Lemma~\ref{tau} and using a similar argument as in the proof of Theorem~\ref{normality} for $H^+_n(t)$, $S_n(t),\ldots$, we have for all $x\in\cX$, as $n\rightarrow\infty$, jointly
$$n^{-1/2}(D_x(t\wedge\tst)-n\hf^{(n)}_{x,D}(t\wedge\tst))\tod \cZ_{D_x}(t\wedge t^{\star}),$$
$$n^{-1/2}(I_x(t\wedge\tst)-n\hf^{(n)}_{x,I}(t\wedge\tst))\tod \cZ_{I_x}(t\wedge t^{\star}),$$
where $\cZ_{D_x}(t)$, $\cZ_{I_x}(t)$ are jointly Gaussian with mean 0 and covariance $$\bigsigma_{x,\clubsuit,\spadesuit}(e^{-t}):=\Cov\bigl(\cZ_{\clubsuit_x}(t),\cZ_{\spadesuit_x}(t)\bigr)$$
will be given later for $\clubsuit,\spadesuit\in\{D,I\}$. Note that for two different characteristics $x_1\neq x_2$, the covariance is 0.

In addition, Lemma~\ref{lem-remak} implies that $\hf^{(n)}_{x,D}(t)\rightarrow\hf_{x,D}(t)$ and $\hf^{(n)}_{x,I}(t)\rightarrow\hf_{x,I}(t)$ uniformly on $\interval$, for all $x\in\cX$. Combining with the continuity of $\hf_{x,D}(t)$ and of $\hf_{x,I}(t)$, it follows that $\hf^{(n)}_{x,D}(t\wedge\tst)\top\hf_{x,D}(t\wedge t^{\star})$ and $\hf^{(n)}_{x,I}(t\wedge\tst)\top\hf_{x,I}(t\wedge t^{\star})$, as $n\to \infty$. The Lemma~\ref{limitcondistri} still applies.
Further, for all $x\in \cX$  and $\clubsuit\in \{D,I\}$, $\clubsuit_x(t\wedge\tst)$ satisfies the conditions for $X_n(t)$ in Proposition~\ref{finitedimen}, with $f_n(t\wedge\tst)=\hf^{(n)}_{x,\clubsuit}(t\wedge\tst)$, $f(t)=\hf_{x,\clubsuit}(t\wedge t^{\star})$ and $\cV=\cZ_{\clubsuit_x}(t\wedge t^{\star})$.

For convenience, we let
$$\Delta^{(n)}_{D_x}(t):=n^{-1/2}\bigl(Y_{D_x}(t)-n\bar{L}_{x,D}\hf^{(n)}_{x,D}(t)\bigr) \ \text{and} \ \Delta^{(n)}_{I_x}(t):=n^{-1/2}\bigl(Y_{I_x}(t)-n\bar{L}_{x,I}\hf^{(n)}_{x,I}(t)\bigr).$$
Thus by Proposition~\ref{finitedimen}, we have that for all $x\in\cX$, $\clubsuit\in\{D,I\}$ and a fixed $t>0$, as $n\rightarrow\infty$, the following joint convergence holds
$$\Delta^{(n)}_{\clubsuit_x}(t\wedge\tst)\tod\cZ_{\clubsuit_x}(t\wedge t^{\star}),$$
where $\cZ_{\clubsuit_x}(t)$ is Gaussian with mean 0 and variance
$$\Psi_{x,\clubsuit}(t)=\hf_{x, \clubsuit}(t)\Sigma^2_{x,\clubsuit}+\bar{L}_{x,\clubsuit}\bigsigma_{x,\clubsuit,\clubsuit}(e^{-t}).$$
Moreover, the covariance is given by 
$$\Psi_{x,D,I}(t)=\bar{L}_{x,D}\bar{L}_{x,I}\bigsigma_{x,D,I}(e^{-t}).$$

We next consider the convergence of the following infinite sum 
$$\sum_{x\in\cX}\Delta^{(n)}_{D_x}(t\wedge\tst)+\sum_{x\in\cX}\Delta^{(n)}_{I_x}(t\wedge\tst).$$
Recall that  $L_{x,D}$ and $L_{x,I}$ are assumed to be bounded for all $x\in\cX$. Then there exists a constant $C$ such that $L_{x,D}+L_{x,I}<C$, for all  $x\in\cX$. Thus we have
\begin{align*}
  & \bigl|\sum_{x\in\cX^+_s}\bigl[\Delta^{(n)}_{D_x}(t\wedge\tst)+\Delta^{(n)}_{I_x}(t\wedge\tst)\bigr]\bigr| 
  \leq  C\bigl|\sum_{x\in\cX^+_s}(d^+_x+d^-_x)\sum_{\theta=1}^{d^+_x}V^{*}_{x,\theta,\pi_x(\theta)}(t\wedge\tst)\bigr|\\
 & \quad \quad  +\sum_{x\in\cX^+_x}n^{-1/2}\bigl[(Y_{D_x}(t\wedge\tst)-\bar{L}_{x,D}D_x(t\wedge\tst))+(Y_{I_x}(t\wedge\tst)-\bar{L}_{x,I}I_x(t\wedge\tst))\bigr].
\end{align*}
As shown in Section~\ref{sec:normality}, the first term converge to 0 in probability as $s\rightarrow\infty$ uniformly for $n$. For the second term, we can control its variance by
$C\sum_{x\in\cX^+_x}(d^+_x+d^-_x)\mu^{(n)}_x$. 

Hence, again by Assumption~\ref{cond:CLT}, the above formula converges to 0 as $s\rightarrow\infty$ uniformly for $n$. We therefore have also that the second term converges to 0 in probability as $s\rightarrow\infty$. Thus again we can take the limit under an infinite sum. It follows that
$$\sum_{x\in\cX}[\Delta^{(n)}_{D_x}(t\wedge\tst)+\Delta^{(n)}_{I_x}(t\wedge\tst)]\tod \sum_{x\in\cX}[\cZ_{D_x}(t\wedge t^{\star})+\cZ_{I_x}(t\wedge t^{\star})].$$
The left hand side is exactly 
$$-n^{-1/2}(\Gamma^{\Diamond}_n(t)-n\hf^{(n)}_{\Diamond}(t)),$$
so we obtain
$$\cZ_{\Diamond}(t\wedge t^{\star}):=-\sum_{x\in\cX}\bigl[\cZ_{D_x}(t)+\cZ_{I_x}(t)\bigr],$$
which is a centered Gaussian random variable with variance
$$\Psi(t)=\sum_{x\in\cX}\bigl[\Psi_{x,D}(t)+\Psi_{x,I}(t)+2\Psi_{x,D,I}(t)\bigr].$$

We now calculate the covariance $\bigsigma_{x,\clubsuit,\spadesuit}(t)$, for all $x\in\cX$ and $\clubsuit,\spadesuit\in\{D,I\}$. Recall the definitions of $V^{(n)}_{x,\theta,s}$ and $\cZ_{x,\theta,s}$ in section~\ref{sec:normality}. We observe that
$$I_{x}(t)=\sum_{\theta=1}^{d^+_x}\bigl[(\theta-1)n\mu^{(n)}_xq^{(n)}_x(\theta)-\sum_{s=\pip+1}^{d^+_x}V^{(n)}_{x,\theta,s}(t)\bigr],$$
and,
$$D_x(t)=\sum_{\theta=1}^{d^+_x}[n\mu^{(n)}_xq^{(n)}_x(\theta)-V^{(n)}_{x,\theta,\pi_x(\theta)}(t)].$$

Hence, by the same arguments as in section~\ref{sec:normality}, we obtain that
$$\cZ_{D_x}:=-\sum_{\theta=1}^{d^+_x}\cZ_{x,\theta,\pip} \quad \text{and} \quad \cZ_{I_x}:=-\sum_{\theta=1}^{d^+_x}\sum_{s=\pip+1}^{d^+_x}\cZ_{x,\theta,s}.$$
The covariances are finally given by the followings:
\begin{equation}\label{sigma DDx}
  \bigsigma_{x,D,D}(y)=\sum_{\theta=1}^{d^+_x}\bigsigma_{x,\theta,\pip,\pip}(y)
  +\sum_{\theta_1,\theta_2=1}^{d^+_x}\bigsigma^{\ast}_{x,\theta_1,\theta_2,\pi_x(\theta_1),\pi_x(\theta_2)}
  (y),
\end{equation}
\begin{equation}\label{sigma VVx}
  \bigsigma_{x,I,I}(y)=\sum_{\theta=1}^{d^+_x}\sum_{r,s=\pip+1}^{d^+_x}\bigsigma_{x,\theta,r,s}(y)
  +\sum_{\theta_1,\theta_2=1}^{d^+_x}
 \sum_{s_1=\pi_x(\theta_1)+1}\sum_{s_1=\pi_x(\theta_1)+1}
 \bigsigma^{\ast}_{x,\theta_1,\theta_2,s_1,s_2}(y),
\end{equation}
and, 
\begin{equation}\label{sigma DVx}
  \bigsigma_{x,D,I}(y)=\sum_{\theta_1,\theta_2=1}^{d^+_x}
 \sum_{s=\pi_x(\theta_1)}
 \bigsigma^{\ast}_{x,\theta_1,\theta_2,s,\pi_x(\theta_2)}(y)
  +\sum_{\theta=1}^{d^+_x}\sum_{s=\pip+1}^{d^+_x}\bigsigma_{x,\theta,\pip,s}(y).
\end{equation}
This completes the proof of Theorem~\ref{norm_risk_random}.

\subsection{Proof of Theorem~\ref{norm_risk_random_final}}

First note that, by using a similar argument as in the proof of Theorem~\ref{thm-agg-LLN}, since all the random losses $L_{x,D}, L_{x,I}, x\in\cX$ are bounded, we have 
\begin{align*}
\sup\limits_{t\leq \tau_n}\bigl|\frac{\Gamma_n^{\Diamond}(t)}{n}-f_{\Diamond}(e^{-t})\bigr|\top 0.
\end{align*}
If $\hz=0$, by Lemma~\ref{tau}, $\tst\top\infty$. Then $e^{-\tst}\top 0$, and we have
 $$f_{\Diamond}(0)=\bar{\Gamma}^{\Diamond}-\sum_{x\in \cX}\bigl[\mu_x \bar{L}_{x,D}+\sum_{\theta=1}^{d^+_x}(\theta-1)\mu_x q_x(\theta) \bar{L}_{x,I}\bigr].$$
Thus it follows by the continuity of $f_{\Diamond}$ that
$$f_{\Diamond}(e^{-\tst})=\bar{\Gamma}^{\Diamond}-\sum_{x\in \cX}[\mu_x \bar{L}_{x,D}+\sum_{\theta=1}^{d^+_x}(\theta-1)\mu_x q_x(\theta) \bar{L}_{x,I}]+o_p(1).$$
We therefore have
\begin{align*}
\frac{\Gamma_n^{\Diamond}(\tst)}{n} \top\bar{\Gamma}^{\Diamond} -\sum_{x\in \cX}[\mu_x \bar{L}_{x,D}+\sum_{\theta=1}^{d^+_x}(\theta-1)\mu_x q_x(\theta) \bar{L}_{x,I}].
\end{align*}

We now consider the case $\hz\in\left(0,1\right]$ and  $\alpha:=f_W'(\hz)>0$.  Note that the variance $\Psi(t)$ of $\cZ_{\Diamond}(t)$ is continuous on $t$. Moreover, since $\cZ_{\Diamond}(t)$ is a centered Gaussian random variable, its distribution is determined by $\Psi(t)$. Thus for a sequence $\{t_n\}_n$ which converges to $t$, we can show that, as $n\rightarrow\infty$,
\begin{equation}\label{conver on t}
  \cZ_{\Diamond}(t_n)\tod\cZ_{\Diamond}(t).
\end{equation}

Then we can use again the Skorokhod representation theorem, which shows that one can change
the probability space where all the random variables are well defined and all the convergence results of Theorem~\ref{thm-CLT-S}, Lemma~\ref{tau} ($\tst\rightarrow t^{\star}$)  and \eqref{conver on t} hold almost surely. Taking $t=\tst$ and $t_0=t^{\star}$, we obtain by Theorem~\ref{normality} that 
$$n^{-1/2}\Gamma^{\Diamond}_n(\tst) = n^{1/2}\hf^{(n)}_{\Diamond}(\tst)+\cZ_{\Diamond}(\tst\wedge t^{\star})+o_p(1).$$
Since $\tst\wedge t^{\star}\rightarrow t^{\star}$ a.s., we obtain that a.s. $\cZ_{\Diamond}(t_n)\rightarrow\cZ_{\Diamond}(t)$. It then follows that
$$n^{-1/2}S_n(\tst) = n^{1/2}\hf^{(n)}_S(\tst)+Z_{S}(\tst\wedge t^{\star})+o_p(1).$$
Then, similarly as in section~\ref{sec:normalityFinal}, it follows that
\begin{equation*}
  n^{-1/2}\Gamma^{\Diamond}_n(\tst)=n^{1/2}f^{(n)}_{\Diamond}(\hp_n)-\frac{f'_{\Diamond}(\hz)}
  {\alpha}Z_{W}(t^{\star})+Z_{\Diamond}(t^{\star})+o_p(1).
\end{equation*}
This completes the proof of Theorem~\ref{norm_risk_random_final}.

\end{document}